\providecommand{\lin}[1]{\ensuremath{\left\langle #1 \right\rangle}}
\providecommand{\abs}[1]{\left\lvert#1\right\rvert}
\providecommand{\norm}[1]{\left\lVert#1\right\rVert}
  \providecommand{\R}{\mathbb{R}} %
  \DeclareMathOperator{\E}{{\mathbb E}}
  \providecommand{\EE}[2]{{\mathbb E}_{#1}\left.#2\right. }  %
  \DeclareMathOperator*{\diag}{diag}
  \providecommand{\1}{\mathbf{1}}
  \renewcommand{\aa}{\mathbf{a}}
  \providecommand{\bb}{\mathbf{b}}
  \renewcommand{\gg}{\mathbf{g}}
  \providecommand{\uu}{\mathbf{u}}
  \providecommand{\vv}{\mathbf{v}}
  \providecommand{\xx}{\mathbf{x}}
  \providecommand{\yy}{\mathbf{y}}
  \providecommand{\zz}{\mathbf{z}}
  \providecommand{\cD}{\mathcal{D}}
  \providecommand{\cN}{\mathcal{N}}
  \providecommand{\cO}{\mathcal{O}}
  \providecommand{\mmu}{\boldsymbol{\mu}}
\newtheorem{lemma}{Lemma}
\newtheorem{proposition}[lemma]{Proposition}
\newtheorem{definition}{Definition}
\newtheorem{remark}[lemma]{Remark}
\newtheorem{assumption}{Assumption}
\newtheorem{theorem}[lemma]{Theorem}
\newcommand{\xmark}{\ding{55}}%
\titlespacing*{\section}{14pt}{7pt}{4pt}
\titlespacing*{\subsection}{6pt}{3pt}{1pt}
\title{An Improved Analysis of Gradient Tracking\\ for Decentralized Machine Learning}
\author{%
 Anastasia Koloskova\\
 EPFL\\
 \texttt{anastasia.koloskova@epfl.ch}
 \And
 Tao Lin\\
 EPFL\\
 \texttt{tao.lin@epfl.ch}
 \And
 Sebastian U. Stich\\
 EPFL\thanks{Current affiliation:\ CISPA Helmholtz Center for Information Security.}\\
 \texttt{sebastian.stich@epfl.ch}
}
\begin{document}

\maketitle

\begin{abstract}
We consider decentralized machine learning over a network where the training data is distributed across $n$ agents, each of which can compute stochastic model updates on their local data. The agent's common goal is to find a model that minimizes the average of all local loss functions.
While gradient tracking (GT) algorithms can overcome a key challenge, namely accounting for differences between workers' local data distributions, the known convergence rates for GT algorithms are not optimal with respect to
their dependence on the mixing parameter $p$ (related to the spectral gap of the connectivity matrix).
\\%
We provide a tighter analysis of the GT method in the stochastic strongly convex, convex and non-convex settings.
We improve the dependency on $p$ from $\cO(p^{-2})$ to $\cO(p^{-1}c^{-1})$ in the noiseless case and from $\cO(p^{-3/2})$ to $\cO(p^{-1/2}c^{-1})$ in the general stochastic case, where $c \geq p$ is related to the negative eigenvalues of the connectivity matrix (and is a constant in most practical applications). 
This improvement was possible due to a new proof technique which could be of independent interest.

\end{abstract}

\section{Introduction}
Methods that train machine learning models on decentralized  data offer many advantages over traditional centralized approaches in core aspects such as data ownership, privacy, fault tolerance and scalability~\cite{Kairouz2019:federated,Nedic2020:survey}.
Many current efforts in this direction come under the banner of federated learning~\cite{konecny2016federated2,McMahan16:FedLearning,McMahan:2017fedAvg,Kairouz2019:federated}, where a central entity orchestrates the training and collects aggregate updates from the participating devices.
Fully decentralized methods, that do not rely on a central coordinator and that communicate only with neighbors in an arbitrary communication topology, are still in their infancy~\cite{Lian2017:decentralizedSGD,Kong2021:consensus}.

The work of \citet{Lian2017:decentralizedSGD} on decentralized stochastic gradient descent (D-SGD)
has spurred the
research on decentralized training methods for machine learning models.
This lead to improved theoretical analyses~\cite{koloskova2020unified} and to improved practical schemes, such as support for time-varying topologies~\cite{nedic2014distributed,Assran:2018sdggradpush,koloskova2020unified} and methods with communication compression~\cite{Tang2018:decentralized,Wang2019:matcha,Koloskova:2019choco,Tang2019:squeeze}.
One of the most challenging aspect when training over decentralized data is data-heterogeneity, i.e.\ training data that is in a non-IID fashion distributed over the devices (for instance in data-center training) or generated in non-IID fashion on client devices~\cite{li2018fedprox,Kgoogle:cofefe,li2020feddane,Li2020:fedavg}.
For example, the D-SGD method has been shown to be affected by the heterogenity~\cite{koloskova2020unified}.

In contrast,  certain methods can mitigate the impact of heterogeneous data in decentralized optimization. For instance the \emph{gradient tracking} (GT) methods developed by  \citet{Lorenzo2016GT-first-paper} and \citet{Nedic2016DIGing}, or the later D${}^2$ method by
\citet{Tang2018:d2} %
which is designed for communication typologies that remain fixed and do not change over time.

It is well known that GT methods do not depend on the heterogeneity of the data and that they converge linearly on distributed strongly convex problem instances without stochastic noise~\cite{Lorenzo2016GT-first-paper,Nedic2016DIGing}. However,  when we apply these methods in the context of machine learning, we need to understand how they are affected by stochastic noise and how they behave on non-convex tasks. 

In this paper, we develop a new, and improved, analysis of the gradient tracking algorithm with a novel proof technique.
Along with the parallel contribution~\cite{Yuan2021d2-exact-diff-rates} that developed a tighter analysis of the D${}^2$ algorithm, we now have a more accurate understanding of in which setting GT works well and in which ones it does not, and our results allow for a more detailed comparison between the D-SGD, GT and D${}^2$ methods (see Section~\ref{sec:discussion} below).

Our analysis improves over all existing results that analyze the GT algorithm. Specifically, we prove a weaker dependence on the connectivity of the network (spectral gap) which is commonly incorporated into the convergence rates via the standard parameter $p$. 
For example, in the strongly convex setting with stochastic noise we prove that GT converges at the rate $ \tilde\cO\bigl(\frac{\sigma^2}{n\varepsilon } +  \frac{1}{c}\cdot \bigl(\frac{ \sigma}{\sqrt{p \varepsilon}}  + \frac{1}{p} \log \frac{1}{\varepsilon} \bigr)\bigr) $
where $\sigma^2$ is an upper bound on the variance of the stochastic noise, and $c \geq p$ a new parameter (often a constant). 
By comparing this result with the previously best known upper bound,
$\tilde\cO\bigl(\frac{\sigma^2}{n\varepsilon } +  \frac{1}{p}\cdot \bigl( \frac{ \sigma}{\sqrt{p\varepsilon}} + \frac{1}{p}\log \frac{1}{\varepsilon} \bigr)\bigr)$,
 by~\citet{pu2020gradient-tracking},
 we see that our upper bound improves the last two terms by a factor of $\smash{\frac{c}{p}} \geq 1$ and that the first term matches with known lower bounds~\cite{nemirovskyyudin1983}. The D${}^2$ algorithm~\cite{Tang2018:d2} only converges under the assumption that $c$ is a constant\footnote{In D${}^2$ the smallest eigenvalue  of the mixing matrix $W$ must bounded from  below: $\min_{i} \lambda_i(W)\geq -\tfrac{1}{3}$.} and the recent upper bound from~\cite{Yuan2021d2-exact-diff-rates} coincides with our worst case complexity for GT on all topologies where D${}^2$ can be applied.
We provide additional comparison of GT convergence rates in the Tables~\ref{tab:gt-str-conv} and \ref{tab:gt-non-conv}.

\begin{table*}[t]
	\begin{minipage}{\textwidth}
		\caption{Important advances for Gradient Tracking in the strongly convex case. Our analysis improves upon all prior rates for both with and without the stochastic noise in terms of the graph parameter $p$. 
		}
		\centering
		\label{tab:gt-str-conv}
		\resizebox{0.85\linewidth}{!}{
			\begin{tabular}{llcc}\toprule[1pt]
				Reference & {rate of convergence to $\epsilon$-accuracy}  & considered stochastic noise \\ \midrule
				\citet{Nedic2016DIGing} & $\cO\left(\dfrac{L^{3}}{\mu^{3} p^2 } \log \frac{1}{\varepsilon} \right)$ & \xmark\\
				\citet{Sulaiman2019} & $\cO\left( \dfrac{L}{\mu} \log \frac{1}{\varepsilon}+ \dfrac{1}{p^2} \log \frac{1}{\varepsilon} \right)$ & \xmark \\
				\citet{Qu2017GT-Harnessing-Smoothness} & $\cO\left(\dfrac{L^{2}}{\mu^{2} p^2} \log \frac{1}{\varepsilon} \right)$ & \xmark\\
				\citet{pu2020gradient-tracking} & $ \tilde\cO\left(\dfrac{\sigma^2}{\mu n \varepsilon} + \dfrac{\sqrt{L} \sigma}{\mu \sqrt{p} p \sqrt{\varepsilon}} + \dfrac{C_1}{\sqrt{\varepsilon}}\right)  $\footnote{$C_1$ is a constant that is independent of $\varepsilon$, but can depend on other parameters, such as $\sigma, \mu, L, p$} & \checkmark\\ \midrule
				this work & $ \tilde\cO\left(\dfrac{\sigma^2}{\mu n \varepsilon} +  \dfrac{\sqrt{L}  \sigma}{\mu \sqrt{p} c \sqrt{\varepsilon}}+  \dfrac{L}{\mu p c } \log \frac{1}{\varepsilon} \right)  $ & \checkmark
				\\ \bottomrule[1pt]
		\end{tabular}}
	\end{minipage}
\end{table*}

\textbf{Contributions.} Our main contributions can be summarized as:
\begin{itemize}[nosep,leftmargin=12pt]
	\item We prove better complexity estimates for the GT algorithm than known before with a new proof technique (which might be of independent interest).
	\item In the non-asymptotic regime (of importance in practice), the convergence rate depends on the network topology. By defining new graph parameters, we can give a tighter description of this dependency, explaining why the worst case behavior is rarely observed in practice (see Section~\ref{sec:parameterc}). We verify this dependence in numerical experiments.
	\item We show that in the presence of stochastic noise, the leading term in the convergence rate of GT is optimal---we are the first to derive this in the non-convex setting---and matching the unimprovable rate of all-reduce mini-batch SGD.
\end{itemize}

\section{Related Work}
\paragraph{Decentralized Optimization.}
Decentralized optimization methods have been studied for decades in the optimization and control community~\cite{Tsitsiklis1985:gossip,Nedic2009:distributedsubgrad,%
Wei2012:distributedadmm,%
Duchi2012:distributeddualaveragig}.
Many decentralized optimization methods \cite{Nedic2009:distributedsubgrad,Johansson2010:distributedsubgrad} are based on gossip averaging \cite{Kempe2003:gossip, Xiao2014:averaging,Boyd2006:randgossip}. Such methods usually also work well on non-convex problems and can be used used for training deep neural networks~\cite{Assran:2018sdggradpush,Lian2017:decentralizedSGD,Tang2018:d2}. 
There exists other methods, such as based on alternating direction method of multipliers (ADMM) \cite{Wei2012:distributedadmm,Iutzeler2013:randomizedadmm}, dual averaging~\cite{Duchi2012:distributeddualaveragig,Nedic2015:dualavg,Rabbat2015:mirrordescent}, primal-dual methods~\cite{Alghunaim2019:pd,Kovalev2021:free}, block-coordinate methods for generalized linear models~\cite{cola2018nips} or using new gradient propagation mechanisms~\cite{Vogels2021:relay}.

\paragraph{Decentralized Optimization with Heterogeneous Objective Functions.} 
There exists several algorithms that are agnostic to data-heterogeneity. Notably,  EXTRA~\cite{shi2015extra} and decentralized primal-dual gradient methods~\cite{Alghunaim2019:pd} do not depend on the data heterogeneity and achieve linear convergence in the strongly convex noiseless setting. However, these algorithms are not designed to be used for non-convex tasks.

D${}^2$ \cite{Tang2018:d2, Yuan2021d2-exact-diff-rates} (also known as exact diffusion \cite{yuan2019exact-diff-1,yuan2019exact-diff-2}) and Gradient Tracking (GT) \cite{Lorenzo2016GT-first-paper} (also known as NEXT~\cite{Lorenzo2016GT-first-paper} or DIGing \cite{Nedic2016DIGing}) are both algorithms that are agnostic to the data heterogeneity level, can tolerate the stochastic noise, and that can be applied to non-convex objectives such as the training of deep neural networks in machine learning. A limitation of the D${}^2$ algorithm is that it is not clear how it can be applied to  time-varying topologies, %
and that it can only be used on constant mixing topologies with negative eigenvalue bounded from below by $-\tfrac{1}{3}$. 
Other authors proposed algorithms that perform well on heterogeneous DL tasks~\cite{lin2021quasiglobal,yuan2021decentlam}, but theoretical proofs that these algorithms are independent of the degree of heterogeneity are still pending.

\paragraph{Gradient Tracking.}
There is a vast literature on the Gradient Tracking method itself. A tracking mechanism  was used by \citet{Zhu2010TrackingIdea} as a way to track the average of a distributed continuous process. \citet{Lorenzo2016GT-first-paper} applied this technique to track the gradients, and analyzed its asymptotic behavior in the non-convex setting with a time-varying topologies. \citet{Nedic2016DIGing} analyze GT (named as DIGing) in the strongly convex noiseless case with a time-varying network. \citet{Qu2017GT-Harnessing-Smoothness} extend the GT analysis to the non-convex, weakly-convex and strongly convex case without stochastic noise. \citet{Nedic2017GT-different-stepsizes} allow the different stepsizes on different workers. \citet{yuan2020GT-last-term} analyze asymptotic behavior of GT for dynamic optimization. \citet{pu2020gradient-tracking} studied the GT method on stochastic problems and strongly convex objectives. Further, \citet{Xin2019GT-stoch} analyze asymptotic behavior of GT with stochastic noise. For non-convex stochastic functions GT was analyzed by \citet{zhang2020GT-non-convex} and \citet{Lu2019GT-non-convex}.  
\citet{Li20GTandVR} combine GT with variance reduction to achieve linear convergence in the stochastic case. 
\citet{Tziotis2020GTsecond-order} obtain second order guarantees for GT.

\begin{table*}[t]
	\begin{minipage}{\textwidth}
		\caption{Important advances for Gradient Tracking in the non-convex case. Our result improves upon all existing rates in terms of the graph parameter $p$. %
		}
		\centering
		\label{tab:gt-non-conv}
		\resizebox{0.83\linewidth}{!}{
			\begin{tabular}{llc}\toprule[1pt]
				Reference & {rate of convergence to $\epsilon$-accuracy} & considered stochastic noise\\ \midrule
				\citet{Lorenzo2016GT-first-paper} & asymptotic convergence guarantees & \xmark \\
				\citet{zhang2020GT-non-convex} & $\cO \left( \dfrac{ L n \sigma^2 }{\varepsilon^2} + \dfrac{ L n }{ p^3 \varepsilon} \right)$ & \checkmark \\
				\citet{Lu2019GT-non-convex} &  $\cO \left(  \dfrac{C_1 + C_2\sigma}{\varepsilon^2 } \right)$\footnote{$C_1$ and $C_2$ are constants that are independent of $\varepsilon$, but can depend on other parameters, such as $\sigma, n, L, p$.} & \checkmark\\
				\midrule
				this work & $\tilde \cO \left(\dfrac{L\sigma^2 }{ {n} \varepsilon^2 }  + \dfrac{L \sigma}{(\sqrt{p} c + p\sqrt{n} ) \varepsilon^{\nicefrac{3}{2}}} + \dfrac{L}{p c \varepsilon} \right)$ & \checkmark
				\\ \bottomrule[1pt]
		\end{tabular}}
	\end{minipage}
\end{table*}

\section{Setup}
We consider optimization problems where the objective function is distributed across $n$ nodes, 
\begin{align}
\min_{\xx \in \R^d} \left[f(\xx) := \frac{1}{n}\sum_{i = 1}^{n} \big[f_i(\xx)= \E_{\xi \sim \cD_i} F_i(\xx,\xi)\big]\right]\,, \label{eq:problem}
\end{align}
where $f_i \colon \R^d \to \R$ denotes the local function available to the node $i$, $i \in [n] := \{1,\dots n\}$. Each $f_i$ is a stochastic function $f_i(\xx) = \E_{\xi \sim \cD_i} F_i(\xx, \xi)$ with access only to stochastic gradients $\nabla F_i(\xx, \xi)$. This setting covers empirical risk minimization problems with $\cD_i$ being a uniform distribution over the local training dataset. It also covers deterministic optimization when $F_i(\xx, \xi) = f_i(\xx)$, $\forall \xi$. 

We consider optimization over a decentralized network, i.e. when there is an underlying communication graph $G = (V, E)$, $|V| = n$, each of the nodes (e.g.\ a connected device) can communicate only  along the edges $E$. In decentralized optimization it is convenient to parameterize communication by a mixing matrix $W \in \R^{n \times n}$, where $w_{ij} = 0$ if and only if nodes $i$ and $j$ are not communicating, $(i, j) \notin E$. 

\begin{definition}[Mixing Matrix]\label{def:W} A matrix with non-negative entries $W \!\in\! [0,1]^{n \times n}$ that is symmetric  ($W\!=\!W^\top$) and doubly stochastic ($W\1 \!=\! \1$, $\1^\top W\! = \!\1^\top\!$), where $\1$ denotes the all-one vector in $\R^n$.
\end{definition}

\subsection{Notation}
We use the notation $\xx_i^{(t)} \in \R^d$, $\yy_i^{(t)} \in \R^d$ to denote the iterates and the tracking sequence, respectively, on node $i$ at time step $t$. For vectors $\zz_i \in \R^d$ ($\zz_i$ could for instance be $\xx_i^{(t)}$ or $\yy_i^{(t)}$) defined for $i \in [n]$ we denote by $\bar \zz = \frac{1}{n} \sum_{i = 1}^n \zz_i$.  

We use both vector and matrix notation whenever it is more convenient. 
For vectors $\zz_i \in \R^d$ defined  for $i \in [n]$ we denote by a capital letter the matrix with columns $\zz_i$, formally
\begin{align}\textstyle
Z := \left[ \zz_1,\dots, \zz_n\right] \in \R^{d\times n}\,, && \bar Z := \left[ \bar \zz,\dots, \bar \zz\right] \equiv Z\tfrac{1}{n} \1\1^\top\,, && \Delta Z := Z - \bar Z \,.
\end{align}
We extend this definition to gradients of~\eqref{eq:problem}, with $\nabla F(X^{(t )}, \xi^{(t )}), \nabla f(X^{(t)}) \in \R^{d\times n}$:
\begin{align*}
\nabla F(X^{(t )}, \xi^{(t )}) &:= \left[\nabla F_1(\xx_{1}^{(t)}, \xi_1^{(t)}), \dots,  \nabla F_n(\xx_{n}^{(t)}, \xi_n^{(t)})\right] \,, \\ %
 \nabla f(X^{(t)}) &:= \left[ \nabla f(\xx_1^{(t)}) , \dots,   \nabla f(\xx_n^{(t)})  \right] \,.%
\end{align*}

\subsection{Algorithm}
The Gradient Tracking algorithm (or NEXT, DIGing) can be written as
\begin{align}\label{eq:GT-matrix}
\begin{pmatrix}
X^{(t + 1)}\\
\gamma Y^{(t + 1)}
\end{pmatrix}^\top =  \begin{pmatrix}
	X^{(t)}\\
	\gamma Y^{(t)}
	\end{pmatrix}^\top \begin{pmatrix}
	W & 0 \\
	- W & W 
	\end{pmatrix} + \gamma \begin{pmatrix}
	0\\
	\nabla F(X^{(t + 1)}, \xi^{(t + 1)}) - \nabla F(X^{(t)}, \xi^{(t)})
	\end{pmatrix}^\top \tag{GT}
\end{align}
in matrix notation. Here
and $X^{(t)} \in \R^{d \times n}$ denotes the iterates, $Y^{(t)} \in \R^{d \times n}$, with $Y^{(0)}= \nabla F(X^{(t)}, \xi^{(t)})$ the sequence of tracking variables, and $\gamma >0$ denotes the stepsize. 
This update is summarized in Algorithm~\ref{alg:gt}. 

\begin{algorithm}[ht]
	\caption{\textsc{Gradient Tracking}}\label{alg:gt}
	\let\oldfor\algorithmicfor
	\renewcommand{\algorithmicfor}{\textbf{in parallel on all workers $i \in [n]$, for}}
	\let\oldendfor\algorithmicendfor
	\renewcommand{\algorithmicendfor}{\algorithmicend\ \textbf{parallel for}}
	\begin{algorithmic}[1]
		\INPUT{Initial values $\xx_i^{(0)} \in \R^d$ on each node $i \in [n]$, communication graph $G = ([n], E)$ and mixing matrix $W$, stepsize $\gamma$, initialize $\yy_i^{(0)} = \nabla F_i(\xx_i^{(0)}, \xi_i^{(0)})$, $\gg_i^{(0)} = \yy_i^{(0)}$ in parallel for $i \in [n]$.}\\[1ex]
		\FOR{$t=0,\dots, T-1$}
		\STATE each node $i$ sends $\Big(\xx_i^{(t)},\yy_i^{(t)}\Big)$ to is neighbors
		\STATE $\xx_i^{(t + 1)} = \sum_{j: \{i, j\} \in E } w_{ij} \Big(\xx_j^{(t)}   - \gamma \yy_j^{(t)} \Big)$ \hfill $\triangleright$ update model parameters
		\STATE Sample $\xi_i^{(t + 1)}$, compute gradient $\gg_i^{(t + 1)} = \nabla F_i\Big(\xx_i^{(t + 1)}, \xi_i^{(t + 1)}\Big)$
		\STATE  $\yy_i^{(t + 1)} = \sum_{j: \{i, j\} \in E } w_{ij} \yy_j^{(t)} + \Big(\gg_i^{(t + 1)} - \gg_i^{(t)}\Big)$ \hfill $\triangleright$ update tracking variable
		\ENDFOR
	\end{algorithmic}
\end{algorithm}
Each node $i$ stores and updates two variables, the model parameter $\xx_i^{(t)}$ and the tracking variable $\yy_i^{(t)}$. The model parameters are updated on line~3 with a decentralized SGD update but using $\yy_i^{(t)}$ instead of a gradient. Variable $\yy_i^{(t)}$ tracks the average of all local gradients on line 5. Intuitively, the algorithm is agnostic to the functions heterogeneity because $\yy_i^{(t)}$ is `close' to the full gradient of $f(\xx)$ (suppose we would replace line 5 with exact averaging in every timestep, then $\yy_i^{(t+1)}=\frac{1}{n} \sum_{i = 1}^n \gg_i^{(t+1)}$. For further discussion of the tracking mechanism refer to
\cite{Lorenzo2016GT-first-paper,Nedic2016DIGing,pu2020gradient-tracking}.

\begin{table}[tb]
\centering
	\begin{tabular}{lll} 
	\toprule
		graph/topology & $1/p$ & $c$\\ \midrule
		ring & $\cO (n^2)$ & $\nicefrac{8}{9}$\\
		2d-torus & $\cO (n)$ & $\geq \nicefrac{4}{5}$ \\
		fully connected & $\cO(1)$ & $1$ \\ \bottomrule
	\end{tabular}		
	\vspace{2mm}
	\caption{Parameters $p$ and $c$ for some common network topologies  on $n$ nodes  for uniformly averaging $W$, i.e. $w_{ij} = \frac{1}{deg(i)} = \frac{1}{deg(j)}$ for $\{i,j\} \in E$, see e.g. \cite{Nedic2018:graphs}.}
	\label{tab:pc}
\end{table}

\subsection{Assumptions}
We first state an assumption on the mixing matrix.
\begin{assumption}[Mixing Matrix]\label{a:W} 
	Let $\lambda_i(W)$, $i \in [n]$, denote the eigenvalues of the mixing matrix $W$ with
	$
	1 = \lambda_1(W) > \lambda_2(W) \geq \dots \geq \lambda_n(W) > -1.
	$
	With this, we can define the spectral gap $\delta = 1  - \max\{|\lambda_2(W)|,|\lambda_n(W)| \}$, and the mixing parameters
	\begin{align}
	p = 1  - \max\{|\lambda_2(W)|,|\lambda_n(W)| \}^2\,, && c = 1 - \min\{\lambda_n(W),0\}^2\,. \label{def:p}
	\end{align}
	We assume that $p > 0$ (and consequently $c>0$).
\end{assumption}
The assumption $p>0$ ensures that the network topology is connected, and that the consensus distance decreases linearly after each averaging step, i.e.\
$
\norm{XW - \bar X}_F^2 \leq (1 - p) \norm{X - \bar X}_F^2\,, \forall X \in \R^{d \times n}.
$
The parameter $p$ is closely related to the spectral gap $\delta$ as it holds $p = 2 \delta - \delta^2$. From this we can conclude that $\delta \leq p \leq 2 \delta$ and, asymptotically for $\delta \to 0$, $p \to 2\delta$. 
Assuming a lower bound on $p$ (or equivalently $\delta$) is a standard assumption in the literature. 

The parameter $c$ is related to the most negative eigenvalue. From the definition~\eqref{def:p} it follows that the auxiliary mixing parameter $c \geq p$ for all mixing matrices $W$. The parameters $p$ and $c$ are only equal when $|\lambda_n(W)| \geq |\lambda_2(W)|$ and $\lambda_n(W)\leq 0$.
Moreover, if  the diagonal entries $w_{ii}$ (self-weights) of the mixing matrix are all strictly positive, then $c$ has to be strictly positive. 

\begin{remark}[Lower bound on $c$.]
	Let $W$ be a mixing matrix with diagonal entries (self-weights) $w_{ii} \geq \rho > 0$, for a parameter $\rho$. Then $\lambda_n (W) \geq 2\rho - 1$ and $c \geq \min\{ 2\rho, 1\}$. 
\end{remark}
This follows from Gershgorin's circle theorem \cite{Gerschgorin31} that guarantees $\lambda_n(W) \geq 2\rho-1$, and hence $c \geq 1-\min\{2\rho-1,0\}^2 \geq \min\{ 2\rho, 1\}$.

For many choices of  $W$ considered in practice, most notably when the graph $G$ has constant node-degree and the weights $w_{ij}$ are chosen by the popular Metropolis-Hastings rule, i.e.\ $w_{ij}=w_{ji} = \min\bigl\{\frac{1}{\deg(i)+1},\frac{1}{\deg(j)+1}\bigr\}$ for $(i,j) \in E$, $w_{ii} = 1 - \sum_{j = 1}^n w_{ij} \geq \frac{1}{\max_{j \in [n]} \deg(j)}$, see also \cite{Xiao2014:averaging,Boyd2006:randgossip}.
In this case, the parameter $c$ can be bounded by a constant depending on the maximal degree. Moreover, for any given $W$, considering $\frac{1}{2}(W+I_n)$ instead (i.e.\ increasing the self-weights), ensures that $c=1$.
However, in contrast to e.g.\ the analysis in~\cite{Yuan2021d2-exact-diff-rates} we do not need to pose an explicit bound on $c$ as an assumption.
In practice, for many graphs, the parameter $c$ is bounded by a constant (see Table~\ref{tab:pc}).

We further use the following standard assumptions:
\begin{assumption}[$L$-smoothness]\label{a:lsmooth_nc}
	Each function $f_i \colon \R^d \to \R$, $i \in [n]$
	is differentiable and there exists a constant $L \geq 0$ such that for each $\xx, \yy \in \R^d$:
	\begin{align}\label{eq:smooth_nc}
	&\norm{\nabla f_i(\yy) - \nabla f_i(\xx) } \leq L \norm{\xx -\yy}\,. %
	\end{align}
\end{assumption}

Sometimes we will in addition assume that the functions are (strongly) convex.
\begin{assumption}[$\mu$-strong convexity]\label{a:mu-convex}
	Each function $f_i \colon \R^d \to \R$, $i \in [n]$ is $\mu$-strongly convex for constant $\mu \geq 0$, i.e. for all $\xx, \yy \in \R^d$:
	\begin{align}\textstyle
	f_i(\xx) - f_i(\yy) + \frac{\mu}{2} \norm{\xx - \yy}_2^2 \leq \langle \nabla f_i(\xx), \xx - \yy \rangle\,. %
	\label{def:strongconvex}
	\end{align}
\end{assumption}

\begin{assumption}[Bounded noise]\label{a:opt_nc}
	We assume that there exists constant $\sigma $ s.t. $\forall \xx_1, \dots \xx_n \in \R^d$
	\begin{align} \textstyle
	\frac{1}{n} \sum_{i = 1}^n \EE{\xi_i}{\norm{\nabla F_i(\xx_i, \xi_i) - \nabla f_i(\xx_i)}}^2_2 \leq \sigma^2 \,. \label{eq:noise_opt_nc}
	\end{align}
\end{assumption}
We discuss possible relaxations of these assumptions in Section~\ref{sec:general} below.

\section{Convergence results}
\label{sec:results}
We now present our novel convergence results for GT in Section~\ref{sec:general} and Section~\ref{sec:consensus} below.
We provide a proof sketch to explain the key difficulties and technical novelty compared to prior results later in the next Section~\ref{sec:sketch}.

\subsection{Main theorem---GT convergence in the general case}
\label{sec:general}

\begin{theorem}\label{thm:GT-better-upper-bound}
	Let $\xx_i^{(t)}$, $i \in [n]$, $T > \frac{2}{p}\log\big(\frac{50}{p} (1+ \log \frac{1}{p}) \big)$ denote the iterates of the GT Algorithm~\ref{alg:gt} with a mixing matrix as in Definition~\ref{def:W}.	
	 If Assumptions~\ref{a:W}, \ref{a:lsmooth_nc} and \ref{a:opt_nc} hold, then  there exists a stepsize $ \gamma$ such that the optimization error is bounded as follows: \\
	\textbf{Non-convex:} Let $F_0 = f(\bar\xx^{(0)}) - f^\star$ for $f^\star \leq \min_{\xx \in \R^d} f(\xx)$. Then  it holds
	\begin{align*}
	{\textstyle \frac{1}{T + 1} \sum_{t = 0}^T \norm{\nabla f(\bar\xx^{(t)})}_2^2 } \leq \varepsilon\,, && \!\!\!\!\!\!\text{ after  }\!\!\!\!\!&& \tilde\cO\left( {\frac{\sigma^2}{n\varepsilon}} +  \frac{\sigma }{({\sqrt{p} c + p\sqrt{n}}) \varepsilon^{\nicefrac{3}{2}}} + \frac{1 + L \tilde R_0^2F_0^{-1} }{pc \varepsilon}\right) \!\cdot  \!  LF_0 \, &&\!\!\!\!\!\!\text{   iterations.}
	\end{align*}\\
	\textbf{Strongly-convex:}
	Under the additional Assumption \ref{a:mu-convex} with $\mu > 0$ and weights $w_t \geq 0$, $W_T=\sum_{t=0}^T w_t$, specified in the proof,  it holds for $R_{T+1}^2 = \norm{\bar \xx^{(T+1)} - \xx^\star}^2$:
	\begin{align*}%
	{\textstyle \sum_{t = 0}^T \frac{w_t}{W_T} \left[\E f(\bar\xx^{(t)}) - f^\star \right] + \frac{\mu}{2} R_{T + 1} \leq \varepsilon}\,, &&\!\! \text{after}\!\!&& \tilde\cO\left(\frac{\sigma^2}{\mu n \varepsilon} +  \frac{\sqrt{L}  \sigma}{\mu \sqrt{p} c \sqrt{\varepsilon}} + \frac{L}{\mu pc} \log \frac{1}{\varepsilon}\right)&& \!\!\!\text{iterations.}
	\end{align*}
	\textbf{General convex:}
		Under the additional Assumption \ref{a:mu-convex} with $\mu \geq 0$, it holds for $R_{0}^2 = \norm{\bar \xx^{(0)} - \xx^\star}^2$:
	\begin{align*}
	{\textstyle \frac{1}{T + 1}\sum_{t = 0}^T \left[\E f(\bar\xx^{(t)}) - f^\star \right] \leq \varepsilon}\,, &&\!\!\text{after}\!\!\!&&  \tilde\cO\left(\frac{\sigma^2}{n\varepsilon^2} +  \frac{\sqrt{L}\sigma }{\sqrt{p} c \varepsilon^{\nicefrac{3}{2}}}  + \frac{L (1 + \tilde R_0^2R_0^{-2}) }{pc \varepsilon} \right) \! \cdot \! R_0^2 \!\! && \text{iterations,}
	\end{align*}
	where $\tilde R_0^2 = \frac{1}{n} \sum_{i = 1}^n \|\xx_i^{(0)} - \bar \xx^{(0)}\|^2 + \frac{1}{nL^2} \sum_{i = 1}^n \|\yy_i^{(0)} - \bar \yy^{(0)}\|^2$.
\end{theorem}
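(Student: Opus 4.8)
The plan is to run the standard three-sequence bookkeeping for decentralized methods, but with a sharper control of the consensus/tracking block that is the source of the improved $p^{-1}c^{-1}$ dependence. I track: (i) the suboptimality of the averaged iterate, measured by $\E\norm{\nabla f(\bar\xx^{(t)})}_2^2$ (non-convex) or by $R_t^2 := \E\norm{\bar\xx^{(t)} - \xx^\star}^2$ together with $\E f(\bar\xx^{(t)}) - f^\star$ ((strongly) convex); (ii) the consensus error $\Xi_t := \E\norm{X^{(t)} - \bar X^{(t)}}_F^2$; and (iii) the tracking error $\Psi_t := \E\norm{Y^{(t)} - \bar Y^{(t)}}_F^2$. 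The two exact identities that drive the analysis are $\bar\yy^{(t)} = \frac1n\sum_{i}\gg_i^{(t)}$ (since $W\1 = \1$, the $Y$-update preserves column means) and $\bar\xx^{(t+1)} = \bar\xx^{(t)} - \gamma\bar\yy^{(t)}$. Plugging these into $L$-smoothness and taking expectations over the stochastic gradients produces the descent inequality
\[
\E f(\bar\xx^{(t+1)}) \le \E f(\bar\xx^{(t)}) - \tfrac{\gamma}{2}\E\norm{\nabla f(\bar\xx^{(t)})}_2^2 - \tfrac{\gamma}{2}(1 - \gamma L)\E\norm{\overline{\nabla f}^{(t)}}_2^2 + \tfrac{\gamma L^2}{2n}\,\Xi_t + \tfrac{\gamma^2 L \sigma^2}{2n}\,,
\]
where $\overline{\nabla f}^{(t)} = \frac1n\sum_i \nabla f_i(\xx_i^{(t)})$ and I used $\norm{\nabla f(\bar\xx^{(t)}) - \overline{\nabla f}^{(t)}}^2 \le \frac{L^2}{n}\Xi_t$; in the (strongly) convex case I replace the $-\tfrac\gamma2\norm{\nabla f(\bar\xx^{(t)})}^2$ term by the usual strong-convexity argument, turning the line into a one-step contraction of $R_t^2$ with a $-(\E f(\bar\xx^{(t)}) - f^\star)$ slack.

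The crux is the joint recursion for $(\Xi_t, \Psi_t)$. Since $\norm{ZW - \bar Z}_F^2 \le (1-p)\norm{Z - \bar Z}_F^2$, a Young split on the $X$-update gives $\Xi_{t+1} \le (1 - \tfrac p2)\Xi_t + \tfrac{3\gamma^2}{p}\Psi_t$, and the $Y$-update similarly contracts $\Psi_t$ up to the injected gradient difference $D^{(t)} := \nabla F(X^{(t+1)},\xi^{(t+1)}) - \nabla F(X^{(t)},\xi^{(t)})$, whose squared Frobenius norm is at most $\cO(L^2 \E\norm{X^{(t+1)} - X^{(t)}}_F^2 + n\sigma^2)$, with $\E\norm{X^{(t+1)} - X^{(t)}}_F^2 \le \cO(\Xi_t + \gamma^2 \Psi_t + \gamma^2 n\, \E\norm{\bar\yy^{(t)}}^2)$. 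Closing this $2\times 2$ system with crude per-step Young inequalities forces $\gamma = \cO(p^2/L)$, which is exactly the origin of the $p^{-2}$ in all prior work: the homogeneous part of the coupled $(X,Y)$ iteration is a Jordan-block-type operator and decays only as $t(1-p)^{t/2}$, contributing $\sum_t t(1-p)^{t/2} = \cO(p^{-2})$. The technical novelty is to not bound the injected gradient differences term-by-term but to telescope them — Abel summation of $\sum_{s}\Delta D^{(s)} W^{t-1-s}$, using $\Delta D^{(s)} = \Delta\nabla F^{(s+1)} - \Delta\nabla F^{(s)}$ and $Z(W-I) = \Delta Z(W-I)$ — or, equivalently, to run an auxiliary potential $\Phi_t = \Xi_t + \beta\Psi_t$ with a carefully chosen $\beta$. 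This reveals that the "bad" component coupling $X$ and $Y$ is multiplied by $W$-operators acting only on eigendirections with $\lambda_i < 0$, where the contraction is governed by $\min\{\lambda_n(W),0\}^2 = 1 - c$ rather than by $\max\{|\lambda_2|,|\lambda_n|\}^2 = 1 - p$. The upshot is the improved stepsize budget $\gamma = \cO(pc/L)$ (up to logs) and a summed bound of the form $\sum_{t=0}^T \Xi_t \le \cO\!\big(\tfrac1{pc}\big)\big[\gamma^2 L^2 \sum_t n\,\E\norm{\bar\yy^{(t)}}^2 + \gamma^2 n\sigma^2 T\big] + (\text{geometric initial-error tail})$.

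I then substitute this bound on $\sum_t \Xi_t$ into the summed descent inequality, bound $\E\norm{\bar\yy^{(t)}}^2 \le 2\E\norm{\overline{\nabla f}^{(t)}}^2 + 2\sigma^2/n$, and absorb the resulting $\sum_t \E\norm{\overline{\nabla f}^{(t)}}^2$ into the $-\tfrac\gamma2(1-\gamma L)\sum_t\E\norm{\overline{\nabla f}^{(t)}}^2$ slack of the descent step (legal once $\gamma = \cO(pc/L)$). The condition $T > \tfrac2p\log\big(\tfrac{50}{p}(1+\log\tfrac1p)\big)$ guarantees the horizon is long enough that the geometrically decaying initial errors $\Xi_0,\Psi_0$ — which reassemble into the quantity $\tilde R_0^2 = \tfrac1n(\Xi_0 + L^{-2}\Psi_0)$ — contribute only to the lower-order terms displayed in the theorem. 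What remains is a single master recursion of the shape $a_{t+1} \le (1-\mu\gamma) a_t - b\gamma\, e_t + c_1\gamma^2 + c_2\gamma^3$, with $a_t$ either $R_t^2$ or $\E f(\bar\xx^{(t)}) - f^\star$, $e_t$ the quantity being bounded, $\mu > 0$ in the strongly convex case, $\mu = 0$ in the convex case, and the non-convex case handled by the analogous telescoping of $\E f(\bar\xx^{(t)}) - f^\star$. Feeding this into the standard stepsize-tuning lemma (e.g.\ as in \citet{koloskova2020unified}) and optimizing $\gamma$ over the admissible range $(0,\cO(pc/L)]$ yields each of the three stated $\tilde\cO$-rates; the $\max$-type denominator $\sqrt p\,c + p\sqrt n$ in the non-convex (and convex) bounds arises because the consensus-noise term can be estimated in two ways and we keep the better one.

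The main obstacle is the second paragraph: making the telescoping / auxiliary-potential argument rigorous so that the coupled contraction genuinely improves from $p^{-2}$ to $p^{-1}c^{-1}$ — in particular, correctly handling the forward-in-time coupling through $X^{(t+1)} - X^{(t)}$ (which reintroduces $\Xi_{t+1}$ into the $Y$-recursion), keeping track of the stochastic-noise cross terms under the right conditioning, and verifying that the correction terms carrying the factor $c$ do not silently smuggle back a $1/p$. The descent lemma, the Young splits, the reduction among the three convexity regimes, and the final tuning-lemma bookkeeping are all routine.
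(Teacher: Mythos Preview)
Your overall architecture matches the paper: descent lemma for the averaged iterate (Lemmas~\ref{lem:descent} and~\ref{lem:decrease_nc}), a coupled consensus/tracking recursion, an Abel-type regrouping to beat the naive $p^{-2}$, and the final stepsize tuning from \cite{koloskova2020unified}. You also correctly isolate the Jordan-block growth $t(1-p)^{t/2}$ as the obstacle. But the mechanism you sketch for extracting the $c$-dependence is not the one the paper uses, and as written it has a gap.

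The two routes you offer in your second paragraph are not equivalent, and the per-step Lyapunov route does not deliver the result as stated. The block operator $J=\bigl(\begin{smallmatrix}\tilde W & 0\\ -\tilde W & \tilde W\end{smallmatrix}\bigr)$ acting on $(\Delta X^{(t)},\gamma\Delta Y^{(t)})$ is a genuine Jordan block; any weighted norm $\Xi_t+\beta\gamma^2\Psi_t$ that contracts per step needs $\beta=\cO(p)$, and pushing the injected errors through that weighting merely reshuffles the same $1/p^2$ and never isolates $c$. The paper abandons per-step contraction: it unrolls $k\in[\tau,2\tau]$ steps with $\tau=\tilde\Theta(1/p)$ and invokes the key Lemma~\ref{lemma:key}, $\|J^\tau\|^2\le\tfrac12$, which is the source of the lower bound on $T$ in the theorem. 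Because $\|\Psi_{t+j}\|$ can \emph{grow} for $0\le j<\tau$, a further time-averaged auxiliary quantity $\Phi_{t+\tau}=\frac1\tau\sum_{j<\tau}\|\Psi_{t+j}\|_F^2$ is introduced (Lemma~\ref{lem:unroll_rec}) to close the recursion between blocks of $\tau$ steps.

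Your Abel summation is also applied to the wrong object. After the $\tau$-step unroll of the \emph{joint} system, the dominant error term carries the off-diagonal of $J^{k-j}$, namely $\sum_{j}\bigl(G^{(t+j)}-G^{(t+j-1)}\bigr)\,(k-j)\tilde W^{k-j}$; the factor $(k-j)$ is absent from your $\sum_s \Delta D^{(s)} W^{t-1-s}$. Regrouping \emph{this} sum by gradient index gives each gradient a coefficient $(i{+}1)\tilde W^{i+1}-i\tilde W^i$, and the decisive estimate is Lemma~\ref{lem:norm_estimate}: $\bigl\|(i{+}1)\tilde W^{i+1}-i\tilde W^i\bigr\|^2\le 16/c^2$. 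For an eigenvalue $\lambda>0$ the difference $(i{+}1)\lambda^{i+1}-i\lambda^i$ stays $\cO(1)$ by cancellation, while for $\lambda<0$ the two terms reinforce and the bound is $\cO\bigl(1/(1-|\lambda_n|)\bigr)$. That is exactly where $c$ replaces a factor of $p$. By contrast, Abel-summing the $Y$-recursion alone produces coefficients $\tilde W^{i}(\tilde W-I)$, whose squared norms sum only to $\cO(1/p)$; and your assertion that the bad coupling ``acts only on eigendirections with $\lambda_i<0$'' is the right heuristic but is not a bound.
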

From these results we see that the leading term in the convergence rate (assuming $ \sigma>0$) is not affected by the graph parameters.
 Moreover, in this term we see a linear speedup in $n$, the number of workers. The leading terms of all three results match with the convergence estimates for all-reduce mini-batch SGD~\cite{Dekel2012:minibatch,Stich19sgd} and is optimal~\cite{nemirovskyyudin1983}.
 This means, that after a sufficiently long transient time, GT achieves a linear speedup in $n$. This transient time depends on the graph parameters $p$ and $c$, but not on the data-dissimilarity. %
We will discuss the dependency of the convergence rate on the graph parameters $c,p$ more carefully below in Sections~\ref{sec:discussion} and~\ref{sec:experiments}, and compare the convergence rate to the convergence rates of D-SGD and D${}^2$.

\textbf{Possible Relaxations of the Assumptions.}
Before moving on to the proofs, we mention briefly a few possible relaxations of the assumptions that are possible with only slight adaptions of the proof framework. These extensions can be addressed with known techniques and are omitted for conciseness. We give here the necessary references for completeness. 
\begin{itemize}[nosep,leftmargin=12pt]
\item \textbf{Bounded Gradient Assumption I.} The uniform bound on the stochastic noise in Assumption~\ref{a:opt_nc} could be relaxed by allowing the noise to grow with the gradient norm~\cite[Assumption 3b]{koloskova2020unified}.
\item \textbf{Bounded Gradient Assumption II.} In the convex setting it has been observed that $\sigma^2$ can be replaced with $\sigma^2_\star := \frac{1}{n} \sum_{i = 1}^n \EE{\xi_i}{\norm{\nabla F_i(\xx^\star, \xi_i) - \nabla f_i(\xx^\star)}}^2_2 $, the noise at the optimum. However, this requires smoothness of each $F_i(\xx,\xi)$, $\xi \in \cD_i$, which is stronger than our Assumption~\ref{a:lsmooth_nc}. For the technique see e.g.~\cite{Nguyen2018:async}. %
\item \textbf{Different mixing for $X$ and $Y$.} In Algorithm~\ref{alg:gt}, both the $\xx$ and $\yy$ iterates are averaged on the same communication topology (the same mixing matrix). This can be relaxed by allowing for two separate matrices. This follows from inspecting our proof below.
\item \textbf{Local Steps.} It is possible to extend Algorithm~\ref{alg:gt} and our analysis in Theorem~\ref{thm:GT-better-upper-bound} to allow for local computation steps. Mixing matrix would alternate between identity matrix $I$ (no communication, local steps) and $W$ (communication steps). \\
However, it is non trivial to extend our analysis to the general time-varying graphs, as the product of two arbitrary mixing matrices $W_1 W_2$ might be non symmetric. 
\end{itemize}

\subsection{Faster convergence on consensus functions}
\label{sec:consensus}
We now state an additional result, which improves Theorem~\ref{thm:GT-better-upper-bound} on the consensus problem, defined as
\begin{align}
\min \left[f(\xx) = \frac{1}{n} \sum_{i=1}^n \big[ f_i(\xx):= \tfrac{1}{2}\norm{\xx- \mmu_i}^2 \big] \right] \,, \label{eq:consensus}
\end{align}
for vectors $\mmu_i \in \R^d$, $i \in [n]$ and optimal solution $\xx^\star = \frac{1}{n}\sum_{i=1}^n \mmu_i$. Note that this is a special case of the general problem~\eqref{eq:problem} without stochastic noise ($\sigma=0$).
For this function, we can improve the complexity estimate that would follow from Theorem~\ref{thm:GT-better-upper-bound} by proving a convergence rate that does not depend on $c$.

\begin{theorem}\label{thm:consensus}
Let $f$ be as in \eqref{eq:consensus} let Assumption~\ref{a:W} hold.
Then there exists a stepsize $\gamma \leq p$ such that
it holds $\frac{1}{n} \sum_{i = 1}^n \big\| \xx_i^{(T)} - \xx^\star \big\|^2 \leq \epsilon$, for the iterates GT~\ref{alg:gt} and any $\epsilon>0$, after at most
 $T= \tilde \cO \left( p  \log \frac{1}{\epsilon} \right)$
iterations.
\end{theorem}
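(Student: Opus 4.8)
The decisive simplification is that on the consensus problem~\eqref{eq:consensus} the gradients are affine and noiseless: with $M := [\mmu_1,\dots,\mmu_n] \in \R^{d\times n}$ we have $\nabla F(X,\xi) = X - M$ for every $X$, so the gradient‑difference term in the \eqref{eq:GT-matrix} iteration collapses to $X^{(t+1)} - X^{(t)}$ and the whole update becomes an affine (after centering, linear) dynamical system. I would first split the quantity of interest,
\begin{align*}
\frac1n \sum_{i=1}^n \norm{\xx_i^{(T)} - \xx^\star}^2 = \norm{\bar\xx^{(T)} - \xx^\star}^2 + \frac1n \norm{\Delta X^{(T)}}_F^2\,,
\end{align*}
and bound the network average and the consensus error separately.

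\emph{Network average.} Averaging both lines of the \eqref{eq:GT-matrix} update and using that $W$ is doubly stochastic gives $\bar\xx^{(t+1)} = \bar\xx^{(t)} - \gamma\bar\yy^{(t)}$ and $\bar\yy^{(t+1)} = \bar\yy^{(t)} + \bar\xx^{(t+1)} - \bar\xx^{(t)}$, so $\bar\yy^{(t)} - \bar\xx^{(t)}$ is constant in $t$; since $\bar\yy^{(0)} = \bar\xx^{(0)} - \bar M = \bar\xx^{(0)} - \xx^\star$ this yields the identity $\bar\yy^{(t)} = \bar\xx^{(t)} - \xx^\star$ and hence $\bar\xx^{(t+1)} - \xx^\star = (1-\gamma)\bigl(\bar\xx^{(t)} - \xx^\star\bigr)$, i.e.\ the average converges geometrically with factor $1-\gamma$ for any $\gamma \in (0,1)$.

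\emph{Consensus error.} Subtracting the averaged recursions and using $\bar X W = \bar X$, $\bar Y W = \bar Y$ gives the closed linear system
\begin{align*}
\Delta X^{(t+1)} = \bigl(\Delta X^{(t)} - \gamma\,\Delta Y^{(t)}\bigr) W\,, \qquad \Delta Y^{(t+1)} = \Delta Y^{(t)} W + \Delta X^{(t+1)} - \Delta X^{(t)}\,.
\end{align*}
Writing $W = Q \diag(\lambda_1,\dots,\lambda_n) Q^\top$ with $Q$ orthogonal and projecting each equation onto the eigenvectors of $W$ decouples this into $n-1$ scalar ($\R^d$‑valued) recursions, one per $\lambda = \lambda_j(W)$ with $j\ge 2$, driven by
\begin{align*}
A_\lambda = \begin{pmatrix} \lambda & -\gamma\lambda \\ \lambda - 1 & \lambda(1-\gamma) \end{pmatrix}\,, \qquad \Tr A_\lambda = \lambda(2-\gamma)\,, \qquad \det A_\lambda = \lambda(\lambda - \gamma)\,.
\end{align*}
It then suffices to control $\norm{A_\lambda^t}$ by $(1-\Omega(p))^t$ uniformly over $|\lambda| \le \sqrt{1-p}$. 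From $\Tr A_\lambda$ and $\det A_\lambda$ one checks: for $\lambda \ge 0$ the eigenvalues of $A_\lambda$ are real and the larger one is at most $\lambda + \sqrt{\gamma(1-\lambda)}$, which for $\gamma \le \tfrac14(1-\lambda)$ is $\le 1 - \tfrac12(1-\lambda) \le 1 - \tfrac14 p$ since $1-\lambda \ge 1 - \sqrt{1-p} \ge \tfrac12 p$; for $\lambda < 0$ the eigenvalues are a complex‑conjugate pair of modulus $\sqrt{\det A_\lambda} = \sqrt{\lambda^2 + |\lambda|\gamma} \le \sqrt{(1-p)+\gamma} \le \sqrt{1-\tfrac12 p}$ once $\gamma \le \tfrac12 p$. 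Hence any $\gamma = \Theta(p)$ with $\gamma \le p$ (e.g.\ $\gamma = p/8$) makes the spectral radius of every mode $A_\lambda$, and the average‑mode factor $1-\gamma$, at most $1 - \Omega(p)$.

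\emph{Where the work is.} Spectral radii alone do not suffice, because the $A_\lambda$ are non‑symmetric and can grow transiently; the technically central step is to upgrade the eigenvalue bounds into a genuine one‑step contraction. I would do this by equipping each mode with a $\lambda$‑adapted weighted norm — a positive definite $P_\lambda$ of bounded condition number with $A_\lambda^\top P_\lambda A_\lambda \preceq (1-\Omega(p)) P_\lambda$ — or, working directly with the matrices, by a single Lyapunov potential $\Phi^{(t)}$ built from $\norm{\Delta X^{(t)}}_F^2$, $\norm{\gamma\Delta Y^{(t)}}_F^2$ and a cross term, with coefficients tuned so that $\Phi^{(t+1)} \le (1-\Omega(p))\Phi^{(t)}$; this is exactly where the new proof technique of Section~\ref{sec:sketch}, specialized to $\sigma = 0$, enters. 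Combining the two geometric rates and unrolling then gives $\frac1n\sum_{i=1}^n\norm{\xx_i^{(T)} - \xx^\star}^2 \le (1-\Omega(p))^T C_0$ with $C_0$ depending only on the initialization, which yields the claimed iteration complexity.
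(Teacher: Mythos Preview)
Your plan is correct and follows essentially the same route as the paper: split into the network average (an exact $(1-\gamma)$ contraction) and the consensus error, observe that on the consensus problem the centered dynamics are a closed linear map, decompose along eigenvectors of $W$ into $2\times 2$ mode matrices, and bound their spectral radii by $1-\Omega(p)$ for $\gamma = \Theta(p)$. The paper's handling of the non-normality issue differs slightly from what you propose: it notes that each $2\times 2$ block has two distinct eigenvalues for $\gamma>0$, so $J'$ is diagonalizable over $\mathbb{C}$, writes $J' = Q\Lambda Q^{-1}$ and bounds $\|(J')^t\| \le \|Q\|\|Q^{-1}\|\cdot\rho(\Lambda)^t$, absorbing the condition-number prefactor into the $\tilde\cO$; it then checks the eigenvalue bound $|\lambda(M_i)| \le \tfrac{1}{3}|\lambda_i| + \tfrac{2}{3}$ for $\gamma \le 1-|\lambda_i|$. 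Your per-mode Lyapunov matrices $P_\lambda$ are exactly this diagonalization in disguise (take $P_\lambda = Q_\lambda^{-*}Q_\lambda^{-1}$), so that option is sound. One correction, though: the consensus proof is \emph{not} a specialization of the Section~\ref{sec:sketch} technique. That technique unrolls $\tau = \tilde\Theta(1/p)$ steps precisely because the general $J$ is \emph{not} diagonalizable; the paper's proof sketch for Theorem~\ref{thm:consensus} explicitly contrasts $J'$ with $J$ and uses the direct diagonalization instead, bypassing the unrolling entirely. Relatedly, your second alternative --- a single global quadratic $\Phi^{(t)}$ with fixed coefficients giving a one-step contraction $\Phi^{(t+1)} \le (1-\Omega(p))\Phi^{(t)}$ --- is not what the paper does and would require its own argument, since the right weighting depends on the mode $\lambda$.
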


\section{Discussion}
\label{sec:discussion}
We now provide a discussion of these results.

\subsection{Parameter $c$}
\label{sec:parameterc}

The convergence rate in Theorem~\ref{thm:GT-better-upper-bound} depends on the parameter $c$, that in the worst case could be as small as $p$. 
In this case our theoretical result does not improve over existing results for the strongly convex case. However, for many graphs in practice parameter $c$ is bounded by a constant (see Table~\ref{tab:pc} and discussion below Assumption~\ref{a:W}).

While we show in Theorem~\ref{thm:consensus} that it is possible to remove the dependency  on $c$ entirely from the convergence rate in special cases,  it is still an open question if the parameter $c$ in Theorem~\ref{thm:GT-better-upper-bound} is tight in general. 

\subsection{Comparison to prior GT literature}
Tables~\ref{tab:gt-str-conv} and \ref{tab:gt-non-conv} compare our theoretical convergence rates in strongly convex and non convex settings. Our result tightens all existing prior work.

\subsection{Comparison to other methods. }
We now compare our complexity estimate of GT to D-SGD and D${}^2$ in the strongly convex case. Analogous observations hold for the other cases too.

\textbf{Comparison to D-SGD.} A popular algorithm for decentralized optimization is D-SGD~\cite{Lian2017:decentralizedSGD} that converges as \cite{koloskova2020unified}: \vspace{-2mm}
\begin{align}
\tilde \cO\left( \frac{\sigma^2}{\mu n \varepsilon} + \frac{\sqrt{L} \left( \zeta + \sqrt{p}  \sigma\right)}{\mu p \sqrt{\varepsilon}} +   \frac{L}{\mu p } \log \frac{1}{\varepsilon}\right)\,. \tag{D-SGD} 
\end{align}
While GT is agnostic to data-heterogenity, here the convergence estimate depends on the data-heterogenity, measured by a constant $\zeta^2$ that satisfies:
\begin{align} \textstyle
\frac{1}{n} \sum_{i = 1}^n \norm{\nabla f_i(\xx^\star) - \nabla f(\xx^\star)}_2^2 \leq \zeta^2 \,. %
\label{eq:new-attention}
\end{align}
Comparing with Theorem~\ref{thm:GT-better-upper-bound}, GT completely removes dependence on data heterogeneity level $\zeta$. Moreover, even in the homogeneous case when $\zeta = 0$, GT enjoys the same rate as D-SGD for many practical graphs when $c$ is bounded by a constant. 

\textbf{Comparison to D${}^2$.}
Similarly to GT, D${}^2$ also removes the dependence on functions heterogeneity. The convergence rate of D${}^2$ holds under assumption that $\lambda_{\min}\left(W\right) > -\frac{1}{3}$ and it is equal to \cite{Yuan2021d2-exact-diff-rates}:
\begin{align}
\cO\left( \frac{\sigma^2}{\mu n \varepsilon} + \frac{\sqrt{L } \sigma}{\mu \sqrt{p} \sqrt{\varepsilon}} +   \frac{L}{\mu p} \log \frac{1}{\varepsilon} \right) \,.\tag{D${}^2$}  \label{eq:d2}
\end{align}
Under the assumption $\lambda_{\min}\left(W\right) > -\frac{1}{3}$ the parameter $c$ is a constant, and the GT rate estimated in Theorem~\ref{thm:GT-better-upper-bound} matches~\eqref{eq:d2}.

\section{Proof sketch of the main theorem}
\label{sec:sketch}
Here we give a proof sketch for Theorem~\ref{thm:GT-better-upper-bound}, for the special case of strongly convex objectives. We give all proof details in the appendix and highlight the main technical difficulties and novel techniques.

\textbf{Key Lemma.}
It is very common---and useful---to write the iterates in the form $X^{(t)} = \bar X^{(t)} + (X^{(t)} - \bar X^{(t)})$, where $\bar X^{(t)}$ denotes the matrix with the average over the nodes.
We can then separately analyze $\bar X^{(t)}$ and the consensus difference $\Delta X^{(t)}:=(X^{(t)} - \bar X^{(t)})$ (and $\Delta Y^{(t)}:=(Y^{(t)} - \bar Y^{(t)})$). Define $\tilde W = W - \frac{\1\1^\top}{n}$. From the update equation~\eqref{eq:GT-matrix} we see that 
\begin{align*}
\begin{pmatrix}
\Delta X^{(t + 1)}\\
\gamma \Delta Y^{(t + 1)}
\end{pmatrix}^{\!\top}\!\!\!\! = \underbrace{\begin{pmatrix}
	\Delta X^{(t)}\\
	\gamma \Delta Y^{(t)}
	\end{pmatrix}^{\!\top}\!\!\!}_{=:\Psi_t} \underbrace{\begin{pmatrix}
	\tilde W & 0\\
	- \tilde W & \tilde W 
	\end{pmatrix} }_{=: J}  \!+ \gamma \underbrace{\begin{pmatrix}
	0\\
	\left(\nabla F(X^{t + 1}, \xi^{t + 1}) - \nabla F(X^{t}, \xi^t)\right)(I - \frac{\1\1^\top}{n})
	\end{pmatrix}^{\!\top}}_{=: E_t} ,
\end{align*} 
in short, by using the notation $\Psi_t$, $J$, and $E_t$ as introduced above,
\begin{align}
 \Psi_{t+1} = \Psi_t J  + \gamma E_t\,. \label{eq:25}
\end{align}
We could immediately adapt the proof technique from~\cite{koloskova2020unified} if it would hold that the spectral radius of $J$ is smaller than one. However, this is not the case, and in general $\norm{J}>1$. 

Note that for any integer $i\geq 0$:
\begin{align}
 J^i &= \begin{pmatrix}
\tilde W^i & 0\\
- i \tilde W^i & \tilde W^i
\end{pmatrix} &
\|J^i\|^2  = \|\tilde W^{i}\|^2 + i^2 \|\tilde W^{i}\|^2 \leq (1-p)^i + i^2 (1-p)^i\,, \label{eq:i-p}
\end{align}
by Assumption~\ref{a:W}.
With this observation we can now formulate a key lemma:
\begin{lemma}[Contraction]
\label{lemma:key}
For any integer $\tau \geq \frac{2}{p}\log\left(\frac{50}{p} (1+ \log \frac{1}{p}) \right)$ it holds that   $\norm{J^\tau}^2 \leq \frac{1}{2}$\,.
\end{lemma}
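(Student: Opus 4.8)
\textbf{Proof plan for Lemma~\ref{lemma:key}.} The starting point is the bound from~\eqref{eq:i-p}, namely $\norm{J^\tau}^2 \leq (1 + \tau^2)(1-p)^\tau$. The whole task reduces to showing that the stated threshold $\tau \geq \frac{2}{p}\log\left(\frac{50}{p}(1+\log\frac1p)\right)$ forces this scalar quantity below $\frac12$. So the plan is to prove the elementary inequality
\begin{align*}
(1+\tau^2)(1-p)^\tau \leq \tfrac12 \qquad \text{whenever} \qquad \tau \geq \tfrac{2}{p}\log\!\left(\tfrac{50}{p}\bigl(1+\log\tfrac1p\bigr)\right),
\end{align*}
for all $p \in (0,1]$, which is a self-contained calculus exercise.

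\textbf{Key steps.} First, I would use $(1-p)^\tau \leq e^{-p\tau}$ and $1 + \tau^2 \leq 2\tau^2$ for $\tau \geq 1$ (the threshold does exceed $1$; this needs a brief check, or one can just carry the harmless $1+\tau^2 \le (\tau+1)^2$). Thus it suffices to show $2\tau^2 e^{-p\tau} \leq \frac12$, i.e. $4\tau^2 \leq e^{p\tau}$, i.e. $2\log(2\tau) \le p\tau$, i.e.
\begin{align*}
p\tau \geq 2\log(2\tau).
\end{align*}
Now substitute the threshold value: write $\tau = \frac{2}{p}\log\frac{A}{p}$ with $A := 50(1+\log\frac1p) \ge 50$, so $p\tau = 2\log\frac{A}{p}$. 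The inequality to check becomes $\log\frac{A}{p} \geq \log(2\tau) = \log\frac{4}{p}\log\frac{A}{p}$, i.e.
\begin{align*}
\frac{A}{p} \geq \frac{4}{p}\log\frac{A}{p} \qquad\Longleftrightarrow\qquad A \geq 4\log\frac{A}{p}.
\end{align*}
Since $\log\frac{A}{p} = \log A + \log\frac1p \le \log A + \log\frac1p$, and $A = 50 + 50\log\frac1p$, I would bound $\log A \le \log 50 + \log(1+\log\frac1p) \le \log 50 + \log\frac1p$ (using $1 + x \le e^x$ crudely, or $\log(1+x)\le x$ when $\log\frac1p\ge 1$, plus a small-$x$ check), so $\log\frac{A}{p} \le \log 50 + 2\log\frac1p$, and then $4(\log 50 + 2\log\frac1p) \le 50 + 50\log\frac1p = A$ holds comfortably for all $p \in (0,1]$ (the constant $50$ was presumably chosen with exactly this slack in mind). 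Finally, since $\norm{J^\tau}^2 = (1+\tau^2)(1-p)^\tau$ is decreasing in $\tau$ once past its single interior maximum — or more simply since $4\tau^2 e^{-p\tau}$ is eventually monotone decreasing and the threshold lies in that regime — the bound extends from $\tau$ equal to the threshold to all larger integers $\tau$.

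\textbf{Main obstacle.} There is no deep obstacle; the only care needed is bookkeeping with the nested logarithm $\log\frac1p$ and verifying that the explicit constant $50$ really does absorb all the slack uniformly over $p \in (0,1]$, including the regime $p \to 1$ where $\log\frac1p \to 0$ and the $1+\log\frac1p$ term keeps $A$ bounded below by $50$. One must also confirm $\tau \ge 1$ on the whole range so that the crude bounds $1+\tau^2 \le 2\tau^2$ and the monotonicity argument apply; this follows since $\frac{2}{p}\log\frac{50}{p}\ge 2\log 50 > 1$. I would present these as a couple of one-line estimates rather than belaboring them.
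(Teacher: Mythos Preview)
Your proposal is correct and follows essentially the same route as the paper: start from the bound $\norm{J^\tau}^2 \leq (1+\tau^2)(1-p)^\tau$ in~\eqref{eq:i-p}, replace $(1-p)^\tau$ by $e^{-p\tau}$, and then verify by substitution that the stated threshold on $\tau$ drives this below~$\tfrac12$. Your reorganization via $A := 50(1+\log\tfrac1p)$ and the reduction to the clean scalar inequality $A \geq 4\log\tfrac{A}{p}$ is arguably tidier bookkeeping than the paper's direct chain of numerical estimates, but the substance is identical.
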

While the constants in this lemma are chosen to ease the presentation, most important for us is that after $\tau=\tilde\Theta\bigl(\frac{1}{p}\bigr)$ communication rounds, old parameter values (from $\tau$ steps ago) get discounted and averaged by a constant factor.
We can alternatively write the statement of Lemma~\ref{lemma:key} as
\[
 \norm{Z J^\tau  - \bar Z}_F^2 \leq \tfrac{1}{2} \norm{Z-\bar Z}_F^2\,, \qquad \forall Z \in \R^{2d \times n}\,.
\]
This resembles \cite[Assumption 4]{koloskova2020unified}
and the proof now follows the same pattern. A few crucial differences remain, as the result in \cite{koloskova2020unified} depends on a data-dissimilarity parameter which we can avoid by carefully estimating the tracking errors. For completeness, we sketch the outline and give all details in the appendix. %

\textbf{Average Sequence.}
First, we consider the average sequences $\bar X^{(t)}$ and $\bar Y^{(t)}$. As all columns of these matrices are equal, we can equivalently consider a single column only: $\bar \xx^{(t)}$ and $\bar \yy^{(t)}$.
\begin{lemma}[Average]\label{lem:average}
	It holds that %
	\begin{align}
	\bar \yy^{(t)} = \frac{1}{n} \sum_{i = 1}^n \nabla F_i \big(\xx_i^{(t)}, \xi_i^{(t)} \big)\,, && \bar{\xx}^{(t + 1)} = \bar{\xx}^{(t)} - \gamma \frac{1}{n} \sum_{i = 1}^n \nabla F_i \big(\xx_i^{(t)}, \xi_i^{(t)} \big) \,. \label{eq:23}
	\end{align}
\end{lemma}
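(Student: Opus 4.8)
The plan is to prove both identities by a short induction on $t$, using \emph{only} the double stochasticity of $W$ --- more precisely its column-stochasticity $\1^\top W = \1^\top$, i.e.\ $\sum_{i=1}^n w_{ij} = 1$ for every $j \in [n]$ --- together with the structure of the tracking update in Algorithm~\ref{alg:gt} and the initialization $\yy_i^{(0)} = \gg_i^{(0)} = \nabla F_i(\xx_i^{(0)}, \xi_i^{(0)})$. Recall from line~4 of Algorithm~\ref{alg:gt} that $\gg_i^{(t)} = \nabla F_i(\xx_i^{(t)}, \xi_i^{(t)})$, so the first claim is equivalently $\bar\yy^{(t)} = \frac1n \sum_{i=1}^n \gg_i^{(t)}$.

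First I would establish $\bar\yy^{(t)} = \frac1n\sum_{i=1}^n \gg_i^{(t)}$ inductively. The base case $t=0$ holds by the initialization. For the step, average line~5 of Algorithm~\ref{alg:gt} over $i \in [n]$: $\bar\yy^{(t+1)} = \frac1n \sum_{i=1}^n \sum_{j=1}^n w_{ij}\yy_j^{(t)} + \frac1n\sum_{i=1}^n\big(\gg_i^{(t+1)} - \gg_i^{(t)}\big)$. Swapping the order of summation in the first term and using $\sum_{i=1}^n w_{ij} = 1$ collapses it to $\frac1n\sum_{j=1}^n \yy_j^{(t)} = \bar\yy^{(t)}$; invoking the induction hypothesis $\bar\yy^{(t)} = \frac1n\sum_{i=1}^n\gg_i^{(t)}$ then cancels the $-\frac1n\sum_i\gg_i^{(t)}$ contribution, leaving $\bar\yy^{(t+1)} = \frac1n\sum_{i=1}^n\gg_i^{(t+1)}$.

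For the second identity I would simply average line~3 of Algorithm~\ref{alg:gt}: $\bar\xx^{(t+1)} = \frac1n\sum_{i=1}^n\sum_{j=1}^n w_{ij}\big(\xx_j^{(t)} - \gamma\yy_j^{(t)}\big)$; the same summation swap and column-stochasticity give $\bar\xx^{(t+1)} = \frac1n\sum_{j=1}^n \big(\xx_j^{(t)} - \gamma\yy_j^{(t)}\big) = \bar\xx^{(t)} - \gamma\bar\yy^{(t)}$, and substituting the first identity yields $\bar\xx^{(t+1)} = \bar\xx^{(t)} - \gamma\frac1n\sum_{i=1}^n\nabla F_i(\xx_i^{(t)},\xi_i^{(t)})$.

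There is essentially no obstacle: the statement is a direct computational consequence of double stochasticity and the telescoping form of the tracking recursion. The only points requiring care are (i) the two identities are coupled, so the $\bar\yy$ identity must be in place before (or proved jointly with) the $\bar\xx$ identity, and (ii) it is the specific initialization $\yy_i^{(0)} = \gg_i^{(0)}$ that seeds the induction --- with a different initialization $\bar\yy^{(t)}$ would track the gradient average only up to the constant offset $\bar\yy^{(0)} - \frac1n\sum_i\gg_i^{(0)}$. In matrix form the argument is the observation that right-multiplication by $\tfrac1n\1\1^\top$ commutes through the gossip step, since $W\tfrac1n\1\1^\top = \tfrac1n\1\1^\top$ by $W\1=\1$, so that $\overline{Y^{(t)}W} = \bar Y^{(t)}$ and likewise for $X$.
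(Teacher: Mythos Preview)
Your proposal is correct and takes essentially the same approach as the paper, which simply remarks that the identities follow directly from the update~\eqref{eq:GT-matrix} together with $\bar X W = \bar X$ for doubly stochastic $W$. You have merely spelled out in node-wise detail (the induction on $t$ and the summation swap using $\sum_i w_{ij}=1$) what the paper compresses into that one-line observation; your final matrix-form comment is exactly the paper's argument.
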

This follows directly from the update~\eqref{eq:GT-matrix} and the fact that $\bar X = \bar X W$ for doubly stochastic mixing matrices.
The update of $\bar \xx^{(t)}$ in~\eqref{eq:23} is almost identical to one step of mini-batch SGD (on a complete graph). 
The average sequence behaves almost as a SGD sequence:
\begin{lemma}[Descent lemma, {\cite[Lemma 8]{koloskova2020unified}}]\label{lem:descent}
	Under the Assumptions of Theorem~\ref{thm:GT-better-upper-bound} for the convex functions,
	the averages $\bar{\xx}^{(t)} := \frac{1}{n}\sum_{i=1}^n \xx_i^{(t)}$ of the iterates of Algorithm \ref{alg:gt} with the stepsize $\gamma \leq \frac{1}{12 L} $ satisfy %
	\begin{equation}
	\resizebox{0.93\linewidth}{!}{
	$ \displaystyle
	\E \big \|\bar{\xx}^{(t + 1)} - \xx^\star \big \|^2 \leq \left(1 - \dfrac{\gamma\mu}{2}\right) \E { \big \|\bar{\xx}^{(t)} - \xx^\star \big\|}^2 + \dfrac{\gamma^2\sigma^2}{n} - \gamma e_t+  \dfrac{3 \gamma L }{n} \sum_{i = 1}^{n} \E  \big \|\bar{\xx}^{(t)} - \xx_i^{(t)} \big \|^2,    
	$
	} \label{eq:x}
	\end{equation}
	where $e_t = \E f(\bar{\xx}^{(t)}) - f^\star $, for $f^\star = \min_{\xx \in \R^d} f(\xx)$. %
\end{lemma}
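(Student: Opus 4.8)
The plan is to treat the average sequence $\bar\xx^{(t)}$ as one step of mini-batch SGD driven by the averaged stochastic gradient evaluated at the (consensus-perturbed) local points, i.e.\ to run the standard perturbed-iterate analysis. Write $\bar\gg^{(t)} := \frac1n\sum_{i=1}^n \nabla F_i(\xx_i^{(t)},\xi_i^{(t)})$ for the averaged stochastic gradient and $\hh^{(t)} := \frac1n\sum_{i=1}^n \nabla f_i(\xx_i^{(t)})$ for its conditional mean. By Lemma~\ref{lem:average} we have $\bar\xx^{(t+1)} = \bar\xx^{(t)} - \gamma\bar\gg^{(t)}$, so conditioning on the iterates $X^{(t)}$ (equivalently, on all randomness generated before $\xi^{(t)}$ is drawn) and expanding the square, I would first obtain
\begin{align*}
\E\norm{\bar\xx^{(t+1)} - \xx^\star}^2 = \norm{\bar\xx^{(t)} - \xx^\star}^2 - 2\gamma\langle \hh^{(t)}, \bar\xx^{(t)} - \xx^\star\rangle + \gamma^2\E\norm{\bar\gg^{(t)}}^2 .
\end{align*}
Since the $\xi_i^{(t)}$ are independent across workers, Assumption~\ref{a:opt_nc} gives $\E\norm{\bar\gg^{(t)} - \hh^{(t)}}^2 \le \sigma^2/n$, hence $\gamma^2\E\norm{\bar\gg^{(t)}}^2 \le \gamma^2\sigma^2/n + \gamma^2\norm{\hh^{(t)}}^2$, and it then remains to bound the two deterministic quantities $\norm{\hh^{(t)}}^2$ and $-2\gamma\langle\hh^{(t)}, \bar\xx^{(t)} - \xx^\star\rangle$ in terms of $e_t$, $\norm{\bar\xx^{(t)} - \xx^\star}^2$, and the consensus error $\frac1n\sum_i\norm{\bar\xx^{(t)} - \xx_i^{(t)}}^2$.

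Next I would handle these two deterministic terms. For $\norm{\hh^{(t)}}^2$, split $\nabla f_i(\xx_i^{(t)}) = \nabla f_i(\bar\xx^{(t)}) + (\nabla f_i(\xx_i^{(t)}) - \nabla f_i(\bar\xx^{(t)}))$, bound the last difference via $L$-smoothness (Assumption~\ref{a:lsmooth_nc}), and use $\norm{\nabla f(\bar\xx^{(t)})}^2 \le 2L(f(\bar\xx^{(t)}) - f^\star) = 2Le_t$ (valid since $f$ is $L$-smooth with minimum $f^\star$), which yields $\norm{\hh^{(t)}}^2 \le 4Le_t + \frac{2L^2}{n}\sum_{i=1}^n\norm{\xx_i^{(t)} - \bar\xx^{(t)}}^2$. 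For the inner-product term, decompose $\bar\xx^{(t)} - \xx^\star = (\bar\xx^{(t)} - \xx_i^{(t)}) + (\xx_i^{(t)} - \xx^\star)$ inside each summand: $L$-smoothness gives $-\langle\nabla f_i(\xx_i^{(t)}), \bar\xx^{(t)} - \xx_i^{(t)}\rangle \le f_i(\xx_i^{(t)}) - f_i(\bar\xx^{(t)}) + \frac L2\norm{\bar\xx^{(t)} - \xx_i^{(t)}}^2$, while $\mu$-strong convexity (Assumption~\ref{a:mu-convex}) gives $-\langle\nabla f_i(\xx_i^{(t)}), \xx_i^{(t)} - \xx^\star\rangle \le -(f_i(\xx_i^{(t)}) - f_i(\xx^\star)) - \frac\mu2\norm{\xx_i^{(t)} - \xx^\star}^2$. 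Averaging over $i$, the $f_i(\xx_i^{(t)})$ contributions cancel, $\frac1n\sum_i(f_i(\bar\xx^{(t)}) - f_i(\xx^\star)) = e_t$, and $\frac1n\sum_i\norm{\xx_i^{(t)} - \xx^\star}^2 \ge \norm{\bar\xx^{(t)} - \xx^\star}^2$ by convexity of $\norm{\cdot}^2$, leaving
\begin{align*}
-2\gamma\langle\hh^{(t)}, \bar\xx^{(t)} - \xx^\star\rangle \le -2\gamma e_t - \gamma\mu\norm{\bar\xx^{(t)} - \xx^\star}^2 + \frac{\gamma L}{n}\sum_{i=1}^n\norm{\bar\xx^{(t)} - \xx_i^{(t)}}^2 .
\end{align*}

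Finally I would plug these back into the first display, use the stepsize bound $\gamma \le \frac1{12L}$ to absorb the quadratic-in-$\gamma$ remainders — namely $4\gamma^2 L e_t \le \frac\gamma3 e_t$, so the coefficient of $e_t$ is at most $-2\gamma + \frac\gamma3 \le -\gamma$; $\frac{\gamma L}{n} + \frac{2\gamma^2 L^2}{n} \le \frac{3\gamma L}{n}$; and $1 - \gamma\mu \le 1 - \frac{\gamma\mu}{2}$ — and take total expectation, which reproduces exactly~\eqref{eq:x}. The one delicate point I expect is arranging the inner-product estimate so that strong convexity contributes a clean $-\gamma\mu\norm{\bar\xx^{(t)} - \xx^\star}^2$ contraction term while every cross term between $\xx_i^{(t)}$ and $\bar\xx^{(t)}$ is charged either to $e_t$ or to the consensus error $\frac1n\sum_i\norm{\bar\xx^{(t)} - \xx_i^{(t)}}^2$, and never to a spurious extra multiple of $\norm{\bar\xx^{(t)} - \xx^\star}^2$ that would spoil the contraction factor; the stepsize restriction is precisely what makes the $O(\gamma^2)$ terms negligible. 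The statement is standard and the argument follows \cite[Lemma~8]{koloskova2020unified}.
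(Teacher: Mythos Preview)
Your proposal is correct. Note that the paper itself does not prove this lemma: it is quoted directly from \cite[Lemma~8]{koloskova2020unified}, and the appendix uses it as a black box. Your argument is the standard perturbed-iterate analysis that underlies that cited result --- expand the square via Lemma~\ref{lem:average}, separate noise using independence across workers, and split the inner product $\langle \nabla f_i(\xx_i^{(t)}), \bar\xx^{(t)} - \xx^\star\rangle$ into a smoothness piece (around $\xx_i^{(t)}$) and a strong-convexity piece (toward $\xx^\star$), then absorb the $O(\gamma^2)$ remainders with $\gamma \le \tfrac{1}{12L}$. All constants check out.
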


\textbf{Consensus Distance.}
The main difficulty comes from estimating the consensus distance 
$\norm{\Psi_{t}}^2$, in the notation introduced in~\eqref{eq:25}.
Note that
\begin{align*}
 \|\Psi_t\|^2 = \frac{1}{n} \sum_{i = 1}^n \big\|\xx_i^{(t)} - \bar \xx^{(t)} \big \|_2^2 +  \frac{\gamma^2}{n} \sum_{i = 1}^n \big \|\yy_i^{(t)} - \bar \yy^{(t)} \big \|_2^2 \,.
\end{align*}
By unrolling~\eqref{eq:25} for $\tau \leq  k \leq  2 \tau$, $\tau = \frac{2}{p}\log\left(\frac{50}{p} (1+ \log \frac{1}{p}) \right) + 1$ steps, \vspace{-2mm}
\begin{align}\label{eq:mid-proof}
\Psi_{t + k} = \Psi_t J^k + \gamma \sum_{j = 1}^{k - 1}  E_{t + j - 1}J^{k - j}\,.
\end{align}
By taking the Frobenius norm, and carefully estimating the norm of the error term $ \big\| \sum_{j = 1}^{\tau - 1} E_{t + j - 1} J^{\tau - j}  \big\|_F^2$, and using Lemma~\ref{lemma:key} we can derive a recursion for the consensus distance. 
\begin{lemma}[Consensus distance recursion]\label{lem:consensus} There exists absolute constants $B_1, B_2, B_3 > 0$ such that  for a stepsize $\gamma < \frac{c}{B_3 L \tau}$
	\begin{align}
   \E \norm{\Psi_{t + k}}_F^2 \leq \frac{7}{8} \E \norm{\Psi_t}_F^2 +  \frac{1}{128 \tau} \sum_{j = 0}^{k - 1} \norm{\Psi_{t + j}}_F^2 + \frac{B_1 \tau L \gamma^2 }{c^2} \sum_{j = 0}^{k - 1} n e_{t + j} + \frac{B_2 \tau \gamma^2}{c^2} n\sigma^2 . \label{eq:y}
	\end{align}
\end{lemma}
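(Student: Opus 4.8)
The starting point is the unrolled recursion~\eqref{eq:mid-proof}, $\Psi_{t+k} = \Psi_t J^k + \gamma\sum_{j=1}^{k-1}E_{t+j-1}J^{k-j}$ for $\tau\le k\le 2\tau$, obtained by iterating~\eqref{eq:25}. I take the squared Frobenius norm and apply Young's inequality $\norm{A+B}_F^2\le(1+\alpha)\norm{A}_F^2+(1+\alpha^{-1})\norm{B}_F^2$ with a small absolute constant $\alpha$, splitting the right-hand side into a \emph{contraction term} $(1+\alpha)\norm{\Psi_t J^k}_F^2$ and an \emph{error term} $(1+\alpha^{-1})\gamma^2\norm{\sum_{j}E_{t+j-1}J^{k-j}}_F^2$. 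Throughout I write $P:=\tfrac1n\1\1^\top$.

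For the contraction term, $\norm{\Psi_t J^k}_F^2\le\norm{J^k}^2\norm{\Psi_t}_F^2$, and by~\eqref{eq:i-p}, $\norm{J^k}^2\le(1-p)^k(1+k^2)$. Lemma~\ref{lemma:key} gives $\norm{J^\tau}^2\le\tfrac12$; I extend this to all $k\in[\tau,2\tau]$ by checking that $k\mapsto(1-p)^k(1+k^2)$ is non-increasing on $[\tau,\infty)$ (its logarithmic derivative $\log(1-p)+\tfrac{2k}{1+k^2}$ is at most $-p+\tfrac2\tau\le0$ once $\tau\ge2/p$, which is implied by the assumed lower bound on $\tau$). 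Hence $\norm{\Psi_t J^k}_F^2\le\tfrac12\norm{\Psi_t}_F^2$, and choosing $\alpha\le\tfrac34$ makes this term $\le\tfrac78\norm{\Psi_t}_F^2$.

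The bulk of the work is the error term, which needs two ingredients. First, a bound on the per-step gradient difference $D_s:=\nabla F(X^{s+1},\xi^{s+1})-\nabla F(X^s,\xi^s)$: splitting off the (conditionally mean-zero, variance $\le 2n\sigma^2$ by Assumption~\ref{a:opt_nc}) stochastic part and bounding the deterministic remainder by $L\norm{X^{s+1}-X^s}_F$ via Assumption~\ref{a:lsmooth_nc}, substituting $X^{s+1}-X^s=\Delta X^s(\tilde W-I)-\gamma Y^sW$, and using Lemma~\ref{lem:average} together with $\E\norm{\nabla f(\bar\xx^{(s)})}^2\le2Le_s$ ($L$-smoothness and $f\ge f^\star$), one gets $\E\norm{D_s}_F^2\lesssim L^2\,\E\norm{\Psi_s}_F^2+nL^3\gamma^2 e_s+n\sigma^2$, where the stepsize restriction $\gamma L\tau\lesssim c\le1$ absorbs all lower-order contributions and $\norm{\Delta X^s}_F^2,\,\gamma^2\norm{\Delta Y^s}_F^2\le\norm{\Psi_s}_F^2$. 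Second, the summation estimate for $\norm{\sum_jE_{t+j-1}J^{k-j}}_F^2$: a term-by-term bound $\norm{E_sJ^i}_F^2\le(1+i^2)(1-p)^i\norm{D_s(I-P)}_F^2$ followed by Cauchy--Schwarz yields only a factor $\cO(p^{-2})$, which is too crude. Instead I exploit the algebraic structure $E_sJ^i=\bigl(-i\,D_s(I-P)\tilde W^i,\;D_s(I-P)\tilde W^i\bigr)$ together with the tracking identity $D_s(I-P)=\Delta Y^{s+1}-\Delta Y^s\tilde W$, so that summation by parts collapses the error sum onto weighted differences of the tracking-consensus iterates and, crucially, the growing prefactor $i$ cancels against the $-k\gamma\,\Delta Y^t\tilde W^k$ that $\Psi_tJ^k$ contributes to the $X$-block; the residual geometric factor is then governed by the most negative eigenvalue of $W$, producing $c^{-2}$ rather than $p^{-2}$. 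The noise part is kept separately and handled using that the stochastic-gradient errors are uncorrelated across time, avoiding a further Cauchy--Schwarz loss.

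Finally I combine the two ingredients: the stepsize condition $\gamma<c/(B_3L\tau)$ forces the coefficient of $\sum_{j=0}^{k-1}\norm{\Psi_{t+j}}_F^2$ in the error term down to $\tfrac1{128\tau}$, and the $e_{t+j}$- and $\sigma^2$-contributions are collected into $\tfrac{B_1\tau L\gamma^2}{c^2}\sum_j n e_{t+j}$ and $\tfrac{B_2\tau\gamma^2}{c^2}n\sigma^2$ for suitable absolute constants $B_1,B_2,B_3$, which gives~\eqref{eq:y}. \textbf{Main obstacle.} The hard step is the summation estimate: obtaining $c^{-2}$ instead of $p^{-2}$ requires the operator-level manipulation of $\sum_jE_{t+j-1}J^{k-j}$ (the summation by parts and the cancellation of the linear-in-$k$ factor against the $J^k$-term) rather than a crude triangle-inequality bound, and tracking exactly which eigendirection of $W$ drives the remaining geometric series. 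A secondary technical nuisance is the bookkeeping of conditional expectations for the gradient-difference terms across the whole window of length $k$.
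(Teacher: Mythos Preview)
Your outline for the contraction piece and the overall Young-inequality split is fine and matches the paper. The gap is in the ``second ingredient'', the summation estimate for $\sum_{j} E_{t+j-1}J^{k-j}$, and it is a genuine one.

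First, there is an internal inconsistency: you apply Young's inequality to separate $\Psi_tJ^k$ from the error sum and bound $\norm{\Psi_tJ^k}_F^2$ via Lemma~\ref{lemma:key}, but then claim the $(k-1)\gamma\Delta Y^{(t)}\tilde W^k$ produced by your summation by parts cancels the $-k\gamma\Delta Y^{(t)}\tilde W^k$ sitting inside $\Psi_tJ^k$. Once you have split by Young, those two live in different squared norms and cannot cancel.

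More fundamentally, even if you rearrange the split so the cancellation is legitimate, the tracking-identity route is circular. Substituting $D_s(I-P)=\Delta Y^{(s+1)}-\Delta Y^{(s)}\tilde W$ and telescoping yields, after the cancellation, exactly
\[
\Delta X^{(t+k)}=\Delta X^{(t)}\tilde W^{k}-\gamma\sum_{j=0}^{k-1}\Delta Y^{(t+j)}\tilde W^{\,k-j},
\]
which is just a rewriting of the algorithm (and the bottom block collapses to the tautology $\gamma\Delta Y^{(t+k)}=\gamma\Delta Y^{(t+k)}$). Bounding the sum now gives, via Cauchy--Schwarz and $\norm{\tilde W^{\,k-j}}\le 1$,
\[
\gamma^{2}\Bigl\|\sum_{j}\Delta Y^{(t+j)}\tilde W^{\,k-j}\Bigr\|_F^{2}\le k\sum_{j}\gamma^{2}\norm{\Delta Y^{(t+j)}}_F^{2}\le 2\tau\sum_{j}\norm{\Psi_{t+j}}_F^{2}.
\]
The coefficient on $\sum_j\norm{\Psi_{t+j}}_F^2$ is $\cO(\tau)$, not $\tfrac{1}{128\tau}$; no stepsize restriction can fix this because the $\gamma^{2}$ has already been eaten by $\gamma^{2}\norm{\Delta Y}^{2}\le\norm{\Psi}^{2}$. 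There is also no mechanism here that singles out the most negative eigenvalue of $W$, so $c$ never appears; your ``first ingredient'' bound on $\norm{D_s}_F^2$ is unused once you have replaced $D_s$ by $\Delta Y$'s.

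The paper's route avoids this by anchoring at $\nabla f(X^\star)$ rather than using the tracking identity. One writes the deterministic part of the top block as a sum of $D^{(s)}:=\nabla f(X^{(s)})-\nabla f(X^\star)$, then regroups by gradient index (Abel summation) so that the operator acting on each $D^{(t+j)}$ becomes a difference $(k-j)\tilde W^{\,k-j}-(k-j-1)\tilde W^{\,k-j-1}$, whose spectral norm is at most $16/c^{2}$ by Lemma~\ref{lem:norm_estimate}. Crucially, $\norm{D^{(s)}}_F^{2}\le 2L^{2}\norm{\Psi_s}_F^{2}+4Ln\,e_s$ (smoothness plus convexity), so the $\Psi$-terms enter with a prefactor $\gamma^{2}\tau L^{2}/c^{2}$; under $\gamma\lesssim c/(L\tau)$ this is $\cO(1/\tau)$ as required. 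In short: the $c^{-2}$ comes from the operator-difference bound of Lemma~\ref{lem:norm_estimate} after Abel-summing the \emph{anchored} gradients, not from the tracking identity.
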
%
This lemma allows to replace $p$ with $c$ in the final convergence rate. This is achieved by grouping same gradients in the sum $ \big\| \sum_{j = 1}^{k - 1} E_{t + j - 1} J^{k - j}  \big\|_F^2$ and estimating the norm with Lemma~\ref{lem:norm_estimate}. %

An additional technical difficulty comes when unrolling consensus recursion \eqref{eq:y}. As iteration matrix $J$ is not contractive, i.e. $\norm{J} > 1$, then $\|\Psi_{t + j} \|_F^2$ for $j < \tau$ can be larger than $\|\Psi_t\|_F^2$ (up to $\approx \frac{1}{p^2}$ times as $\norm{J^i}^2 \leq \cO\left(\frac{1}{p^2}\right)~\forall i$). We introduce an additional term in the recursion that is provably non-increasing
\vspace*{-0.2cm}
\begin{align*}
\Phi_{t + \tau} := \frac{1}{\tau}\sum_{j = 0}^{\tau - 1} \|\Psi_{t + j} \|_F^2.
\end{align*}
With this we unroll consensus recursion.
\begin{lemma}[Unrolling recursion]\label{lem:unroll_rec}
	For $\gamma < \frac{c}{\sqrt{7 B_1} L \tau} \leq \frac{1}{2L\tau}$ it holds,
	\begin{align}\label{eq:unrolled_rec}
	\E \norm{\Psi_{t}}_F^2  &\leq \left(1 - \frac{1}{64\tau}\right)^{ t} A_0  +   \frac{22 B_1 \tau L \gamma^2}{c^2}  \sum_{j = 0}^{t - 1} \left(1 - \frac{1}{64\tau}\right)^{t - j} n e_j + \frac{20 B_2  \tau \gamma^2 }{c^2}  n\sigma^2 
	\end{align}
	where $e_j = \E [f(\bar \xx^{(j)}) - f(\xx^\star)]$, $A_0 =16 \|\Delta X^{(0)}\|_F^2 + \frac{24 \gamma^2}{p^2} \|\Delta Y^{(0)}\|^2_F $.
\end{lemma}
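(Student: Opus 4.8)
The plan is to prove $\E\|\Psi_t\|_F^2\le\Gamma_t$ by strong induction on $t$, where $\Gamma_t$ is the claimed right-hand side: with $r:=1-\tfrac1{64\tau}$,
\[
\Gamma_t\;:=\;r^{\,t} A_0\;+\;\tfrac{22 B_1 \tau L \gamma^2}{c^2}\sum_{j=0}^{t-1} r^{\,t-j}\, n e_j\;+\;\tfrac{20 B_2 \tau \gamma^2}{c^2}\, n \sigma^2 .
\]
The induction uses only three soft properties of $\Gamma_t$: (i) $\Gamma_0\ge\|\Psi_0\|_F^2=\|\Delta X^{(0)}\|_F^2+\gamma^2\|\Delta Y^{(0)}\|_F^2$, which holds since $A_0=16\|\Delta X^{(0)}\|_F^2+\tfrac{24\gamma^2}{p^2}\|\Delta Y^{(0)}\|_F^2$ and $p\le1$; (ii) quasi-monotonicity $\Gamma_l\le r^{-(m-l)}\Gamma_m\le\tfrac{64}{63}\,\Gamma_m$ for $0\le m-l\le\tau$, since all summands of $\Gamma$ are non-negative and $r^\tau\ge\tfrac{63}{64}$ (Bernoulli); (iii) the slack bounds $\Gamma_m\ge\tfrac{20B_2\tau\gamma^2}{c^2}n\sigma^2$ and $\Gamma_m\ge21B_1\tau L\gamma^2 c^{-2}\sum_l ne_l$ for any sum over $\le\tau$ consecutive indices $l<m$ (again using $r^\tau\ge\tfrac{63}{64}$ and $22\cdot\tfrac{63}{64}\ge21$). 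Because $J$ is not contractive, the inductive step is run differently for $t<\tau$ and for $t\ge\tau$, mirroring the two-term form of $A_0$.

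\textbf{First window ($0\le t\le\tau-1$).} Lemma~\ref{lem:consensus} does not apply here ($k\ge\tau$ is required), so I would unroll \eqref{eq:mid-proof} directly from $0$, $\Psi_t=\Psi_0 J^t+\gamma\sum_{j=1}^{t-1}E_{j-1}J^{t-j}$, and bound the two pieces separately. For the deterministic part, the block form \eqref{eq:i-p} gives $\Psi_0 J^t=\big((\Delta X^{(0)}-\gamma t\,\Delta Y^{(0)})\tilde W^t,\ \gamma\,\Delta Y^{(0)}\tilde W^t\big)$, hence, with $\|\tilde W^t\|^2\le(1-p)^t\le1$ and the elementary bound $\sup_{m\ge0}m^2(1-p)^m\le p^{-2}$,
\[
\|\Psi_0 J^t\|_F^2\le 2\|\Delta X^{(0)}\|_F^2+(2\gamma^2 t^2+\gamma^2)(1-p)^t\|\Delta Y^{(0)}\|_F^2\le 2\|\Delta X^{(0)}\|_F^2+\tfrac{3\gamma^2}{p^2}\|\Delta Y^{(0)}\|_F^2=\tfrac18 A_0 .
\]
The essential point is that the $p^{-2}$ blow-up attaches only to $\|\Delta Y^{(0)}\|_F^2$; this is exactly why $A_0$ carries $p^{-2}$ only on the $Y$-term. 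For the stochastic part I would invoke the error-term estimate developed inside the proof of Lemma~\ref{lem:consensus}: its grouping of the summands by the power of $\tilde W$ (the step that converts $p^{-2}$ into $c^{-2}$) does not care whether $k\ge\tau$, so for all $1\le k\le2\tau$ one has $2\gamma^2\E\|\sum_{j=1}^{t-1}E_{j-1}J^{t-j}\|_F^2\le\tfrac1{128\tau}\sum_{j=0}^{t-1}\|\Psi_j\|_F^2+\tfrac{B_1\tau L\gamma^2}{c^2}\sum_{j=0}^{t-1}ne_j+\tfrac{B_2\tau\gamma^2}{c^2}n\sigma^2$, and the hypothesis $\gamma<\tfrac c{\sqrt{7B_1}L\tau}$ is precisely what keeps the coefficient $\tfrac1{128\tau}$ valid. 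Young's inequality, the induction hypothesis $\E\|\Psi_j\|_F^2\le\Gamma_j$ ($j<t$), property (ii) ($\sum_{j<t}\Gamma_j\le\tfrac{64}{63}\tau\,\Gamma_t$) and property (iii) then give $\E\|\Psi_t\|_F^2\le\tfrac14 A_0+\big(\tfrac1{63}+\tfrac2{21}+\tfrac1{10}\big)\Gamma_t$, and since $\tfrac14 A_0\le\tfrac{16}{63}\Gamma_t$ (from $\Gamma_t\ge r^tA_0\ge\tfrac{63}{64}A_0$) the total is $<\Gamma_t$; $t=0$ is immediate.

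\textbf{Steady phase ($t\ge\tau$).} Apply Lemma~\ref{lem:consensus} with window start $t-\tau$ and length $k=\tau$ (permissible), recognising that $\tfrac1{128\tau}\sum_{j=0}^{\tau-1}\|\Psi_{(t-\tau)+j}\|_F^2=\tfrac1{128}\Phi_t$ with $\Phi_t=\tfrac1\tau\sum_{l=t-\tau}^{t-1}\|\Psi_l\|_F^2$ the window-average introduced before the lemma, so that $\E\|\Psi_t\|_F^2\le\tfrac78\E\|\Psi_{t-\tau}\|_F^2+\tfrac1{128}\Phi_t+\tfrac{B_1\tau L\gamma^2}{c^2}\sum_{l=t-\tau}^{t-1}ne_l+\tfrac{B_2\tau\gamma^2}{c^2}n\sigma^2$. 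By the induction hypothesis and property (ii), $\E\|\Psi_{t-\tau}\|_F^2\le\Gamma_{t-\tau}\le\tfrac{64}{63}\Gamma_t$ and $\Phi_t\le\tfrac1\tau\sum_{l=t-\tau}^{t-1}\Gamma_l\le\tfrac{64}{63}\Gamma_t$ — this is the precise sense in which $\Phi_t$ is ``essentially non-increasing'': along the envelope it never exceeds $\Gamma_t$ by more than the constant factor $\tfrac{64}{63}$, which is what makes the unrolling go through even though individual $\|\Psi_l\|_F^2$ need not decrease. With property (iii) this yields $\E\|\Psi_t\|_F^2\le\big(\tfrac78+\tfrac1{128}\big)\tfrac{64}{63}\,\Gamma_t+\big(\tfrac1{21}+\tfrac1{20}\big)\Gamma_t$, and the numerical constants in $A_0$ and in the recursion are chosen exactly so that $\big(\tfrac78+\tfrac1{128}\big)\tfrac{64}{63}+\tfrac1{21}+\tfrac1{20}<1$; this closes the induction. (The inequality $\tfrac c{\sqrt{7B_1}L\tau}\le\tfrac1{2L\tau}$ in the hypothesis merely records that the absolute constant $B_1$ is taken $\ge\tfrac47$ and $c\le1$; it is not used in this proof.)

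\textbf{Main obstacle.} The whole difficulty stems from $\|J\|>1$: one averaging step may \emph{increase} the consensus potential, and over a window of length $\tau=\tilde\Theta(p^{-1})$ the quantities $\|\Psi_l\|_F^2$ can grow by $\Theta(p^{-2})$ before the $\tau$-step contraction of Lemma~\ref{lemma:key} takes effect. The delicate part is not the contraction itself but the accounting showing that this transient growth (a) is a \emph{one-time} phenomenon chargeable to the initialisation $A_0$ rather than recurring on every window — which succeeds only because the block-triangular structure of $J$ confines the $p^{-2}$ factor to the $\Delta Y$-component — and (b) does not contaminate the geometric accumulation of the stochastic-noise and suboptimality terms. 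The envelope $\Gamma_t$ paired with the window-average $\Phi_t$ is precisely the device that makes this accounting work; the remainder is constant-chasing.
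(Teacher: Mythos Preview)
Your proposal is correct and takes a genuinely different route from the paper. The paper does \emph{not} argue by strong induction against the target envelope $\Gamma_t$. Instead it turns the window average $\Phi_{t+\tau}=\tfrac1\tau\sum_{j=0}^{\tau-1}\E\|\Psi_{t+j}\|_F^2$ into a first-class sequence, proves two separate inequalities for it (a ``non-increasing'' bound for sub-$\tau$ shifts and a contraction for $\tau$-shifts), bootstraps these into a recursion for the \emph{sum} $\E\|\Psi_t\|_F^2+\Phi_t$ with contraction factor $(1-\tfrac1{32})$ per $\tau$-block, unrolls that block-recursion down to time $\tau$, handles the first window via the same initial-condition estimate you use, and only at the very end converts $(1-\tfrac1{32})^{\lfloor t/\tau\rfloor}$ into $(1-\tfrac1{64\tau})^t$.

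Your approach collapses all of this into a single induction by observing that the target $\Gamma_t$ itself already satisfies the quasi-monotonicity $\Gamma_l\le r^{-(m-l)}\Gamma_m\le\tfrac{64}{63}\Gamma_m$ for $m-l\le\tau$; this property alone is what lets you control both $\E\|\Psi_{t-\tau}\|_F^2$ and the window average $\Phi_t$ simultaneously by $\tfrac{64}{63}\Gamma_t$, and the explicit constants $22$ and $20$ in $\Gamma_t$ are then tuned so that $(\tfrac78+\tfrac1{128})\tfrac{64}{63}+\tfrac1{21}+\tfrac1{20}=\tfrac{1253}{1260}<1$ closes the loop. This is shorter and more transparent --- the constants are visibly forced by the closure inequality --- whereas the paper's $\Phi_t$ machinery is more modular and would generalise more easily if, say, the one-step recursion had a different shape. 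One minor caveat: your assertion that the error estimate ``does not care whether $k\ge\tau$'' is slightly imprecise --- the regrouping's boundary term $G^{(0)}(k-1)\tilde W^{k-1}$ \emph{is} the one place in the paper's argument that uses $k\ge\tau$ --- but for the first window this term equals $\Delta Y^{(0)}(k-1)\tilde W^{k-1}$ (since $Y^{(0)}=G^{(0)}$ and $\tilde W$ kills the mean), so it is absorbed into the $p^{-2}\|\Delta Y^{(0)}\|_F^2$ part of $A_0$ exactly as you indicate; the paper glosses over the same point in deriving its equation~(20).
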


It remains to combine~\eqref{eq:y} and~\eqref{eq:unrolled_rec} using technique from \cite{koloskova2020unified}. \hfill $\Box$

\paragraph{Proof sketch of Theorem~\ref{thm:consensus}. } 
Using the matrix notation introduced above, the iterations of GT on problem~\eqref{eq:consensus} can be written in a simple form:
\begin{align*}
\begin{pmatrix}
\Delta X^{(t + 1)}\\
\gamma \Delta Y^{(t + 1)}
\end{pmatrix}^\top = 
\begin{pmatrix}
	\Delta X^{(t)}\\
	\gamma \Delta Y^{(t)}
	\end{pmatrix}^\top \underbrace{\begin{pmatrix}
		\tilde W & \gamma \left(W - I\right) \\
		- \tilde W & (1 - \gamma)\tilde W
		\end{pmatrix}}_{J'}\,.
\end{align*}
Similar as above, also the matrix $J'$ is not a contraction operator, but in contrast to $J$ it is diagonalizable: $J'=  Q \Lambda Q^{-1}$ for some $Q$ and diagonal $\Lambda$. It follows that $ \norm{(J')^t}^2 = \norm{Q \Lambda^t Q^{-1}}^2$ is decreasing as $(1 - p)^t \norm{Q}^2\norm{Q^{-1}}^2$. 
With this observation, the proof simplifies. \hfill $\Box$

\section{Experiments}
\label{sec:experiments}

In this section we investigate the tightness of parameters $c$ and $p$ in our theoretical result.

\textbf{Setup.} 
We consider simple quadratic functions defined as $f_i(\xx) = \norm{\xx}^2$, and $\xx^{(0)}$ is randomly initialized from a normal distribution $\cN(0, 1)$. We add artificially stochastic noise to gradients as $\nabla F_i(\xx, \xi) = \nabla f_i(\xx) + \xi$, where $\xi \sim \cN(0, \frac{\sigma^2}{d} I)$ so that Assumption~\ref{a:opt_nc} is satisfied.
We elaborate the details as well as results under other problem setups in Appendix~\ref{sec:exp_details}.

We verify the dependence on graph parameters $p$ and $c$ for the stochastic noise term. We fix the stepsize $\gamma$ to be constant, vary $p$ and $c$ and measure the value of $f(\bar \xx^{(t)} ) - f^\star$ that GT reaches after a large number of steps. According to the theory, GT converges to the level $\cO\left( \frac{\gamma \sigma^2}{n}  + \frac{\gamma^2 \sigma^2}{pc^2}\right)$ in a linear number of steps (to reach higher accuracy, smaller stepsizes must be used). To decouple the second term we need to ensure that the first term is small enough. For that, we take the number of nodes $n$ to be large. In all experiments we ensure that the first term is at least by order of magnitude smaller than the second by comparing the noise level with GT on a fully-connected topology.

\textbf{The effect of $p$.} First, in Figure~\ref{fig:p} we verify the expected $\cO\big(\frac{1}{p}\big)$ dependence when $c$ is a constant. For a fixed $n = 300$ number of nodes with $d = 100$ we vary the value of a parameter $p$ by interpolating the ring topology (with uniform weights) with the fully-connected graph.  The loss value $f(\xx^{(\infty)})$ scales linearly in $\frac{1}{p}$ as can be observed in Figure~\ref{fig:p} and the dependency on $p$ can thus not further be improved.

\begin{figure}[!h]
	\vspace{-1em}
	\centering
	\subfigure[\small $1/p$ (constant $c$).]{
		\includegraphics[width=.315\textwidth,]{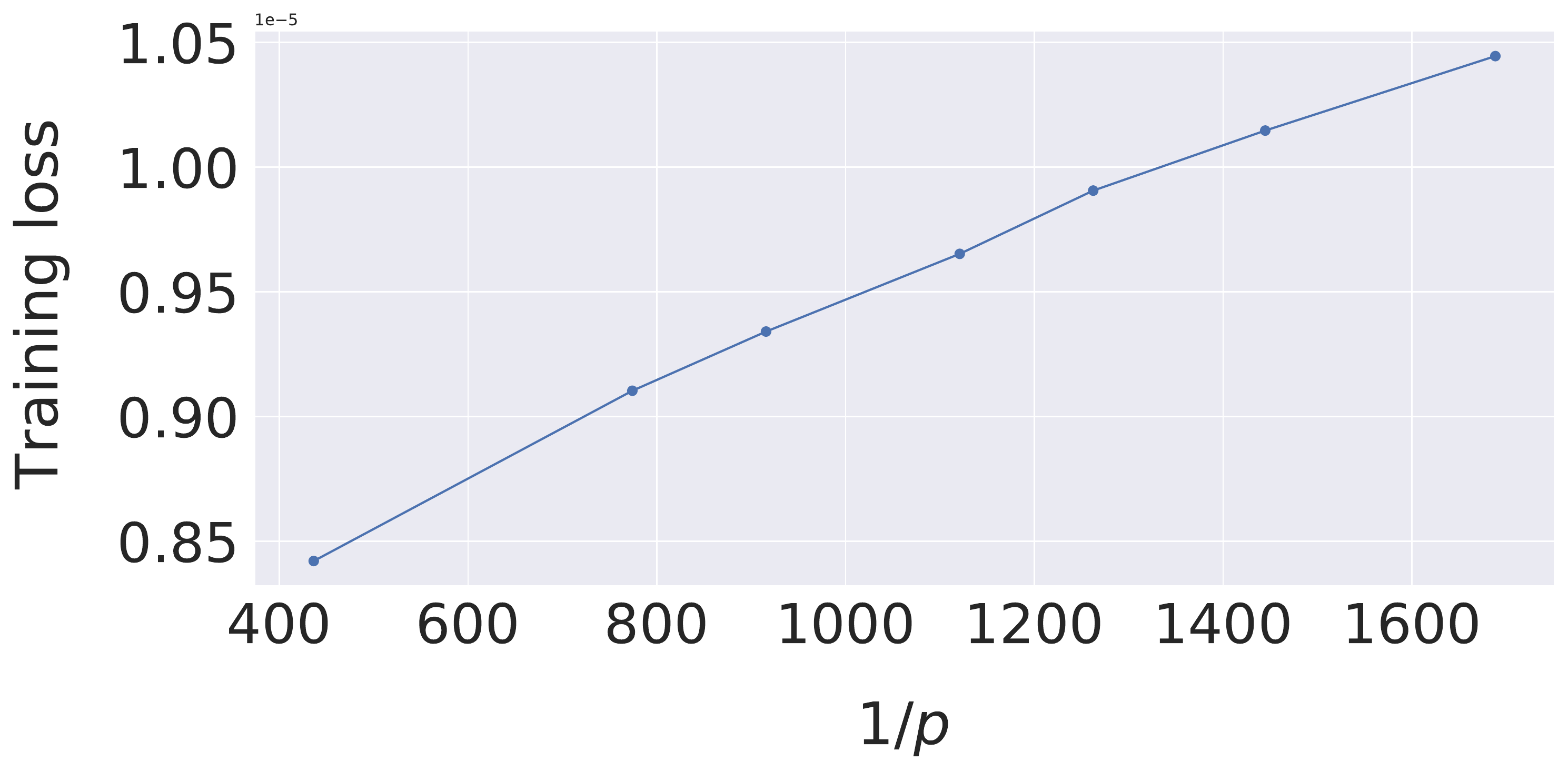}
		\label{fig:}
	}
	\hfill
	\subfigure[\small $1/p^2$ (constant $c$).]{
		\includegraphics[width=.315\textwidth,]{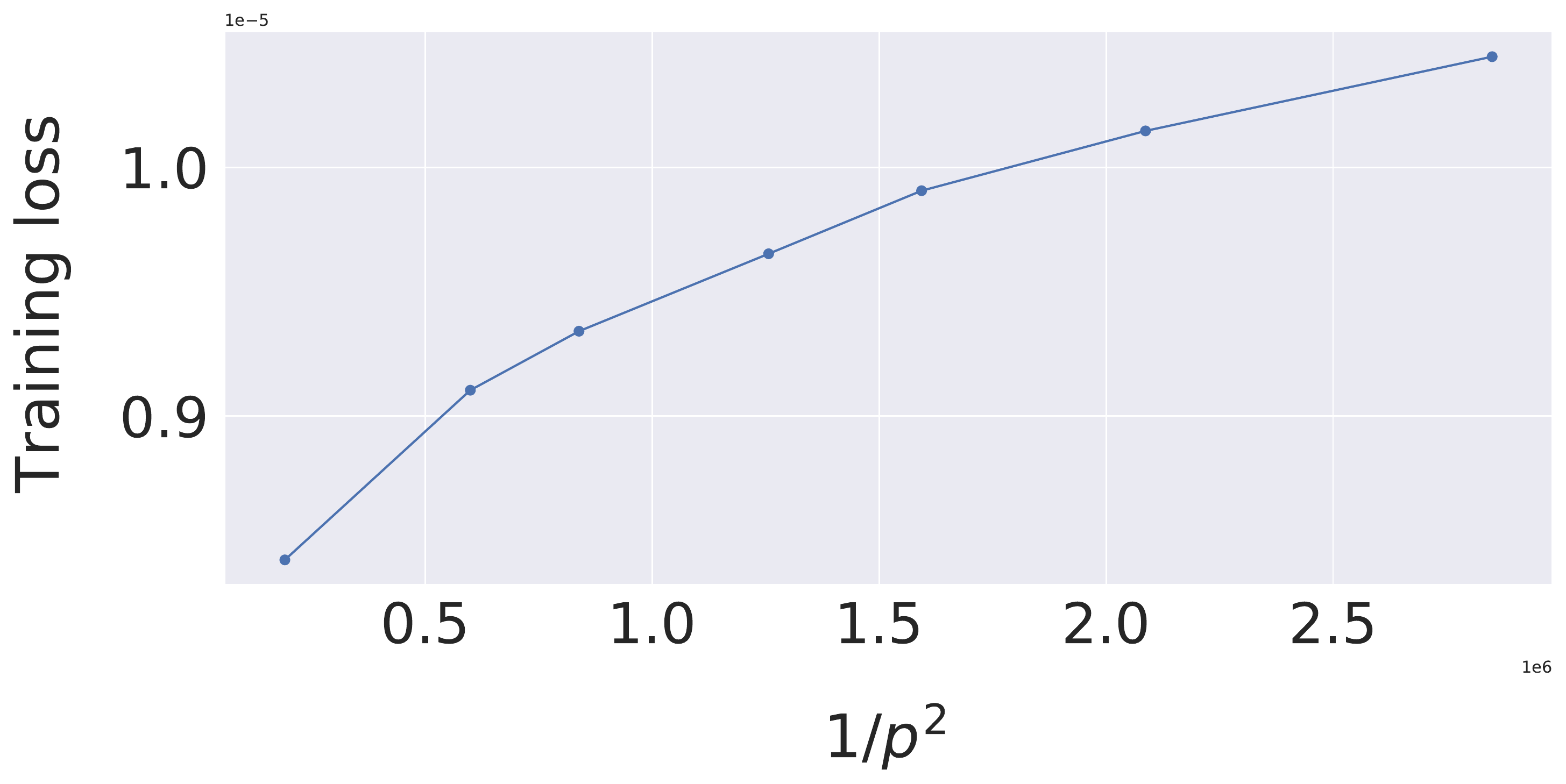}
		\label{fig:}
	}
	\hfill
	\subfigure[\small $1/p^3$ (constant $c$).]{
		\includegraphics[width=.315\textwidth,]{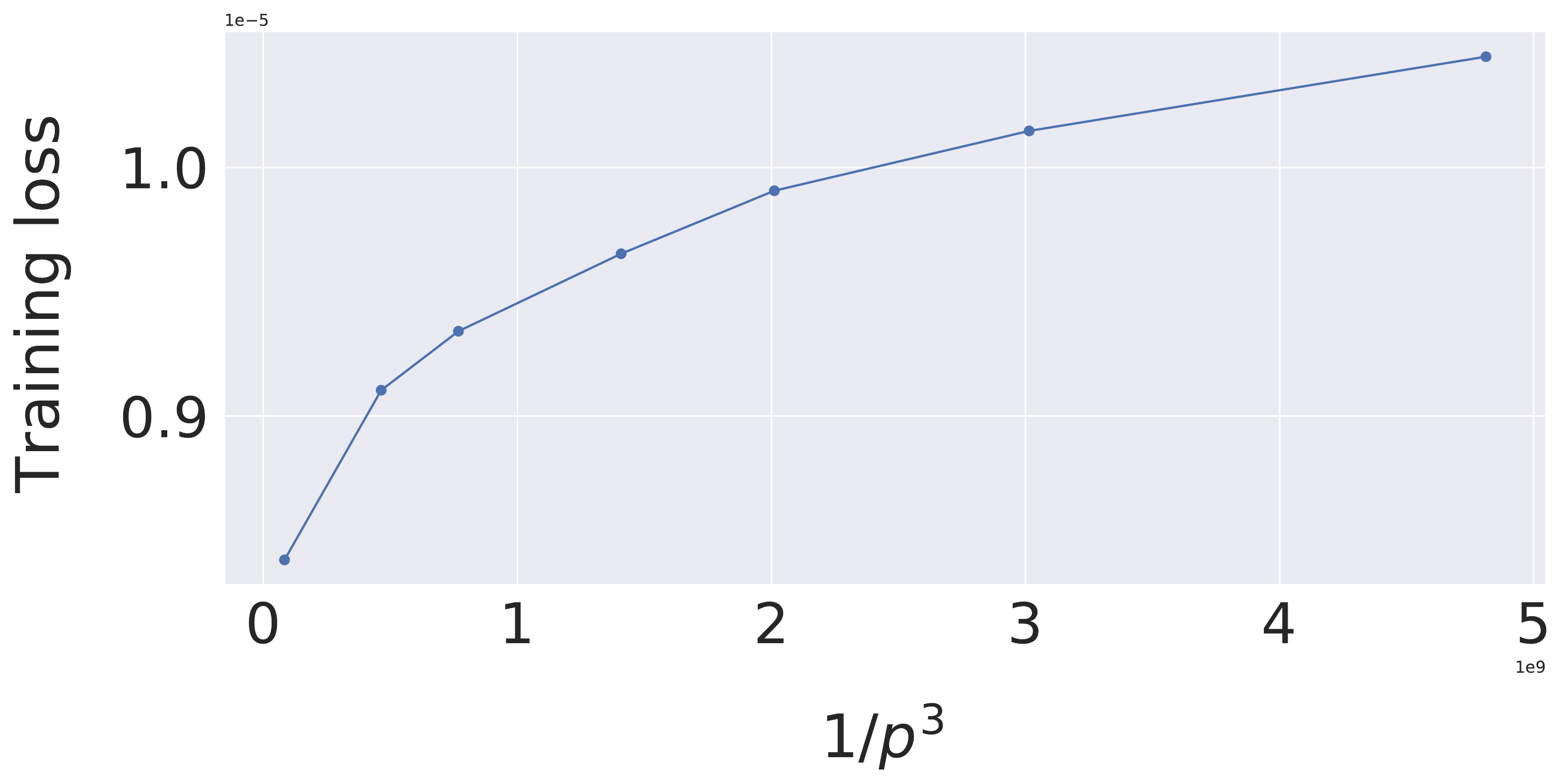}
		\label{fig:}
	}
	\caption{\small
		Impact of $p$ on convergence with the stochastic noise $\sigma^2 = 1$, when $c$ and $\gamma$ are kept constant.
		We see a linear scaling in $\frac{1}{p}$ that verifies the $\cO\big( \frac{1}{p}\big)$, dependence rather than prior predicted $\cO\big( \frac{1}{p^2}\big)$.
	}
	\label{fig:p}
\end{figure}

\textbf{The effect of $c$.} In Figure~\ref{fig:c} we aim to examine the dependence of the term $\cO\big(\frac{1}{pc^2} \big)$ on the parameter $c$, in terms of $1 / (pc^2)$ and $1 / (cp)$.
We take the ring topology on a fixed number of $n=300$ nodes and reduce the self-weights to achieve different values of $c$ (see appendix for details). Otherwise the setup is as above.
The current numerical results may suggest the existence of a potentially better theoretical dependence of the term $c$ (as discussed in Section~\ref{sec:consensus}); we leave the study for future work.

\begin{figure}[!h]
	\vspace{-1em}
	\centering
	\subfigure[\small $1/(pc^2)$ (constant $p$).]{
		\includegraphics[width=.45\textwidth,]{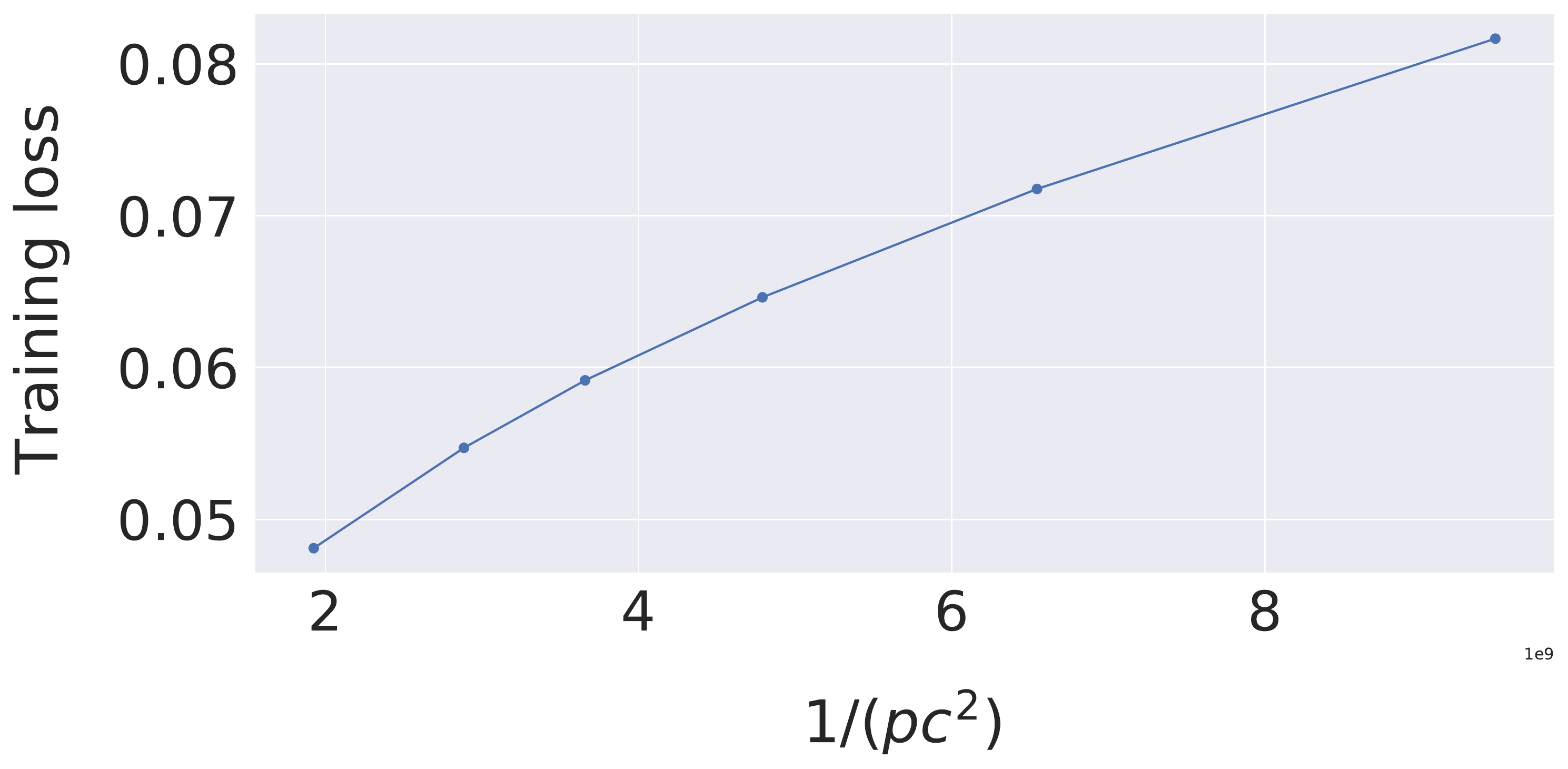}
		\label{fig:quadractics2_constant_p_inverse_c2p}
	}
	\subfigure[\small $1/(cp)$ (constant $p$).]{
		\includegraphics[width=.45\textwidth,]{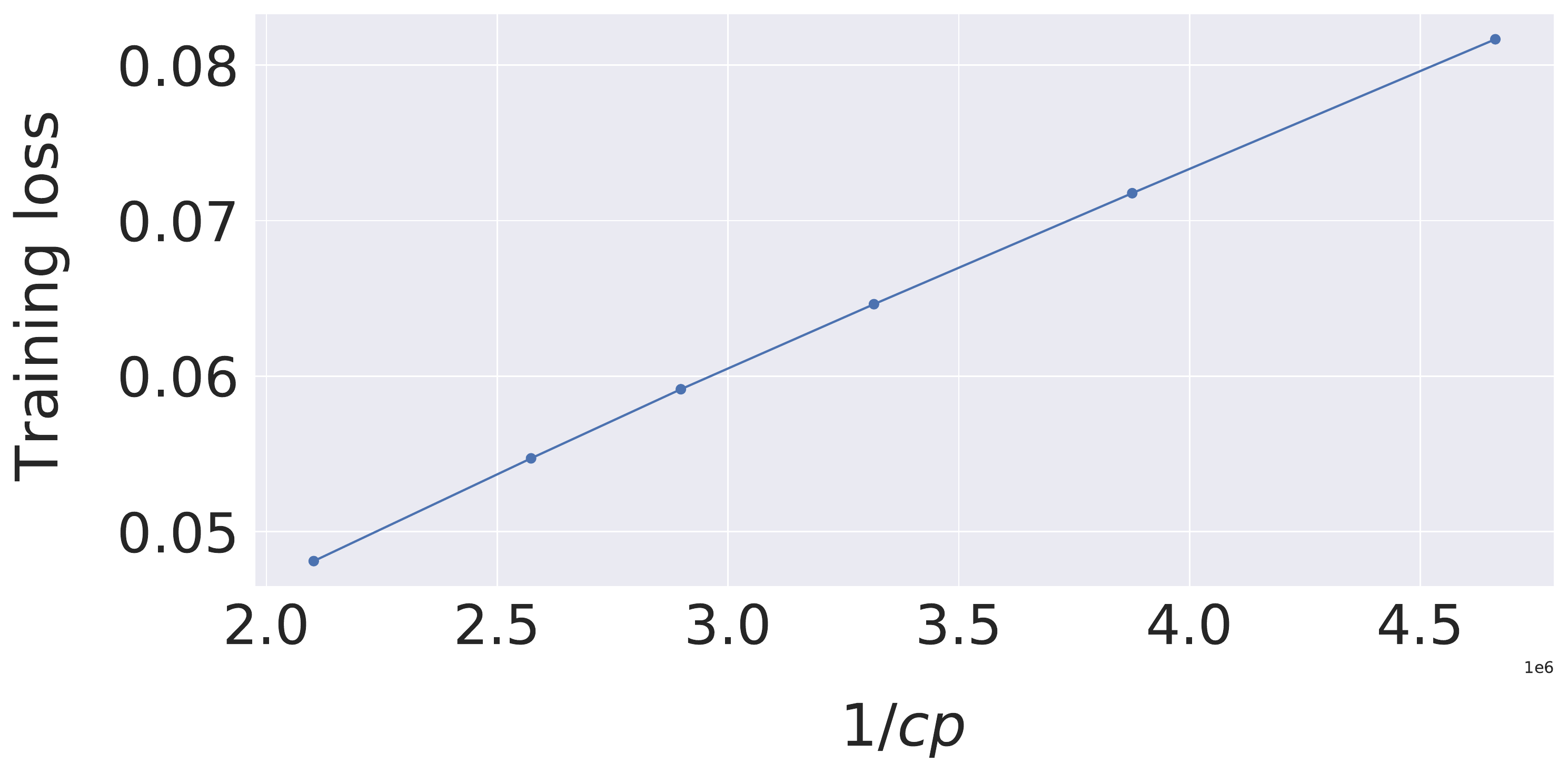}
		\label{fig:quadractics2_constant_p_inverse_cp}
	}

	\caption{\small
		Impact of $c$ on the convergence with the stochastic noise $\sigma^2 = 1$, when $p$ and $\gamma$ are kept constant.
		We see a near linear scaling in $\cO\bigl(\frac{1}{p c}\big)$ while the estimate $\cO \bigl(\frac{1}{p c^2}\bigr)$ appears to be too conservative on this problem.
	}
	\label{fig:c}
\end{figure}

\section{Conclusion}
We have derived improved complexity bounds for the GT method, that improve over all previous results. We verify the tightness of the second term in the convergence rate in numerical experiments. Our analysis identifies that the smallest eigenvalue of the mixing matrix has a strong impact on the performance of GT, however the smallest eigenvalue can often be controlled in practice by choosing large enough self-weights ($w_{ii}$) on the nodes.

Our proof technique might be of independent interest in the community and might lead to improved analyses for other gossip based methods where the mixing matrix is not contracting (for e.g.\ in directed graphs, or using row- or column-stochastic matrices).

\begin{ack}
This project was supported by SNSF grant 200020\_200342, EU project DIGIPREDICT, and a Google PhD Fellowship. The authors thank Martin Jaggi for his support. %
\end{ack}

\medskip

\bibliographystyle{plainnat-fixed}
{\small
\bibliography{reference}
}

% \end{document}

\newpage
\appendix

\section{Proof of Theorem~\ref{thm:consensus} --- Consensus Functions}

We consider functions $f_i(\xx) = \frac{1}{2} \norm{\xx- \boldsymbol{\mu}_i}^2$, where $\xx, \boldsymbol{\mu}_i \in \R^d$. 
 Then $\nabla f_i(\xx) = \xx - \boldsymbol{\mu}_i$. In matrix notation, the GT algorithm in this special case is equivalent to

\begin{align*}
\begin{pmatrix}
X^{(t + 1)}\\
\gamma Y^{(t + 1)}
\end{pmatrix}^\top = \begin{pmatrix}
X^{(t)}\\
\gamma Y^{(t)}
\end{pmatrix}^\top \begin{pmatrix}
W & - W\\
0 & W 
\end{pmatrix}  + \gamma \begin{pmatrix}
0\\
X^{(t + 1)} - X^{(t)}
\end{pmatrix}^\top  =  \begin{pmatrix}
X^{(t)}\\
\gamma Y^{(t)}
\end{pmatrix}^\top\begin{pmatrix}
W & - W\\
\gamma (W - I) & (1 - \gamma)W 
\end{pmatrix} \,.
\end{align*}

The optimal point $\xx^\star = \boldsymbol{\bar\mu} = \frac{1}{n} \sum_{i = 1}^n \boldsymbol{\mu_i}$. Denote $X^\star = \left[\xx^\star, \dots , \xx^\star \right] \in \R^{d\times n}$.  We decompose the error as 
\begin{align*}
\norm{X^{(t)} - X^\star}_F^2 = \underbrace{\norm{X^{(t)} - \bar X^{(t)}}_F^2}_{\text{consensus error}} + \underbrace{\norm{\bar X^{(t)} - X^\star}^2}_{\text{optimization error}} \,.
\end{align*}
\paragraph{For the \textbf{optimization part},} notice that $\bar Y^{(t)} = \bar X^{(t)} - X^\star$. That is because 
\begin{align*}
\bar Y^{(0)} = \nabla f(X^{(0)}) \frac{1}{n}\1\1^\top = \bar X^{(0)} - X^\star, && \bar Y^{(t + 1)} = \bar Y^{(t)} + \bar X^{(t + 1)} - \bar X^{(t)} \,.
\end{align*}
Therefore, the optimization error is equal to 
\begin{align*}
\norm{\bar X^{t} - X^\star}_F^2 &= \norm{\bar X^{(t - 1)} - \gamma \bar Y^{(t - 1)} - X^\star}_F^2 = \norm{(1 - \gamma) \left( \bar X^{(t - 1)}  - X^\star\right)}_F^2 \\ &= (1 - \gamma)^{2t} \norm{\bar X^{(0)} - X^\star}_F^2\,.
\end{align*}
\paragraph{For the \textbf{consensus part},} denoting, $\tilde W = W - \frac{\1\1^\top}{n}$, $\Delta X^{(t)} = X^{(t)} - \bar X^{(t)}$, $\Delta Y^{(t)} = Y^{(t)} - \bar Y^{(t)}$, 
\begin{align*}
\begin{pmatrix}
\Delta X^{(t)}\\
\gamma \Delta Y^{(t)}
\end{pmatrix}^\top = 
\begin{pmatrix}
\Delta X^{(0)}\\
\gamma \Delta Y^{(0)}
\end{pmatrix}^\top \underbrace{\begin{pmatrix}
\tilde{W} & - \tilde{W}\\
\gamma(W - I) & (1 - \gamma)\tilde{W}
\end{pmatrix}^t}_{J'} \,.
\end{align*}
Taking the norm,
\begin{align*}
\norm{\Delta X^{(t)}}_F^2 + \gamma^2 \norm{\Delta Y^{(t)}}_F^2 \leq \norm{J'^t}_2^2 \left( \norm{\Delta X^{(0)}}_F^2 + \gamma^2 \norm{\Delta Y^{(0)}}_F^2\right) \,.
\end{align*}

Lets analyze spectral properties of matrix $J'^t$. Let the eigenvalue decomposition of $W$ be $W = U \Lambda U^\top$, the eigenvalue decomposition of $\tilde W$ is $\tilde{W} = U \tilde \Lambda U^\top$ for diagonal $\tilde \Lambda$.

We can decompose 
\begin{align*}
J' = \begin{pmatrix}
U &  0 \\
0 & U
\end{pmatrix}
\underbrace{\begin{pmatrix}
\tilde \Lambda & - \tilde \Lambda\\
\gamma\left(\Lambda - I \right)& (1 - \gamma)\tilde \Lambda
\end{pmatrix}}_{=: M}\begin{pmatrix}
U^\top &  0 \\
0 & U^\top
\end{pmatrix} \,.
\end{align*}
And, 
\begin{align*}
\norm{J'^t}_2^2 = \norm{\begin{pmatrix}
U &  0 \\
0 & U
\end{pmatrix}
\begin{pmatrix}
\tilde \Lambda & - \tilde \Lambda\\
\gamma\left( \Lambda - I\right) & (1 - \gamma)\tilde \Lambda
\end{pmatrix}^t\begin{pmatrix}
U^\top &  0 \\
0 & U^\top
\end{pmatrix}}_2^2 = \norm{\begin{pmatrix}
\tilde \Lambda & - \tilde \Lambda\\
\gamma\left( \Lambda - I\right) & (1 - \gamma)\tilde \Lambda
\end{pmatrix}^t}_2^2,
\end{align*}
where the last equality is due to unitary property of $U$. 

\begin{lemma}
	To diagonalize a block-diagonal matrix 
	\begin{align*}
	\begin{pmatrix}
	A &  B \\
	C & D
	\end{pmatrix},
	\end{align*}
	where $A = \diag(a_0,\dots a_n) \in R^{n\times n}$, $B = \diag(b_0, \dots, b_n)$, $C = \diag(c_0, \dots, c_n)$, $D = \diag(d_0, \dots, d_n)$. 
	Assume that each of the $2\times 2$ matrices 
	\begin{align*}
	\begin{pmatrix}
	a_i & b_i \\
	c_i & d_i
	\end{pmatrix}
	\end{align*}
	are diagonalizable with 
		\begin{align*}
	\begin{pmatrix}
	a_i & b_i \\
	c_i & d_i
	\end{pmatrix} = \begin{pmatrix}
	q^{(1)}_i & q^{(2)}_i \\
	q^{(3)}_i & q^{(4)}_i
	\end{pmatrix} \cdot \begin{pmatrix}
	d^{(1)}_i &  0 \\
	0 & d^{(2)}_i
	\end{pmatrix} \cdot \begin{pmatrix}
	q^{(-1)}_i & q^{(-2)}_i \\
	q^{(-3)}_i & q^{(-4)}_i
	\end{pmatrix}
	\end{align*}
	Then the original matrix is diagonalizable and its diagonalization is equal to 
	\begin{align*}
		\begin{pmatrix}
	A &  B \\
	C & D
	\end{pmatrix} = \begin{pmatrix}
	Q_1 & Q_2\\
	Q_3 & Q_4
	\end{pmatrix}\cdot \begin{pmatrix}
	D_1 & 0\\
	0 & D_2
	\end{pmatrix}\cdot \begin{pmatrix}
	Q_{-1} & Q_{-2}\\
	Q_{-3} & Q_{-4}
	\end{pmatrix},
	\end{align*}
	where each $Q_l = \diag\left( q_1^{(l)}, \dots, q_n^{(l)} \right)$, $D_l = \diag\left(d_1^{(l)}, \dots d_n^{(l)}\right)$. 
	
\end{lemma}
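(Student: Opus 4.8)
The plan is to reduce the $2n \times 2n$ problem to the $n$ independent $2\times 2$ problems by a single permutation similarity, diagonalize each small block, and reassemble. Let $P \in \R^{2n \times 2n}$ be the permutation matrix that groups coordinate $i$ with coordinate $n+i$ for every $i \in [n]$ (so $P$ interleaves the ``top'' and ``bottom'' halves of $\R^{2n}$). First I would verify, by a direct computation that only uses that $A,B,C,D$ are diagonal, the identity
\[
P \begin{pmatrix} A & B \\ C & D \end{pmatrix} P^\top = \bigoplus_{i=1}^n \begin{pmatrix} a_i & b_i \\ c_i & d_i \end{pmatrix} \,,
\]
a block-diagonal matrix with $n$ blocks of size two. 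More generally, $P (\,\cdot\,) P^\top$ carries any matrix whose four diagonal blocks are themselves diagonal to a direct sum of $2 \times 2$ blocks, and $P^\top (\,\cdot\,) P$ inverts this operation: it sends a direct sum of $2\times 2$ matrices with entries $s_i, t_i, u_i, v_i$ (at positions $(1,1),(1,2),(2,1),(2,2)$) to the matrix whose diagonal blocks are $\diag(s_1,\dots,s_n)$, $\diag(t_1,\dots,t_n)$, $\diag(u_1,\dots,u_n)$, $\diag(v_1,\dots,v_n)$. This is the only piece of genuine index bookkeeping in the proof.

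Next I would invoke the hypothesis that each small block diagonalizes, $\bigl(\begin{smallmatrix} a_i & b_i \\ c_i & d_i \end{smallmatrix}\bigr) = q_i \Lambda_i q_i^{-1}$, where $q_i$ is the given $2\times 2$ matrix with entries $q_i^{(1)}, \dots, q_i^{(4)}$, $q_i^{-1}$ has entries $q_i^{(-1)}, \dots, q_i^{(-4)}$, and $\Lambda_i = \diag(d_i^{(1)}, d_i^{(2)})$. Since forming a direct sum commutes with inversion, $\bigl( \bigoplus_i q_i \bigr)^{-1} = \bigoplus_i q_i^{-1}$, and hence
\[
\bigoplus_{i=1}^n \begin{pmatrix} a_i & b_i \\ c_i & d_i \end{pmatrix} = \Bigl( \bigoplus_i q_i \Bigr) \Bigl( \bigoplus_i \Lambda_i \Bigr) \Bigl( \bigoplus_i q_i \Bigr)^{-1} \,.
\]
Conjugating this identity by $P^\top$ on the left and $P$ on the right, and applying the de-interleaving rule of the first step to each of the three factors, yields exactly
\[
\begin{pmatrix} A & B \\ C & D \end{pmatrix} = \begin{pmatrix} Q_1 & Q_2 \\ Q_3 & Q_4 \end{pmatrix} \begin{pmatrix} D_1 & 0 \\ 0 & D_2 \end{pmatrix} \begin{pmatrix} Q_{-1} & Q_{-2} \\ Q_{-3} & Q_{-4} \end{pmatrix} \,,
\]
with $Q_l = \diag(q_1^{(l)}, \dots, q_n^{(l)})$ and $D_l = \diag(d_1^{(l)}, \dots, d_n^{(l)})$ as in the statement. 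Finally, since both $P^\top (\,\cdot\,) P$ and $\bigoplus_i (\,\cdot\,)$ preserve inverses, the third factor above is the inverse of the first, so this is a genuine eigendecomposition; in particular the original matrix is diagonalizable, which is the claim.

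I do not expect a real obstacle here: the argument is entirely formal once the permutation identity of the first step is in place, and that identity is a routine (if mildly tedious) check on indices. The one point worth stating carefully is why $\bigoplus_i q_i$ is invertible --- this is precisely the content of the assumption that every $2\times 2$ block is diagonalizable --- which ensures that no difficulty arises even when the eigenvalues $d_i^{(l)}$ happen to coincide across different indices $i$.
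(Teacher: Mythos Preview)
Your proof is correct. The paper actually states this lemma without proof---it is used as a routine linear algebra tool in the proof of Theorem~\ref{thm:consensus}, and the authors immediately proceed to verify that the relevant $2\times 2$ matrices $M_i$ are diagonalizable rather than justifying the lemma itself. Your permutation-similarity argument (interleave coordinates so the matrix becomes $\bigoplus_i \bigl(\begin{smallmatrix} a_i & b_i \\ c_i & d_i\end{smallmatrix}\bigr)$, diagonalize blockwise, de-interleave) is the standard and cleanest way to establish the claim, and you correctly flag the only nontrivial point, namely that invertibility of $\bigoplus_i q_i$ is exactly the diagonalizability hypothesis on each block.

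An equally valid and slightly more pedestrian alternative, in the spirit of how the paper uses the result, is simply to check the asserted identity by block multiplication: since all $Q_l$ and $D_l$ are diagonal they commute pairwise, so for instance the top-left block of the triple product equals $Q_1 D_1 Q_{-1} + Q_2 D_2 Q_{-3} = \diag\bigl(q_i^{(1)} d_i^{(1)} q_i^{(-1)} + q_i^{(2)} d_i^{(2)} q_i^{(-3)}\bigr) = \diag(a_i) = A$, and similarly for the other three blocks and for the inverse relation. Your approach has the advantage of making the underlying structure transparent and of explaining \emph{why} the formula has this shape, rather than merely verifying it.
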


We need to show that the following $2\times2 $ matrices are diagonalizable.
\begin{align*}
M_i := \begin{pmatrix}
\lambda_i & -\lambda_i \\
\gamma \left( \lambda_i - 1\right) & (1 - \gamma) \lambda_i
\end{pmatrix},
\end{align*} 
where the $\lambda_i$ are eigenvalues of the matrix $\tilde{W}$.
The eigenvalues of $M_i$ are
\begin{align*}
 \lambda(M_i) = \left\{\lambda_i- \frac{\gamma \lambda_i}{2} - \frac{1}{2} \sqrt{\gamma \lambda_i} \sqrt{4 + (\gamma - 4) \lambda_i}, \lambda_i - \frac{\gamma \lambda_i}{2} + \frac{1}{2} \sqrt{\gamma \lambda_i} \sqrt{4 + (\gamma - 4) \lambda_i} \right\}\,,
\end{align*}
which are distinct for $\gamma > 0$, therefore the matrix is diagonalizble (over $\mathbb{C}$).

If $\lambda_i$ is positive, then by choosing $\gamma \leq 1-\lambda_i$,
\begin{align*}
\abs{\lambda(M_i)} \leq  \frac{1}{3}\lambda_i + \frac{2}{3} \,.
\end{align*}
If $\lambda_i$ is negative, then, then by choosing $\gamma \leq 1-\abs{\lambda_i}$,
\begin{align*}
 \abs{\lambda(M_i)} \leq \frac{1}{3} \abs{\lambda_i} + \frac{2}{3} \,.
\end{align*} 
We do not give the full formal prove of these two bounds. First we note that $\abs{(M_i)}$ is monotone in $\gamma$, i.e.\ the absolute value increases in $\gamma$. Therefore it is enough to check that it holds $\abs{\lambda(M_i)} \leq \frac{1}{3} \abs{\lambda_i} + \frac{2}{3}$ for $\gamma = 1- \abs{\lambda_i}$. We visualize these upper bounds with Mathematica~\cite{Mathematica} in Figure~\ref{fig:visualproof}.
\begin{figure}[H] \centering
\includegraphics[scale=0.5]{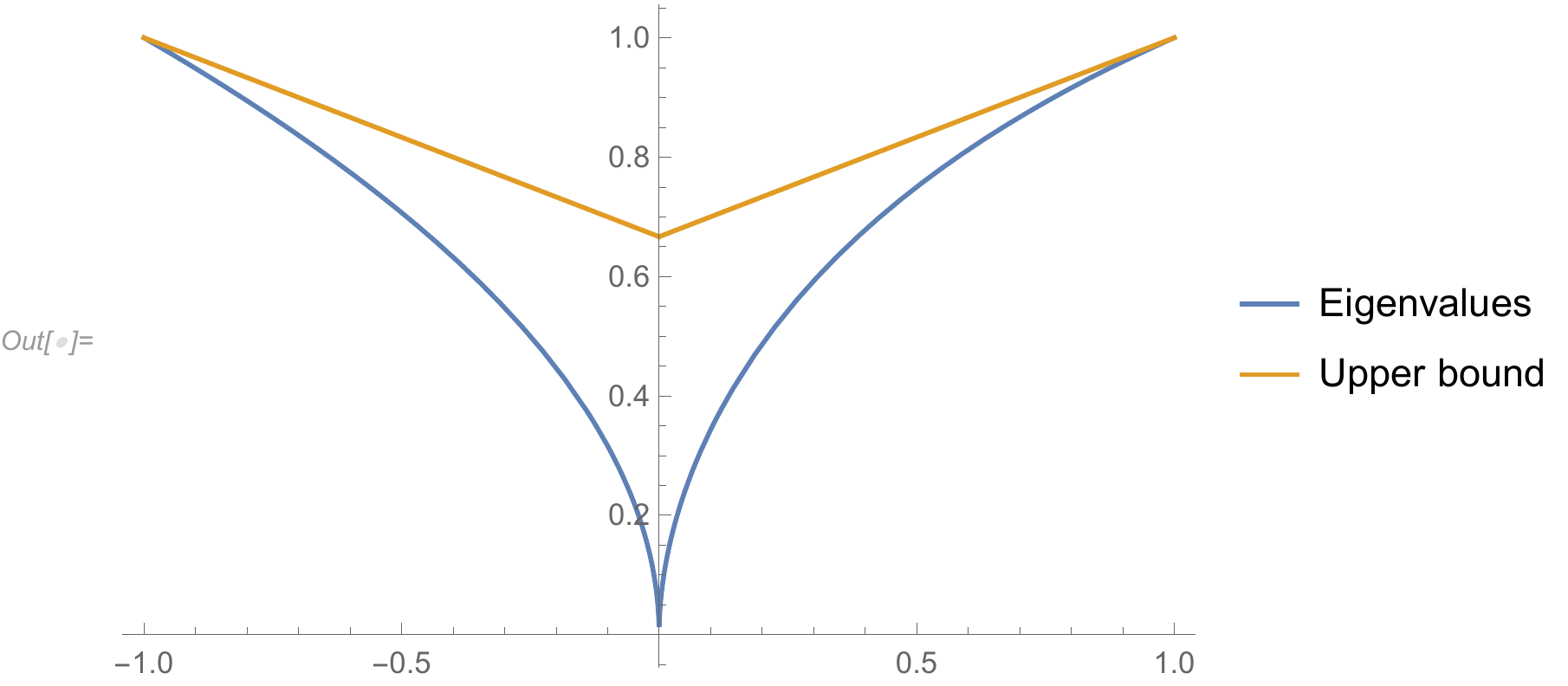}
\caption{The upper bound $\frac{1}{3}\abs{\lambda_i} + \frac{2}{3}$ (yellow) vs.\ the true $\abs{\lambda(M_i)}$ for the choice $\gamma = 1-\abs{\lambda_i}$.}
\label{fig:visualproof}
\end{figure}
This concludes the proof.

\section{Proof of Theorem~\ref{thm:GT-better-upper-bound} --- General Case}

We first re-state theorem~\ref{thm:GT-better-upper-bound} in terms of number of iterations $T$
\begin{theorem}\label{thm:GT-upper-bound-T}
	For GT algorithm~\ref{alg:gt} with a mixing matrix as in Definition~\ref{def:W}, under Assumptions~\ref{a:W}, \ref{a:lsmooth_nc}, \ref{a:opt_nc}, after $T$ iterations, if $T > \frac{2}{p}\log\left(\frac{50}{p} (1+ \log \frac{1}{p}) \right)$, there exists a constant stepsize $\gamma_t = \gamma$ such that the error is bounded as 
	\\
	\textbf{Non-convex:} 
	\begin{align*}
	\frac{1}{T + 1} \sum_{t = 0}^T \norm{\nabla f(\bar\xx^{(t)})}_2^2 \leq \tilde\cO\left(\sqrt{\frac{LF_0 \sigma^2}{nT}} +  \left(\frac{\sigma L F_0}{(\sqrt{p} c + p\sqrt{n}  )T} \right)^{\nicefrac{2}{3}} + \frac{L (F_0 +L \tilde R_0^2) }{pc T}\right)\,,
	\end{align*}\\
	\textbf{Strongly-convex:}
	Under additional Assumption \ref{a:mu-convex} with $\mu > 0$, it holds
	\begin{align*}
	\sum_{t = 0}^T \frac{w_t}{W_T} \left[\E f(\bar\xx^{(t)}) - f^\star \right] + \frac{\mu}{2} R_{T + 1}\leq  \tilde\cO\left(\frac{\sigma^2}{\mu nT} +  \frac{L  \sigma^2}{\mu^2 p c^2 T^2}+ \frac{L (R_0^2 + \frac{L}{\mu}\tilde R_0^2)}{p c} \exp\left[ -\frac{\mu p c T}{L} \right]\right)\,,
	\end{align*}
	\\
	\textbf{Weakly-convex:}
	Under Assumptions \ref{a:mu-convex} with $\mu \geq 0$, it holds
	\begin{align*}
	\frac{1}{T + 1}\sum_{t = 0}^T \left[\E f(\bar\xx^{(t)}) - f^\star \right] \leq  \tilde\cO\left(\sqrt{\frac{R_0^2 \sigma^2}{nT}} +  \left(\frac{\sigma \sqrt{L} R_0^2 }{\sqrt{p} c T} \right)^{\nicefrac{2}{3}} + \frac{L (R_0^2 + \tilde R_0^2) }{pc T} \right)\,,
	\end{align*}
	where $F_0 = f(\bar\xx^{(0)}) - f^\star$, $R_t = \norm{\xx^{(t)} - \xx^\star}$, $t \in \{0, T + 1\}$,  $\tilde R_0^2 = \frac{1}{n} \sum_{i = 1}^n \norm{\xx_i^{(0)} - \bar \xx^{(0)}}^2 + \frac{1}{n} \sum_{i = 1}^n \norm{\yy_i^{(0)} - \bar \yy^{(0)}}^2$.
\end{theorem}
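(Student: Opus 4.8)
The plan is to assemble the three pieces set up in Section~\ref{sec:sketch}. First I would invoke Lemma~\ref{lem:average} to see that the mean iterate $\bar\xx^{(t)}$ evolves as a perturbed mini-batch SGD step, and then apply a descent lemma in each regime: Lemma~\ref{lem:descent} in the (strongly) convex case, and the standard $L$-smooth non-convex descent estimate otherwise. In the strongly convex case this is exactly~\eqref{eq:x},
\begin{align*}
\E\norm{\bar\xx^{(t+1)}-\xx^\star}^2 \le \Big(1-\tfrac{\gamma\mu}{2}\Big)\E\norm{\bar\xx^{(t)}-\xx^\star}^2 + \tfrac{\gamma^2\sigma^2}{n} - \gamma e_t + 3\gamma L\cdot\tfrac1n\sum_{i=1}^n\E\norm{\bar\xx^{(t)}-\xx_i^{(t)}}^2,
\end{align*}
and the last term is exactly (part of) the consensus distance $\norm{\Psi_t}_F^2$; the non-convex and weakly convex versions have the same shape with $e_t=\norm{\nabla f(\bar\xx^{(t)})}^2$ resp.\ $e_t=\E f(\bar\xx^{(t)})-f^\star$ (and $L^2$ replacing $L$ in the non-convex consensus term).

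The crucial step is to close the feedback loop. Lemma~\ref{lem:unroll_rec} bounds $\E\norm{\Psi_t}_F^2$ by the initial term $A_0$ (decaying geometrically at rate $1-\tfrac1{64\tau}$ and carrying the $\tilde R_0^2$ and $1/p^2$ dependence), a noise term of order $\tfrac{\tau\gamma^2}{c^2}n\sigma^2$, and a geometric average of the \emph{same} quantities $e_j=\E[f(\bar\xx^{(j)})-f^\star]$ that appear on the left of the descent recursion. Substituting this bound, the coefficient in front of each $e_j$ becomes of order $\gamma L\cdot\tfrac{\tau L\gamma^2}{c^2}$, and after summing the two coupled geometric series it contributes in total $\cO\big(\tfrac{\tau^2 L^2\gamma^3}{c^2}\big)\cdot e_j$. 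Choosing $\gamma$ small enough, namely $\gamma\lesssim\tfrac{c}{L\tau}$ (which also respects $\gamma\le\tfrac1{12L}$ from Lemma~\ref{lem:descent} and $\gamma<\tfrac{c}{B_3L\tau}$ from Lemma~\ref{lem:consensus}), makes this dominated by the negative $-\gamma e_t$ slack, so the consensus-induced $e_j$ terms can be absorbed, leaving a clean scalar recursion
\begin{align*}
\E r_{t+1} \le \Big(1-\tfrac{\gamma\mu}{2}\Big)\E r_t - \tfrac{\gamma}{2}e_t + \cO\Big(\tfrac{\gamma^2\sigma^2}{n} + \tfrac{\tau L^2\gamma^4}{c^2}n\sigma^2\Big) + (\text{geometrically decaying initial term}),
\end{align*}
with $r_t=\norm{\bar\xx^{(t)}-\xx^\star}^2$, and analogously in the other two regimes.

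Finally I would feed this recursion into the standard stepsize-tuning lemmas of~\cite{koloskova2020unified}. In the strongly convex case one optimises $\gamma$ over $\gamma\le\min\{\tfrac1{12L},\tfrac{c}{\sqrt{7B_1}L\tau}\}$ with $\tau=\tilde\Theta(1/p)$, trading the statistical term $\tfrac{\sigma^2}{\mu nT}$ against the consensus-noise term $\tfrac{L\sigma^2}{\mu^2pc^2T^2}$ and the linearly decaying term $\exp(-\Theta(\gamma\mu T))$, which at the stepsize boundary $\gamma\asymp\tfrac{pc}{L}$ (up to $\mathrm{polylog}(1/p)$ factors absorbed by $\tilde\cO$) becomes $\exp(-\Theta(\tfrac{\mu pcT}{L}))$; summing the weighted recursions then yields the stated bound on $\sum_t\tfrac{w_t}{W_T}e_t+\tfrac\mu2 R_{T+1}$. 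The weakly convex and non-convex cases use the corresponding $\varepsilon^{-2}$-rate tuning lemmas from~\cite{koloskova2020unified}, which turn the $\tfrac{\gamma^2\sigma^2}{n}$ and $\tfrac{\tau L^2\gamma^4}{c^2}n\sigma^2$ terms into the $\sqrt{LF_0\sigma^2/(nT)}$ and $\big(\sigma LF_0/((\sqrt p\,c+p\sqrt n)T)\big)^{2/3}$ terms — the $\sqrt n$ inside the cube root coming from the competition between those two noise terms when $\gamma$ is large — plus the $L(F_0+L\tilde R_0^2)/(pcT)$ transient term.

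The main obstacle is the second step: verifying that the circular $e_j$-dependence actually closes. This requires simultaneously controlling the two nested geometric decays — rate $1-\tfrac1{64\tau}$ from Lemma~\ref{lem:unroll_rec} versus rate $1-\tfrac{\gamma\mu}{2}$ from the descent recursion — using that $\gamma\mu$ is far below $1/\tau$, so that after the double summation the positive $e_j$ terms are uniformly dominated by the negative slack for a stepsize that is still large enough to deliver the claimed rates. The appearance of $c$ (rather than $p$) in the final bound is inherited entirely from the $1/c^2$ factors in Lemmas~\ref{lem:consensus} and~\ref{lem:unroll_rec}, which come from grouping equal gradients in the error term via Lemma~\ref{lem:norm_estimate}.
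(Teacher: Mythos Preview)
Your plan is essentially the paper's proof. The paper packages the ``close the feedback loop'' step into a separate lemma (Lemma~\ref{lem:cons+descent}): one sums the unrolled consensus bound~\eqref{eq:unrolled_rec} against weights $w_t$ chosen to be $64\tau$-slow increasing, which is precisely what turns the double geometric sum into a single $\cO(\tau)\sum_t w_te_t$ that gets absorbed under $\gamma\lesssim c/(L\tau)$. After that, Lemma~\ref{lem:cons+descent2} substitutes into the descent recursion and the stepsize-tuning Lemmas~\ref{lem:rate_strongly_convex}, \ref{lem:rate_weakly_convex} finish each case, exactly as you describe.

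Two corrections. First, the noise coefficient in your ``clean scalar recursion'' should be $\cO(\tau L\gamma^3\sigma^2/c^2)$ (or $\tau L^2\gamma^3\sigma^2/c^2$ in the non-convex case), not $\tau L^2\gamma^4 n\sigma^2/c^2$: the consensus noise $\tfrac{B_2\tau\gamma^2}{c^2}n\sigma^2$ from Lemma~\ref{lem:unroll_rec} is multiplied by $3\gamma L/n$ (resp.\ $\gamma L^2/n$) from the descent lemma, and the $n$'s cancel. With the correct power one gets $B=\Theta(L\sigma^2\tau/c^2)$ in the notation of Lemma~\ref{lem:rate_strongly_convex}, and $B/(a^2T^2)$ delivers the claimed $L\sigma^2/(\mu^2pc^2T^2)$ term directly.

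Second, and more substantively: Lemma~\ref{lem:unroll_rec} as stated relies inside its proof (via Lemma~\ref{lem:consensus}) on the convexity bound $\|\nabla f(\bar X)-\nabla f(X^\star)\|_F^2\le 2Ln\,e_t$, so it does \emph{not} apply verbatim with $e_t=\|\nabla f(\bar\xx^{(t)})\|^2$. The non-convex case needs its own consensus recursion (the paper's Lemma~\ref{lem:consensus_nc}), in which the term $\|\nabla f(\bar X^{(t+j)})-\nabla f(\bar X^{(t+j-1)})\|_F^2$ is controlled through $\|(k-j)\tilde W^{k-j}\|\le 2/p$ (Lemma~\ref{lem:norm_estimate2}) rather than through the $c$-based Lemma~\ref{lem:norm_estimate}; this produces an additional $\tau^2\sigma^2$ noise term alongside $\tfrac{\tau n}{c^2}\sigma^2$. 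It is \emph{this} extra $\tau^2$ contribution---which after the $L^2/n$ multiplier becomes $\sigma^2/(p^2 n)$---that yields the $p\sqrt n$ in the non-convex denominator, not, as you write, a competition between $\gamma^2\sigma^2/n$ and the $c$-dependent consensus noise.
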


\subsection{Useful Inequalities}
\begin{proof}[Proof of Lemma~\ref{lemma:key}]
By monotonicity, it suffices to check the inequality for $i=\tau$. By using $(1-p)^i \leq e^{-ip}$ and plugging $\tau$ into~\eqref{eq:i-p} it follows:
\begin{align*}
\norm{J^i}^2 &\leq 
e^{-\tau p} (1+\tau^2) \leq \frac{p^2}{50^2 (1+\log\frac{1}{p})^2}\left(1 + \frac{(2(\log(50) + \log(\frac{1}{p}(1+\log\frac{1}{p})))^2}{p^2} \right) \\ &\leq \frac{1}{50^2} + \frac{1}{10} + \frac{1}{4}
\end{align*}
with $\log(\frac{1}{p}(1+ \log\frac{1}{p}) \leq \log\frac{1}{p}+\log\log\frac{1}{p} \leq 2\log\frac{1}{p}$, then $(\log(4)+2\log\frac{1}{p})^2\leq 2\log 4 + 8 \log \frac{1}{p}$, and
$(4\log 50 + 16 \log \frac{1}{p}))^2 \leq (128 + 512 \log \frac{1}{p})$.
\end{proof}

\begin{lemma}
\label{lemma:ilambdai}
Let $\lambda \in (-1,1)$ with $\abs{\lambda} = 1-\alpha$, for $0 < \alpha < 1$. Then $\abs{i \lambda^i} \leq \frac{1}{\alpha}$ for all $i \geq 0$.
\end{lemma}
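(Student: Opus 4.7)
The plan is to bound $|i\lambda^i| = i(1-\alpha)^i$ (the case $i=0$ is trivial) by comparing with its continuous analogue. First I would use the standard inequality $(1-\alpha)^i \leq e^{-\alpha i}$, valid for $\alpha \in (0,1)$ and $i \geq 0$, to reduce the claim to showing $i e^{-\alpha i} \leq \frac{1}{\alpha}$.

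Next, I would analyze the function $g(x) = x e^{-\alpha x}$ on $[0, \infty)$. Differentiating gives $g'(x) = (1 - \alpha x) e^{-\alpha x}$, so $g$ attains its maximum at $x = 1/\alpha$, with maximum value $g(1/\alpha) = \frac{1}{e\alpha}$. Since $\frac{1}{e\alpha} \leq \frac{1}{\alpha}$, the desired inequality follows immediately for every integer $i \geq 0$.

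The argument is essentially a one-line calculus exercise; there is no real obstacle. The only mildly delicate point is ensuring the inequality $(1-\alpha)^i \leq e^{-\alpha i}$ is correctly stated for both positive and negative $\lambda$ (the statement only requires $|\lambda| = 1 - \alpha$, and since we take absolute values in $|i \lambda^i|$ the sign of $\lambda$ is irrelevant). Hence the bound $|i\lambda^i| \leq \frac{1}{e\alpha} \leq \frac{1}{\alpha}$ holds uniformly in $i \geq 0$, completing the proof.
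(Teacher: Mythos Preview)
Your proof is correct but takes a different route from the paper. The paper's argument is purely elementary: it observes that $i(1-\alpha)^i = \sum_{j=1}^{i}(1-\alpha)^i \leq \sum_{j=1}^{i}(1-\alpha)^j \leq \frac{1-\alpha}{\alpha}$, using only that $(1-\alpha)^i \leq (1-\alpha)^j$ for $j \leq i$ and the geometric series formula. Your approach instead passes through the exponential bound $(1-\alpha)^i \leq e^{-\alpha i}$ and then optimizes the continuous function $x e^{-\alpha x}$ via calculus, obtaining the constant $\tfrac{1}{e\alpha}$. Both arguments are short and yield a bound strictly better than the stated $\tfrac{1}{\alpha}$ (the paper gets $\tfrac{1-\alpha}{\alpha}$, you get $\tfrac{1}{e\alpha}$). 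The paper's version has the small advantage of avoiding any analytic machinery, while yours gives a sharper constant for small~$\alpha$ and makes the extremal index $i \approx 1/\alpha$ explicit. Either is perfectly adequate here.
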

\begin{proof}
\hfill $\displaystyle
 \abs{i \lambda^i} 
 \leq i (1-\alpha)^i  \leq \sum_{j=1}^i (1-\alpha)^j  \leq \frac{1-\alpha}{\alpha}  \,.$ \hfill \null
\end{proof}

\begin{lemma}[fact]
\label{fact:1}
Let $W$ be a symmetric matrix with eigenvalues $\lambda_1(W)\geq \dots \lambda_n(W)$. Then $\norm{W}^2 = \max_i \lambda_i^2(W)$.
\end{lemma}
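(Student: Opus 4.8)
The statement is the standard identification of the spectral (operator $2$-)norm of a real symmetric matrix with its largest eigenvalue in absolute value. The plan is to invoke the spectral theorem and the unitary invariance of the operator norm, or equivalently to pass to $W^2$.

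First I would apply the spectral theorem: since $W$ is real and symmetric, there is an orthogonal matrix $U$ (i.e. $U^\top U = U U^\top = \mI$) and a diagonal matrix $\Lambda = \diag(\lambda_1(W), \dots, \lambda_n(W))$ with $W = U \Lambda U^\top$. Then, using that $\norm{\cdot}$ denotes the operator norm induced by the Euclidean norm and that this norm is invariant under left- and right-multiplication by orthogonal matrices (because orthogonal maps preserve Euclidean lengths), I get $\norm{W} = \norm{U \Lambda U^\top} = \norm{\Lambda}$. For a diagonal matrix it is immediate that $\norm{\Lambda} = \max_i \abs{\lambda_i(W)}$: the upper bound follows from $\norm{\Lambda \xx}^2 = \sum_i \lambda_i(W)^2 x_i^2 \le (\max_i \lambda_i(W)^2)\norm{\xx}^2$, and the bound is attained on the corresponding standard basis vector. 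Squaring gives $\norm{W}^2 = \max_i \lambda_i^2(W)$.

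An alternative route, which I would mention as a remark, is to write $\norm{W}^2 = \lambda_{\max}(W^\top W) = \lambda_{\max}(W^2)$, valid for any matrix for the first equality and using symmetry for the second; since the eigenvalues of $W^2$ are exactly $\lambda_i^2(W)$, one again obtains $\norm{W}^2 = \max_i \lambda_i^2(W)$. Either way the argument is a few lines.

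I do not expect any genuine obstacle here: the only points requiring (trivial) care are stating unitary invariance of the operator norm correctly and noting that the eigenvalues of $W^2$ are the squares of those of $W$, both of which are elementary consequences of the spectral decomposition. The lemma is used in the excerpt merely to justify estimates such as $\norm{J^i}^2 \le (1-p)^i + i^2(1-p)^i$ via $\norm{\tilde W^i}^2 = \max_i \lambda_i(\tilde W)^{2i} \le (1-p)^i$, so only this clean closed form is needed.
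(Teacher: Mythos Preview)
Your proof is correct and entirely standard; the paper itself does not give a proof of this lemma at all (it is labeled ``fact'' and stated without argument), so your spectral-theorem justification is exactly the kind of one-line verification the authors are implicitly relying on.
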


\begin{lemma}\label{lem:norm_estimate}
It holds
$\norm{(i+1)\tilde W^{i+1} - i \tilde W^i}^2 \leq \frac{4}{\alpha^2} \leq \frac{16}{c^2}$ for all $i \geq 0$, where $\alpha = 1-|\lambda_n(W)|$ and $c$ as defined in~\eqref{def:p}.
\end{lemma}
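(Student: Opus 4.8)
The plan is to reduce to a scalar estimate by diagonalization and then treat positive and negative eigenvalues separately---it is the contribution of the positive eigenvalues that must be controlled sharply. Since $W$ is symmetric and doubly stochastic, $\1$ is an eigenvector of $W$ with eigenvalue $\lambda_1(W)=1$, so $\tilde W = W - \frac1n\1\1^\top$ is symmetric with eigenvalue $0$ (eigenvector $\1$) and eigenvalues $\lambda_2(W)\geq\dots\geq\lambda_n(W)$ on $\1^\perp$, all in $(-1,1)$ by Assumption~\ref{a:W}. Setting $q(x):=(i+1)x^{i+1}-ix^i$, the matrix $(i+1)\tilde W^{i+1}-i\tilde W^i=q(\tilde W)$ is a polynomial in a symmetric matrix, hence symmetric, so by Lemma~\ref{fact:1} its squared spectral norm equals $\max_\mu q(\mu)^2$ where $\mu$ ranges over $\{0\}\cup\{\lambda_j(W)\}_{j\geq2}$. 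Thus it suffices to bound $|q(\mu)|$ for each such eigenvalue.

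I would then distinguish three cases for a fixed $\mu\in(-1,1)$. For $\mu=0$: $q(0)=0$. For $\mu\in(0,1)$: write $q(\mu)=\mu^i\big(\mu-i(1-\mu)\big)$; if $i(1-\mu)\leq\mu$ then $0\leq q(\mu)\leq\mu^{i+1}\leq1$, and otherwise $|q(\mu)|=\mu^i\big(i(1-\mu)-\mu\big)\leq(1-\mu)\,i\mu^i\leq1$ by Lemma~\ref{lemma:ilambdai} (which bounds $i\mu^i\leq\frac1{1-\mu}$). The essential point is that $|q(\mu)|\leq1$ is an absolute constant, no matter how close $\mu$ is to $1$. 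For $\mu\in(-1,0)$: by the triangle inequality and Lemma~\ref{lemma:ilambdai} applied with $1-|\mu|$ in the role of $\alpha$, $|q(\mu)|\leq(i+1)|\mu|^{i+1}+i|\mu|^i\leq\frac2{1-|\mu|}$.

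To conclude, every negative eigenvalue $\mu$ of $\tilde W$ satisfies $|\mu|\leq|\lambda_n(W)|=1-\alpha$, hence $|q(\mu)|\leq\frac2\alpha$ for those; together with the first two cases and $\alpha\leq1$ this gives $\norm{(i+1)\tilde W^{i+1}-i\tilde W^i}^2\leq\frac4{\alpha^2}$. For the bound in terms of $c$: if $\lambda_n(W)\geq0$ then all eigenvalues of $\tilde W$ are nonnegative, so only the first two cases occur and $\norm{\cdot}^2\leq1\leq16=\frac{16}{c^2}$ (as $c=1$); if $\lambda_n(W)<0$ then $c=1-\lambda_n(W)^2=\big(1-|\lambda_n(W)|\big)\big(1+|\lambda_n(W)|\big)=\alpha(2-\alpha)\in[\alpha,2\alpha]$, so $\frac1\alpha\leq\frac2c$ and $\frac4{\alpha^2}\leq\frac{16}{c^2}$.

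The main obstacle is the sharp constant bound $|q(\mu)|\leq1$ for positive $\mu$: a crude uniform estimate $|q(\mu)|=\cO\big(\frac1{1-|\mu|}\big)$ blows up as $\lambda_2(W)\to1$ and would only recover a $\cO(p^{-2})$-type bound. Isolating that positive eigenvalues contribute only $\cO(1)$, while the possibly large factors come solely from the most negative eigenvalue, is precisely what lets $c$ replace $p$ here; the underlying sign analysis of $\mu-i(1-\mu)$ and the applications of Lemma~\ref{lemma:ilambdai} are routine.
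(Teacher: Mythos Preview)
Your proof is correct and follows essentially the same approach as the paper: diagonalize, reduce to the scalar estimate on $q(\mu)=(i+1)\mu^{i+1}-i\mu^i$, split by the sign of $\mu$, and invoke Lemma~\ref{lemma:ilambdai}. Your positive-$\mu$ case is marginally sharper ($|q(\mu)|\leq 1$ via the sign analysis of $\mu-i(1-\mu)$, versus the paper's $q(\mu)^2\leq 4$ via $(a+b)^2\leq 2a^2+2b^2$), and you handle the edge case $\lambda_n(W)\geq 0$ explicitly when passing from $\alpha$ to $c$, but these are cosmetic refinements of the same argument.
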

\begin{proof}
The eigenvalues of $(i+1)\tilde { W}^{i+1} - i \tilde W^i$ have the form $(i+1) \lambda^{i+1} - i \lambda^i$, for $\lambda \in \Lambda:=\{\lambda_1(\tilde{W}),\dots, \lambda_n(\tilde{W})\}$, the eigenvalues of $\tilde W$. By Lemma~\ref{fact:1}, it holds
\begin{align*}
 \norm{(i+1)\tilde{W}^{i+1} - i \tilde{ W}^i}^2 = \max_{\lambda \in \Lambda} ((i+1) \lambda^{i+1} - i \lambda^i)^2 \,.
\end{align*}
If the maximum is attained for a positive $\lambda > 0$, we conclude
\begin{align*}
 ((i+1) \lambda^{i+1} - i \lambda^i)^2 
 &= (\lambda^{i+1} - i \lambda^i(1-\lambda))^2 \\
 &\leq 2 (\lambda^{i+1})^2 + 2(1-\lambda)^2 (i \lambda^i)^2  \\
 &\leq 2 (\lambda^{i+1})^2 + 2 \frac{(1-\lambda)^2}{(1-\lambda)^2} \\
 &\leq 4
\end{align*}
with Lemma~\ref{lemma:ilambdai} for the first estimate and using $\lambda \leq 1$ on the last line. If the maximum is attained for a negative $\lambda < 0$ with $\lambda = -1 + \beta$, for $\beta > 0$, then 
\begin{align*}
 ((i+1) \lambda^{i+1} - i \lambda^i)^2 
 &\leq 2 ((i+1) \lambda^{i+1})^2 + 2(i \lambda^i)^2 \\
 &\leq  \frac{2}{\beta^2} +  \frac{2}{\beta^2} \leq \frac{4}{\alpha^2}
\end{align*}
with Lemma~\ref{lemma:ilambdai} and $\alpha \leq \beta$.

Note that $c = 1 - (1-\alpha)^2 = 2 \alpha - \alpha^2 \geq \alpha$, since $\alpha (1 - \alpha) \geq 0$ and that $c \leq 2 \alpha$.
\end{proof}

\begin{lemma}\label{lem:norm_estimate2}
It holds $\norm{i \tilde{W}^i}^2 \leq \frac{1}{\alpha^2} \leq \frac{4}{p^2}$. 
\end{lemma}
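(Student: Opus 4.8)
\textbf{Proof plan for Lemma~\ref{lem:norm_estimate2}.} The plan is to reduce the matrix-norm bound to a scalar estimate on eigenvalues, exactly as in the proof of Lemma~\ref{lem:norm_estimate}. Since $\tilde W = W - \frac{\1\1^\top}{n}$ is symmetric, the matrix $i \tilde W^i$ is symmetric and its norm equals its largest eigenvalue modulus by Lemma~\ref{fact:1}. Its eigenvalues have the form $i\lambda^i$ for $\lambda$ ranging over the eigenvalues of $\tilde W$; concretely, $\tilde W$ has eigenvalue $0$ (on the $\1$ direction) together with $\lambda_2(W),\dots,\lambda_n(W)$, all of which satisfy $|\lambda| \le 1 - \alpha$ for $\alpha = 1 - \max\{|\lambda_2(W)|, |\lambda_n(W)|\} \ge 1 - |\lambda_n(W)|$ (in fact the relevant bound only needs $|\lambda| \le 1-\alpha$; any $\alpha$ with this property works, and the statement as written uses $\alpha = 1-|\lambda_n(W)|$ which also upper-bounds all $|\lambda|$ once we note $|\lambda_2(W)|$ could be larger — see remark below).

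First I would write
\begin{align*}
\norm{i \tilde W^i}^2 = \max_{\lambda \in \Lambda} (i \lambda^i)^2\,,
\end{align*}
where $\Lambda$ is the set of eigenvalues of $\tilde W$. Then I would apply Lemma~\ref{lemma:ilambdai}: for any $\lambda$ with $|\lambda| = 1 - \alpha' \le 1-\alpha$ we have $|i\lambda^i| \le \frac{1}{\alpha'} \le \frac{1}{\alpha}$ (the function $t \mapsto \frac1t$ being decreasing, and every nonzero eigenvalue of $\tilde W$ has modulus at most $1-\alpha$; the zero eigenvalue trivially gives $0$). Hence $(i\lambda^i)^2 \le \frac{1}{\alpha^2}$ for every $\lambda \in \Lambda$, which gives $\norm{i\tilde W^i}^2 \le \frac{1}{\alpha^2}$.

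For the second inequality, I would use the relation between $\alpha$ and $p$ from Assumption~\ref{a:W}: with $\alpha = 1 - \max\{|\lambda_2(W)|,|\lambda_n(W)|\}$ we have $p = 1 - \max\{|\lambda_2(W)|,|\lambda_n(W)|\}^2 = 2\alpha - \alpha^2 \le 2\alpha$, so $\alpha \ge \frac{p}{2}$ and therefore $\frac{1}{\alpha^2} \le \frac{4}{p^2}$. (If instead one insists on the literal reading $\alpha = 1 - |\lambda_n(W)|$ as in Lemma~\ref{lem:norm_estimate}, one must additionally observe that $i|\lambda_2(W)|^i$ is also bounded by $\frac{1}{1-|\lambda_2(W)|} \le \frac{1}{\alpha'}$ for the corresponding $\alpha' = 1-|\lambda_2(W)| \ge \frac p2$, so the bound $\frac{4}{p^2}$ still holds even though $\frac{1}{\alpha^2}$ with this $\alpha$ might not control the $\lambda_2$ term directly — but the stated chain $\frac{1}{\alpha^2} \le \frac{4}{p^2}$ is the clean route.) There is no serious obstacle here; the only mild subtlety is bookkeeping which definition of $\alpha$ is in force and making sure that every eigenvalue modulus of $\tilde W$, including $|\lambda_2(W)|$, is at most $1 - \frac p2$ so that Lemma~\ref{lemma:ilambdai} applies uniformly.
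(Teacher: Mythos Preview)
Your approach is correct and essentially identical to the paper's: reduce $\norm{i\tilde W^i}$ to the scalar quantity $i\,|\lambda|^i$ via Lemma~\ref{fact:1}, then apply Lemma~\ref{lemma:ilambdai}, and finish with $p \le 2\alpha$. The paper's proof is a one-liner invoking these same two lemmas; you have additionally (and correctly) flagged the notational ambiguity that the intermediate bound $\frac{1}{\alpha^2}$ only holds literally if $\alpha$ is read as the spectral gap $\delta = 1 - \max\{|\lambda_2(W)|,|\lambda_n(W)|\}$ rather than $1-|\lambda_n(W)|$, though the final bound $\frac{4}{p^2}$ is unaffected either way.
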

\begin{proof}
$\norm{i \tilde W^i}^2 = \left(i \norm{\tilde W^i}\right)^2$, and the proof follows with Lemma~\ref{lemma:ilambdai} and~\ref{fact:1} from above.
\end{proof}

\begin{lemma}\label{lem:norm_J}
	It holds $\norm{\Psi^{0} J^t }^2_F \leq 2 \norm{\Delta X^{(0)}}^2_F + \frac{3 \gamma^2}{p^2}\norm{\Delta Y^{(0)}}^2_F$ for all $t \geq 0$, where $p$ is defined in \eqref{def:p}.
\end{lemma}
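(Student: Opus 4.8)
The plan is to compute $J^t$ in closed form, expand the block product $\Psi^0 J^t$, and then bound each resulting block separately: the pieces involving $\Delta X^{(0)}$ are controlled by the contraction $\norm{\tilde W^t}_2^2=(1-p)^t\le 1$, while the piece carrying the extra factor $t$ is controlled by $\norm{t\tilde W^t}_2^2=t^2(1-p)^t$.

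First, by~\eqref{eq:i-p} we have $J^t = \bigl(\begin{smallmatrix}\tilde W^t & 0\\ -t\tilde W^t & \tilde W^t\end{smallmatrix}\bigr)$. Writing $\Psi^0=\bigl(\begin{smallmatrix}\Delta X^{(0)}\\ \gamma\Delta Y^{(0)}\end{smallmatrix}\bigr)^{\!\top}$ as in~\eqref{eq:25} and performing the block multiplication (each $\tilde W$ acting on the node index from the right), one obtains that the first block of $\Psi^0 J^t$ equals $\Delta X^{(0)}\tilde W^t - t\gamma\,\Delta Y^{(0)}\tilde W^t$ and the second equals $\gamma\,\Delta Y^{(0)}\tilde W^t$ (equivalently, iterating the noiseless recursion $\Psi_{k+1}=\Psi_k J$ gives $\Delta X^{(t)}=\Delta X^{(0)}\tilde W^t - t\gamma\,\Delta Y^{(0)}\tilde W^t$ and $\Delta Y^{(t)}=\Delta Y^{(0)}\tilde W^t$ by induction). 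Since the squared Frobenius norm splits over blocks,
\begin{align*}
\norm{\Psi^0 J^t}_F^2 = \norm{\Delta X^{(0)}\tilde W^t - t\gamma\,\Delta Y^{(0)}\tilde W^t}_F^2 + \gamma^2\norm{\Delta Y^{(0)}\tilde W^t}_F^2 .
\end{align*}

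Next I bound the two summands, using $\norm{ZM}_F\le\norm{Z}_F\norm{M}_2$ together with the spectral identity $\norm{\tilde W^t}_2^2=\max\{|\lambda_2(W)|,|\lambda_n(W)|\}^{2t}=(1-p)^t$ from Assumption~\ref{a:W}. The second summand is at most $\gamma^2(1-p)^t\norm{\Delta Y^{(0)}}_F^2\le\gamma^2\norm{\Delta Y^{(0)}}_F^2$. For the first, Young's inequality $\norm{a-b}_F^2\le 2\norm{a}_F^2+2\norm{b}_F^2$ gives the bound $2(1-p)^t\norm{\Delta X^{(0)}}_F^2 + 2\gamma^2\, t^2(1-p)^t\,\norm{\Delta Y^{(0)}}_F^2 \le 2\norm{\Delta X^{(0)}}_F^2 + 2\gamma^2\, t^2(1-p)^t\,\norm{\Delta Y^{(0)}}_F^2$. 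The last ingredient is the elementary estimate $t^2(1-p)^t \le 4/(e^2 p^2) < 1/p^2$, which follows from $(1-p)^t\le e^{-pt}$ and $\max_{x\ge 0} x^2 e^{-x} = 4e^{-2}$ (attained at $x=2$, with $x=pt$) — a sharpened form of Lemma~\ref{lem:norm_estimate2}. Combining the three bounds and absorbing the residual $\gamma^2\norm{\Delta Y^{(0)}}_F^2$ through $1\le 1/p^2$,
\begin{align*}
\norm{\Psi^0 J^t}_F^2 \le 2\norm{\Delta X^{(0)}}_F^2 + \Bigl(\tfrac{8}{e^2}+1\Bigr)\frac{\gamma^2}{p^2}\norm{\Delta Y^{(0)}}_F^2 \le 2\norm{\Delta X^{(0)}}_F^2 + \frac{3\gamma^2}{p^2}\norm{\Delta Y^{(0)}}_F^2 ,
\end{align*}
which is the claim.

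The argument is essentially bookkeeping; the only delicate point is the numerical constant in front of $\gamma^2/p^2$. Using directly the crude bound $\norm{t\tilde W^t}_2^2\le 4/p^2$ of Lemma~\ref{lem:norm_estimate2} would leave a constant around $9$ rather than $3$, so the step that actually makes $3$ work is the sharper exponential maximization $t^2(1-p)^t\le 4/(e^2p^2)$ combined with bounding the remaining $(1-p)^t\le 1$ against $1/p^2$.
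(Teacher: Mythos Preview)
Your proof is correct and follows essentially the same route as the paper: expand $J^t$ via~\eqref{eq:i-p}, split the Frobenius norm over the two blocks, apply Young's inequality to the mixed block, and bound $\norm{t\tilde W^t}_2^2$. The only difference is cosmetic---the paper invokes Lemma~\ref{lemma:ilambdai} for that last bound, whereas you use the calculus estimate $t^2(1-p)^t\le t^2 e^{-pt}\le 4/(e^2p^2)$; your version is in fact slightly sharper and makes the constant~$3$ explicit, while the paper's one-line proof leaves the constant verification to the reader.
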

\begin{proof}
	Starting from \eqref{eq:i-p} and using Lemma~\ref{lemma:ilambdai} with $\delta = 1 - \lambda_2$
	\begin{align*}
	\norm{\Psi^{0}  J^i }^2_F  = \norm{\begin{pmatrix}
	\Delta X^{(0)} \tilde W^i  - i \gamma \Delta Y^{(0)}\tilde W^i \\
	\gamma \Delta Y^{(0)}\tilde W^i
	\end{pmatrix}^\top }^2_F \leq 2 \norm{\Delta X^{(0)}}^2_F + \frac{3 \gamma^2}{p^2}\norm{\Delta Y^{(0)}}^2_F.
	\end{align*}
\end{proof}

\begin{lemma}\label{remark:norm_of_sum}
	For arbitrary set of $n$ vectors $\{\aa_i\}_{i = 1}^n$, $\aa_i \in \R^d$
	\begin{equation}\label{eq:norm_of_sum}
	\norm{\sum_{i = 1}^n \aa_i}^2 \leq n \sum_{i = 1}^n \norm{\aa_i}^2 \,.
	\end{equation}
\end{lemma}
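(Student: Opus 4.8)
The plan is to read this off as the standard ``Jensen / Cauchy--Schwarz'' estimate for the squared Euclidean norm, and there are two equivalent ways to organize the argument. The first is to pull out the average: writing $\bar\aa := \frac{1}{n}\sum_{i=1}^n \aa_i$ we have $\norm{\sum_{i=1}^n \aa_i}^2 = n^2\norm{\bar\aa}^2$, and since $\vv\mapsto\norm{\vv}^2$ is convex on $\R^d$, Jensen's inequality gives $\norm{\bar\aa}^2 \le \frac{1}{n}\sum_{i=1}^n\norm{\aa_i}^2$; multiplying through by $n^2$ yields the claim.

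The second route, which I would actually write out since it is self-contained, is to expand the inner product directly: $\norm{\sum_{i=1}^n\aa_i}^2 = \sum_{i=1}^n\sum_{j=1}^n\langle\aa_i,\aa_j\rangle$, and then bound each cross term by Young's inequality, $\langle\aa_i,\aa_j\rangle \le \tfrac12\norm{\aa_i}^2 + \tfrac12\norm{\aa_j}^2$. Summing over all $n^2$ ordered pairs $(i,j)$, each quantity $\norm{\aa_i}^2$ is counted exactly $n$ times (once per choice of the partner index, whether it occupies the first or the second slot), so the right-hand side collapses to $n\sum_{i=1}^n\norm{\aa_i}^2$.

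I do not expect any genuine obstacle here: the only point requiring a moment's care is the combinatorial bookkeeping in the second argument --- checking that the multiplicity of each $\norm{\aa_i}^2$ is precisely $n$, which is exactly what produces the factor $n$ in the bound (and also shows that it is tight, with equality iff all the $\aa_i$ coincide). This inequality will then be invoked repeatedly in the appendix to split Frobenius norms of matrices whose columns decompose into $n$ summands, for instance when estimating the error term $\norm{\sum_{j=1}^{\tau-1} E_{t+j-1}J^{\tau-j}}_F^2$ that enters the consensus recursion of Lemma~\ref{lem:consensus}.
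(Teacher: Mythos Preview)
Your argument is correct; both the Jensen route and the direct expansion with Young's inequality establish the bound, and your equality characterization is right as well. The paper itself states this lemma without proof (it is recorded in the appendix as a standard auxiliary inequality alongside the companion facts~\eqref{eq:scal_product} and~\eqref{eq:norm_of_sum_of_two}), so there is nothing to compare against.
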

\begin{lemma}\label{remark:scal_product}
	For given two vectors $\aa, \bb \in \R^d$
	\begin{align}\label{eq:scal_product}
	&2\lin{\aa, \bb} \leq \gamma \norm{\aa}^2 + \gamma^{-1}\norm{\bb}^2\,, & &\forall \gamma > 0 \,.
	\end{align}
\end{lemma}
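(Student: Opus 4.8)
The plan is to obtain \eqref{eq:scal_product} by completing the square: the right-hand side minus the left-hand side should be exactly the squared norm of a single vector, which is automatically non-negative. Concretely, for any fixed $\gamma > 0$ the expression $\sqrt{\gamma}\,\aa - \gamma^{-1/2}\,\bb$ is a well-defined vector in $\R^d$ (here $\gamma^{-1/2}$ makes sense precisely because $\gamma > 0$), so its squared Euclidean norm is $\geq 0$.

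First I would expand $\bigl\| \sqrt{\gamma}\,\aa - \gamma^{-1/2}\,\bb \bigr\|^2$ using bilinearity of the inner product. The two diagonal terms give $\gamma\norm{\aa}^2$ and $\gamma^{-1}\norm{\bb}^2$, and the cross term is $-2\sqrt{\gamma}\cdot\gamma^{-1/2}\lin{\aa,\bb} = -2\lin{\aa,\bb}$. Hence $0 \leq \gamma\norm{\aa}^2 - 2\lin{\aa,\bb} + \gamma^{-1}\norm{\bb}^2$, and rearranging yields exactly $2\lin{\aa,\bb} \leq \gamma\norm{\aa}^2 + \gamma^{-1}\norm{\bb}^2$, the claimed bound.

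There is no real obstacle: this is the classical Young / Peter--Paul inequality, recorded here only so that later in the appendix a cross term $\lin{\aa,\bb}$ can be split into two squared-norm contributions with a freely tunable balancing parameter $\gamma$ (typically to absorb error terms into a contraction factor). One could equivalently deduce it from the scalar inequality $2uv \leq \gamma u^2 + \gamma^{-1} v^2$ (itself AM--GM) applied with $u = \norm{\aa}$, $v = \norm{\bb}$, combined with Cauchy--Schwarz $\lin{\aa,\bb}\leq \norm{\aa}\norm{\bb}$; the completing-the-square argument above is marginally sharper since it never invokes Cauchy--Schwarz and the sign of $\bb$ is irrelevant (replacing $-\bb$ by $+\bb$ gives the same conclusion).
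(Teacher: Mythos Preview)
Your proof is correct; the completing-the-square argument you give is the standard and cleanest route to this inequality. The paper itself does not prove this lemma at all --- it is simply listed among a handful of elementary ``useful inequalities'' (together with $\norm{\sum_i \aa_i}^2 \leq n\sum_i\norm{\aa_i}^2$, the $\alpha$-weighted triangle inequality, and $\norm{AB}_F\leq\norm{A}_F\norm{B}_2$) that are stated without argument, so there is nothing to compare against.
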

\begin{lemma}\label{remark:norm_of_sum_of_two}
	For given two vectors $\aa, \bb \in \R^d$ %
	\begin{align}\label{eq:norm_of_sum_of_two}
	\norm{\aa + \bb}^2 \leq (1 + \alpha)\norm{\aa}^2 + (1 + \alpha^{-1})\norm{\bb}^2,\,\, & &\forall \alpha > 0\,.
	\end{align}
	This inequality also holds for the sum of two matrices $A,B \in \R^{n \times d}$ in Frobenius norm.
\end{lemma}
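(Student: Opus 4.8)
The plan is a one-line application of the expansion of a square followed by the scalar-product bound already recorded in Lemma~\ref{remark:scal_product}. First I would write $\norm{\aa+\bb}^2 = \norm{\aa}^2 + 2\lin{\aa,\bb} + \norm{\bb}^2$ using bilinearity and symmetry of the Euclidean inner product. Next, applying Lemma~\ref{remark:scal_product} with its free parameter set equal to $\alpha$ gives $2\lin{\aa,\bb} \leq \alpha\norm{\aa}^2 + \alpha^{-1}\norm{\bb}^2$, valid for every $\alpha>0$. Substituting this into the expansion and collecting the coefficients of $\norm{\aa}^2$ and $\norm{\bb}^2$ yields $\norm{\aa+\bb}^2 \leq (1+\alpha)\norm{\aa}^2 + (1+\alpha^{-1})\norm{\bb}^2$, which is exactly the claimed inequality.

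For the matrix statement I would note that the Frobenius norm is the norm induced by the inner product $\lin{A,B}_F := \Tr(A^\top B)$, which is again a symmetric bilinear form, so the same expand-and-bound argument applies verbatim with $\lin{\cdot,\cdot}$ replaced by $\lin{\cdot,\cdot}_F$. Equivalently, one may apply the vector inequality columnwise to $\aa_i+\bb_i$ and sum over the columns, since $\norm{A+B}_F^2 = \sum_i \norm{\aa_i+\bb_i}^2$ and likewise for $\norm{A}_F^2$, $\norm{B}_F^2$.

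There is no real obstacle here: this is the standard Young / Peter--Paul inequality in an inner-product space. The only point requiring a moment's care is that the parameter $\alpha$ appearing in the statement is precisely the free parameter of Lemma~\ref{remark:scal_product}, so the substitution is legitimate for all $\alpha>0$; no limiting behavior as $\alpha\to 0$ or $\alpha\to\infty$ is needed, since the statement is restricted to $\alpha>0$.
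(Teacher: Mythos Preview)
Your proposal is correct and is exactly the standard expand-and-apply-Young argument; the paper itself states this lemma without proof among its ``Useful Inequalities,'' so there is nothing to compare against. Your reduction of the matrix case to the vector case via the Frobenius inner product (or columnwise summation) is also the right observation.
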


\begin{lemma}\label{rem:frobenious_norm_of_matrix_mult}
	For $A\in \R^{d\times n}$, $B\in \R^{n\times n}$
	\begin{align}\label{eq:frob_norm_of_multiplication}
	\norm{AB}_F \leq \norm{A}_F \norm{B}_2 \,.
	\end{align}
\end{lemma}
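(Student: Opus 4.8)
The plan is to reduce the statement to the orthogonal invariance of the Frobenius norm via the singular value decomposition of $B$. First I would write $B = U \Sigma V^\top$ with $U, V \in \R^{n\times n}$ orthogonal and $\Sigma = \diag(\sigma_1,\dots,\sigma_n)$, $\sigma_1 \geq \dots \geq \sigma_n \geq 0$, so that $\norm{B}_2 = \sigma_1$ by definition of the operator norm (consistent with Lemma~\ref{fact:1} when $B$ is symmetric). Right-multiplication by an orthogonal matrix preserves the Frobenius norm, since $\norm{M V^\top}_F^2 = \Tr(V M^\top M V^\top) = \Tr(M^\top M) = \norm{M}_F^2$; hence $\norm{AB}_F = \norm{AU\Sigma}_F$.

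Next I would bound $\norm{AU\Sigma}_F$ entrywise. Setting $C := AU \in \R^{d\times n}$, the entries of $C\Sigma$ are $(C\Sigma)_{ij} = C_{ij}\sigma_j$, so
\begin{align*}
\norm{C\Sigma}_F^2 = \sum_{i=1}^d \sum_{j=1}^n C_{ij}^2 \sigma_j^2 \leq \sigma_1^2 \sum_{i=1}^d\sum_{j=1}^n C_{ij}^2 = \norm{B}_2^2 \norm{C}_F^2 \,.
\end{align*}
Since $U$ is orthogonal, the same invariance argument gives $\norm{C}_F = \norm{AU}_F = \norm{A}_F$. Chaining these identities yields $\norm{AB}_F^2 \leq \norm{B}_2^2\norm{A}_F^2$, which is the claim.

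An alternative, SVD-free route: decompose $A$ into its rows $\aa_1^\top,\dots,\aa_d^\top \in \R^{1\times n}$, so that the $i$-th row of $AB$ is $\aa_i^\top B$ and therefore $\norm{AB}_F^2 = \sum_{i=1}^d \norm{B^\top \aa_i}_2^2 \leq \norm{B^\top}_2^2 \sum_{i=1}^d \norm{\aa_i}_2^2 = \norm{B}_2^2\norm{A}_F^2$, using $\norm{B^\top}_2 = \norm{B}_2$. There is no genuine obstacle here; the only point requiring a little care is bookkeeping which factor carries the operator norm and which the Frobenius norm — the companion bound $\norm{AB}_F \leq \norm{A}_2\norm{B}_F$ would instead use the column decomposition of $B$ — but the row (equivalently, SVD) argument above is exactly aligned with the inequality as stated, which is the orientation needed when it is invoked (e.g.\ with $B = J^i$ a power of the iteration matrix and $A = \Psi_t$).
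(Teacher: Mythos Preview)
Your proposal is correct; both the SVD route and the row-decomposition route are standard, valid arguments for this inequality. The paper itself states this lemma without proof, treating it as a well-known fact, so there is nothing to compare against beyond noting that your write-up supplies the omitted details cleanly.
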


\subsection{Convex Cases}
\paragraph{Proof of Lemma~\ref{lem:consensus}}
We first state auxiliary lemma about consensus recursion.
\begin{lemma}\label{lem-aux}
	There exists absolute constants $C_1 = 440, C_2 = 380$ such that iterates of Algorithm~\ref{alg:gt} satisfy,
	\begin{align}
	\E \norm{\Psi_{t + k}}_F^2 \leq \frac{3}{4} \E \norm{\Psi_t}_F^2 +  \gamma^2 \frac{C_1 \tau}{c^2} \sum_{j = 0}^{k - 1} \E \norm{\nabla f(X^{t + j}) - \nabla f(X^\star)}_F^2 + \gamma^2 \frac{C_2 \tau}{c^2} n \sigma^2 \,. \label{eq:y_1}
	\end{align}
	where $\tau \leq k \leq  2 \tau$, $\tau = \frac{2}{p}\log\left(\frac{50}{p} (1+ \log \frac{1}{p})\right) + 1$, $p$ and $c$ are defined in \eqref{def:p}, $\Psi_t = \left( \Delta X^{(t)}, \gamma \Delta Y^{(t)}\right)$ and is defined in \eqref{eq:25}.
\end{lemma}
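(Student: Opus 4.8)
The plan is to unroll the recursion~\eqref{eq:25} over a window of $k$ steps, split off the homogeneous part $\Psi_t J^k$, and control the remaining error by ``grouping the gradients'' (a summation by parts). For the homogeneous part, note that $\tau\le k\le 2\tau$ and that $k\mapsto(1+k^2)(1-p)^k$ is non-increasing for $k\ge\tau$ (since $pk\ge 2$ there); together with~\eqref{eq:i-p} and Lemma~\ref{lemma:key} this gives $\norm{J^k}^2\le\norm{J^\tau}^2\le\tfrac12$, hence $\norm{\Psi_t J^k}_F^2\le\tfrac12\norm{\Psi_t}_F^2$ by Lemma~\ref{rem:frobenious_norm_of_matrix_mult}. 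Young's inequality (Lemma~\ref{remark:norm_of_sum_of_two}) with parameter $\tfrac12$ then yields $\E\norm{\Psi_{t+k}}_F^2\le\tfrac34\E\norm{\Psi_t}_F^2+3\gamma^2\,\E\bigl\|\sum_j E_{t+j}J^{k-1-j}\bigr\|_F^2$, so it remains to estimate the error sum by $\tfrac{C_1\tau}{3c^2}\sum_j\norm{\nabla f(X^{t+j})-\nabla f(X^\star)}_F^2+\tfrac{C_2\tau}{3c^2}n\sigma^2$.

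Writing $E_{t+j}J^{k-1-j}$ in the block form of~\eqref{eq:i-p}, its two components are $-(k-1-j)\,\Delta H_j\tilde W^{k-1-j}$ and $\Delta H_j\tilde W^{k-1-j}$, where $\Delta H_j$ is the consecutive stochastic‑gradient increment (projected onto the mean‑zero subspace by $I-\tfrac{\1\1^\top}{n}$). The crucial move is to split $\Delta H_j$ into the increment of $\nabla f(X^{t+\cdot})-\nabla f(X^\star)$ and the increment of the stochastic noise $\nabla F(X^{t+\cdot},\xi^{t+\cdot})-\nabla f(X^{t+\cdot})$; inside these differences the heterogeneity‑carrying ``base point'' terms cancel, which is exactly why GT is agnostic to data dissimilarity. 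Then I apply summation by parts to each of the two telescoping sums, turning $\sum_j(A_{j+1}-A_j)C_j$ into $\sum_j A_j(C_{j-1}-C_j)$ plus two boundary terms. For the $X$-component the coefficients $C_{j-1}-C_j$ are exactly $\pm\bigl((i+1)\tilde W^{i+1}-i\tilde W^i\bigr)$, whose spectral norm is $\le\tfrac4c$ by Lemma~\ref{lem:norm_estimate}; for the $Y$-component they are $\pm(\tilde W^{i+1}-\tilde W^i)$, of norm $\le 2(1-p)^{i/2}$ with geometric sum $\le\tfrac4p\le\tau$; the boundary coefficients are $\tilde W$, $I-\tfrac{\1\1^\top}{n}$ (norm $\le1$) and $(k-1)\tilde W^{k-1}$.

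The estimate that makes the argument go through is $\norm{(k-1)\tilde W^{k-1}}_2\le 1$ for $k\ge\tau$: although Lemma~\ref{lem:norm_estimate2} only gives $\cO(1/p)$ in general, here $(1+(k-1)^2)\norm{\tilde W^{k-1}}^2=\norm{J^{k-1}}^2\le\tfrac12$ by Lemma~\ref{lemma:key}, since $k-1$ exceeds the threshold of that lemma. With all coefficient norms in hand, Lemma~\ref{remark:norm_of_sum} over the $\cO(\tau)$ indices of the window bounds the deterministic part of $\bigl\|\sum_j E_{t+j}J^{k-1-j}\bigr\|_F^2$ by a multiple of $\tfrac{\tau}{c^2}\sum_j\norm{\nabla f(X^{t+j})-\nabla f(X^\star)}_F^2$. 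For the stochastic part, after the summation by parts each noise matrix $\Xi_{t+j}:=\nabla F(X^{t+j},\xi^{t+j})-\nabla f(X^{t+j})$ occurs exactly once, multiplied by a \emph{deterministic} coefficient $e_j$; since $\E[\Xi_{t+j}\mid\cF_{t+j}]=0$ and $X^{t+j}$ is $\cF_{t+j}$‑measurable, the $\Xi$'s at distinct times are uncorrelated, so $\E\bigl\|\sum_j\Xi_{t+j}e_j\bigr\|_F^2=\sum_j\norm{e_j}_2^2\,\E\norm{\Xi_{t+j}}_F^2\le\bigl(\sum_j\norm{e_j}_2^2\bigr)n\sigma^2\le\tfrac{C_2\tau}{3c^2}n\sigma^2$ by Assumption~\ref{a:opt_nc}. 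Multiplying through by the factor $3$ from Young's inequality and tracking constants gives $C_1=440$, $C_2=380$.

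The main obstacle is the grouping step itself: a direct bound on $\bigl\|\sum_j(k-1-j)\,\Delta H_j\tilde W^{k-1-j}\bigr\|_F$ costs a factor $\sum_i i(1-p)^{i/2}\asymp p^{-2}$, whereas the summation by parts replaces it by $\asymp\tau/c^2$ through Lemma~\ref{lem:norm_estimate} --- this is precisely the mechanism behind the $p^{-2}\!\to p^{-1}c^{-1}$ improvement. The delicate points are getting the boundary terms right (in particular, recognizing that $(k-1)\tilde W^{k-1}$ is $\cO(1)$ and not $\cO(1/p)$ once $k\ge\tau$), handling the noise through the martingale structure, and arranging the decomposition so that one never has to bound a raw gradient $\nabla F(X^{t+j},\xi^{t+j})$ or a noise matrix $\Xi$ on its own --- any such step would resurrect the heterogeneity term $\norm{\nabla f(X^\star)}_F$ and destroy the data‑dissimilarity independence of the bound.
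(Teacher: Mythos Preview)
Your proposal is correct and follows essentially the same approach as the paper: unroll~\eqref{eq:25}, apply Young's inequality and Lemma~\ref{lemma:key} for the homogeneous part, then regroup the error sum by gradient index (summation by parts) so that the coefficients $(i{+}1)\tilde W^{i+1}-i\tilde W^i$ can be controlled via Lemma~\ref{lem:norm_estimate}, with the boundary term $(k{-}1)\tilde W^{k-1}$ handled through $\|J^{k-1}\|^2\le\tfrac12$ and the noise through the martingale structure. The only cosmetic difference is that the paper applies the regrouping trick only to the $X$-component ($T_1$) and bounds the $Y$-component ($T_2$) directly via $\|\tilde W^{k-j}\|\le 1$ and~\eqref{eq:norm_of_sum}, since the $\tfrac{\tau}{c^2}$ factor already comes from $T_1$ and no improvement is needed for $T_2$; it also uses Young's parameter $\alpha=\tfrac14$ rather than $\tfrac12$.
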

\begin{proof}
	We start from the recursion \eqref{eq:mid-proof} given in the main text 
	\begin{align*}
	\Psi_{t + k} = \Psi_t J^k + \gamma \sum_{j = 1}^{k}  E_{t + j - 1}J^{k - j}\,.
	\end{align*}
	Taking the norm,
	\begin{align*}
	\norm{\Psi_{t + k}}_F^2 \stackrel{\eqref{eq:norm_of_sum_of_two}, \alpha = \frac{1}{4}, \eqref{eq:frob_norm_of_multiplication}}{\leq} \left(1 + \frac{1}{4}\right)\norm{J^k}_2^2 \norm{\Psi_t }_F^2  + 5 \gamma^2 \norm{\sum_{j = 1}^{k}  E_{t + j - 1}J^{k - j}}_F^2
	\end{align*}
	Using the key Lemma~\ref{lemma:key}, the first term can be estimated as 
	\begin{align*}
	\left(1 + \frac{1}{4}\right)\norm{J^k}_2^2\norm{\Psi_t }_F^2 \leq \frac{3}{4}\norm{\Psi_t }_F^2 \,.
	\end{align*}
	Lets estimate separately the second term. Denoting $G^{(t)} = \nabla F(X^{(t)}, \xi^{(t)})$, 
	\begin{align*}
	\norm{\sum_{j = 1}^{k }  E_{t + j - 1}J^{k - j}}_F^2 &= \norm{
		\begin{pmatrix}
		- \sum_{j = 1}^{k} \left(G^{(t + j)} - G^{(t + j - 1)}\right) (k - j) \tilde W^{k - j} (I - \frac{\1\1^\top}{n})\\
		\sum_{j = 1}^{k } \left(G^{(t + j)} - G^{(t + j - 1)}\right)\tilde W^{k - j} (I - \frac{\1\1^\top}{n})
		\end{pmatrix}}_F^2 \\
		&\stackrel{\eqref{eq:frob_norm_of_multiplication}}{\leq} \underbrace{\norm{\sum_{j = 1}^{k} \left(G^{(t + j)} - G^{(t + j - 1)}\right) (k - j) \tilde W^{k - j} }_F^2}_{=: T_1} \\ &\qquad\qquad + \underbrace{\norm{\sum_{j = 1}^{k } \left(G^{(t + j)} - G^{(t + j - 1)}\right)\tilde W^{k - j}}_F^2}_{=: T_2} \,,
	\end{align*}
	where we used the definition of the Frobenius norm and $\norm{ I - \frac{\1\1^\top}{n} } \leq 1$. We now give upper bounds for $T_1$ and $T_2$.

\paragraph{The second term $T_2$.} We firstly separate the stochastic noise by adding and subtracting the full gradient, 
\begin{align*}
T_2 &\stackrel{\eqref{eq:norm_of_sum_of_two}}{\leq} 3 \norm{\sum_{j = 1}^{k } \left(\nabla f(X^{(t +j)}) - \nabla f(X^{(t + j - 1)})\right)\tilde W^{k - j}}_F^2 \\
 &  \qquad + 3 \norm{\sum_{j = 1}^{k } \left(G^{(t + j)} - \nabla f(X^{(t +j)})\right)\tilde W^{k - j}}_F^2 + 3 \norm{\sum_{j = 1}^{k }  \left(G^{(t + j - 1)} - \nabla f(X^{(t + j - 1)})\right)\tilde W^{k - j} }_F^2\,.
\end{align*}
Note that 
\begin{align*}
 \E \norm{ \sum_{j=1}^k  \left(G^{(t + j)} - \nabla f(X^{(t +j)})\right)\tilde W^{k - j} }_F^2 &= \sum_{j=1}^k \E \norm{ \left(G^{(t + j)} - \nabla f(X^{(t +j)})\right)\tilde W^{k - j} }_F^2 \\
 &\stackrel{ \norm{\tilde W}\leq 1}{\leq}  \sum_{j=1}^k \E \norm{ G^{(t + j)} - \nabla f(X^{(t +j)}) }_F^2\,,
\end{align*}
where we used 
the martingale property $\E_j \left[G^{(j)} - \nabla f(X^{(j)}) \mid X^{(j)} \right] = 0$ for all $j \leq t$.
It follows
\begin{align*}
\E[T_2] &\stackrel{\eqref{eq:noise_opt_nc}}{\leq} 3 \E \norm{\sum_{j = 1}^{k } \left(\nabla f(X^{(t +j)}) - \nabla f(X^{(t + j - 1)})\right)\tilde W^{k - j}}_F^2 + 6k n \sigma^2\,.
\end{align*}
We expand further by adding and subtracting $\nabla f(X^\star)$ to the first norm, and bounding stochastic noise by \eqref{eq:noise_opt_nc} in the other terms
\begin{align*}
\E[T_2] &\stackrel{\eqref{eq:norm_of_sum_of_two}, \eqref{eq:noise_opt_nc}}{\leq} 6 \E \norm{\sum_{j = 1}^{k } (\nabla f(X^{(t +j)}) -\nabla f(X^\star)) \tilde W^{k - j}}_F^2 + 6 \E \norm{ \sum_{j = 1}^{k } (\nabla f(X^{(t + j - 1)}) - \nabla f(X^\star) ) \tilde W^{k - j}}_F^2 + 6k n \sigma^2\\
& \stackrel{\eqref{eq:norm_of_sum}, \eqref{eq:frob_norm_of_multiplication}}{\leq} 12 k \sum_{j = 0}^k \E \norm{\nabla f(X^{(t +j)}) -\nabla f(X^\star)}_F^2 + 6 k n \sigma^2\,.
\end{align*}
\textbf{The first term $T_1$.}
First, we separate the stochastic noise similarly as above. Defining $Z^{(t)}=G^{(t)}-\nabla f(X^{(t)})$, 
\begin{align*}
T_1 \stackrel{\eqref{eq:norm_of_sum_of_two}}{\leq} 2 \norm{\sum_{j = 1}^{k} \left[\nabla f ( X^{(t + j)}) - \nabla f (X^{(t + j - 1)})\right] (k - j) \tilde W^{k - j} }_F^2 + 2\norm{\sum_{j = 1}^{k} \left(Z^{(t + j)} - Z^{(t + j - 1)}\right) (k - j) \tilde W^{k - j} }_F^2. 
\end{align*}
Next, we add and subtract $\nabla f(X^\star)$ in the first term $k - 1$ times and temporarily denote $D^{(j)} = \nabla f(X^{(j)} ) - \nabla f(X^\star)$
\begin{align*}
T_1 & \leq  2 \norm{\sum_{j = 1}^{k} \left(D^{(t + j)} - D^{(t + j - 1)}\right)(k - j) \tilde W^{k - j} }_F^2 + 2\norm{\sum_{j = 1}^{k} \left(Z^{(t + j)} - Z^{(t + j - 1)}\right) (k - j) \tilde W^{k - j} }_F^2. 
\end{align*}
Next, we re-group the sums by the gradient index.
\begin{align*}
T_1 &\leq  2 \norm{ D^{(t + k - 1)} \tilde W  - (k - 1)  D^{(t)}\tilde W^{k - 1} + \sum_{j = 1}^{k - 2} D^{(t + j)}  \left[(k - j)\tilde W^{k - j} - (k - j - 1) \tilde W^{k - j - 1} \right]}_F^2\\
& \quad + 2 \norm{ Z^{(t + k - 1)} \tilde W  - (k - 1)  Z^{(t)}\tilde W^{k - 1} + \sum_{j = 1}^{k - 2} Z^{(t + j)}  \left[(k - j)\tilde W^{k - j} - (k - j - 1) \tilde W^{k - j - 1} \right]}_F^2 \\
& \stackrel{\eqref{eq:norm_of_sum}, \eqref{eq:frob_norm_of_multiplication}}{\leq} 2 k \left[\norm{D^{(t + k - 1)}}_F^2 + \norm{D^{(t)}(k - 1)\tilde W^{k - 1}}_F^2 + \sum_{j = 1}^{k - 2} \norm{D^{(t + j)}  \left[(k - j)\tilde W^{k - j} - (k - j - 1) \tilde W^{k - j - 1} \right]}_F^2  \right] \\
&\quad \quad \quad +2 \left[\norm{Z^{(t + k - 1)}}_F^2 + \norm{Z^{(t)}(k - 1)\tilde W^{k - 1}}_F^2 + \sum_{j = 1}^{k - 2} \norm{Z^{(t + j)}  \left[(k - j)\tilde W^{k - j} - (k - j - 1) \tilde W^{k - j - 1} \right]}_F^2  \right] 
\end{align*}
where for splitting $Z$ we used martingale property $\E_j \left[G^{(j)} - \nabla f(X^{(j)}) \mid X^{(j)} \right] = 0$ for all $j \leq t$. Next, we use Lemma~\ref{lem:norm_estimate} to estimate the norm $\norm{(k - j)\tilde W^{k - j} - (k - j - 1) \tilde W^{k - j - 1} }_2^2 \leq \frac{16}{c^2}$; and  using \eqref{eq:i-p} we estimate $\norm{(k - 1)\tilde W^{k - 1} }_2^2 \leq \norm{J^{k - 1}}^2 \leq \frac{1}{2}$ due to our choice of $k \geq \tau$ and a key Lemma~\ref{lemma:key}
\begin{align*}
T_1 &\stackrel{\eqref{eq:frob_norm_of_multiplication}}{\leq} \frac{32 k }{c^2} \sum_{j = 0}^{k - 1} \norm{D^{(t + j)}}_F^2  + \frac{32}{c^2}\sum_{j = 0}^{k - 1} \norm{Z^{(t + j)}}_F^2
\end{align*}
Taking expectation over the stochastic noise, 
\begin{align*}
\E [T_1] &\stackrel{\eqref{eq:noise_opt_nc}}{\leq} \frac{32 k }{c^2} \sum_{j = 0}^{k - 1} \norm{D^{(t + j)}}_F^2  + \frac{32k n \sigma^2}{c^2}
\end{align*}
Summing up $T_1$ and $T_2$ and estimating $k \leq 2 \tau$ we conclude the proof %
\begin{align*}
\E \norm{\Psi_{t + k} }_F^2 &\leq \frac{3}{4} \E \norm{\Psi_{t}}_F^2 +  \gamma^2  \frac{440 \tau}{c^2} \sum_{j = 0}^{k } \E\norm{\nabla f(X^{(t +j)}) -\nabla f(X^\star)}_F^2 + \gamma^2 \frac{380 \tau}{c^2} n \sigma^2  \,. \qedhere
\end{align*}
\end{proof}

We will proof Lemma~\ref{lem:consensus} with $B_1 = 28 C_1$, $B_2 = 4 C_2$, $B_3 = \sqrt{515 \cdot 2 C_1}$, where $C_1 = 220$ and $C_2 = 190$ are constants from Lemma~\ref{lem-aux}.
\begin{proof}[\bf Proof of Lemma~\ref{lem:consensus}]
	Observe, for any $t$,
	\begin{align*}
	\norm{\nabla f(X^{(t)}) -\nabla f(X^\star)}_F^2 &\stackrel{\eqref{eq:norm_of_sum_of_two}}{\leq} 2 \norm{\nabla f(X^{(t)}) - \nabla f(\bar X^{(t)})}_F^2 + 2\norm{\nabla f(\bar X^{(t)}) -\nabla f(X^\star)}_F^2 \\
	&\stackrel{\eqref{eq:smooth_nc}}{\leq} 2L^2 \underbrace{\norm{X^{(t)} -  \bar X^{(t)}}_F^2 }_{\leq \norm{\Psi_t}_F^2} + 2\norm{\nabla f(\bar X^{(t)}) -\nabla f(X^\star)}_F^2 
	\end{align*}
	With Lemma~\ref{lem-aux} 
	\begin{align*}
	\E \norm{\Psi_{t + k} }_F^2 &\stackrel{\eqref{eq:y_1}}{\leq} \frac{3}{4} \E \norm{\Psi_{t}}_F^2 +  \gamma^2\frac{\tau C_1}{c^2} \sum_{j = 0}^{k } \E \norm{\nabla f(X^{(t +j)})-\nabla f(X^\star)}_F^2 +  \gamma^2 \frac{\tau C_2}{c^2} n \sigma^2 \\
	& \leq \frac{3}{4} \E \norm{\Psi_{t}}_F^2 + \gamma^2 \frac{ 2C_1  \tau L^2}{c^2} \sum_{j = 0}^{k } \E \norm{\Psi_{t+j} }_F^2 + \gamma^2  \frac{2C_1 \tau}{c^2} \sum_{j = 0}^{k } \E \norm{ \nabla f(\bar X^{(t + j)})-\nabla f(X^\star)}_F^2+ \gamma^2 \frac{\tau C_2}{c^2} n \sigma^2\\
	& \stackrel{\gamma < \frac{c}{\sqrt{512 \cdot 2 C_1} L \tau}}{\leq} \frac{3}{4} \E\norm{\Psi_{t}}_F^2 + \frac{1}{512 \tau} \sum_{j = 0}^k \norm{\Psi_{t + j}}_F^2 + \gamma^2  \frac{2C_1 \tau}{c^2} \sum_{j = 0}^{k } \E \norm{ \nabla f(\bar X^{(t + j)})-\nabla f(X^\star)}_F^2+ \gamma^2 \frac{\tau C_2}{c^2} n \sigma^2
	\end{align*}
	Next, we estimate the third term by smoothness for $j < k$
	\begin{align*}
	\norm{ \nabla f(\bar X^{(t + j)})-\nabla f(X^\star)}_F^2 \leq 2 L n \left( f(\bar \xx^{(t + j)})  - f(\xx^\star) \right) \,.
	\end{align*}
	And for $j = k$, the index is $t + k$ and it should appear only in LHS. Thus we estimate
	\begin{align*}
	\norm{ \nabla f(\bar X^{(t + k)})-\nabla f(X^\star)}_F^2 & \stackrel{\eqref{eq:norm_of_sum_of_two}}{\leq} 2 \norm{\nabla f(\bar X^{(t + k)}) - \nabla f(\bar X^{(t + k - 1)}) }_F^2  + 2\norm{\nabla f(\bar X^{(t + k - 1)})  -\nabla f(X^\star)}_F^2 \\
	& \stackrel{\eqref{eq:smooth_nc}}{\leq }2 L^2 \norm{\bar X^{(t + k)} - \bar X^{(t + k - 1)}}_F^2 + 4 L n \left(f(\bar \xx^{(t + 
		k - 1)})  - f(\xx^\star)  \right)
	\end{align*}
	Next we use \eqref{eq:23}, that is equivalent to $\bar X^{(t + k)} = \bar X^{(t + k-1)} - \gamma \nabla F(X^{(t + k - 1)}, \xi^{(t + k - 1)}) \frac{\1\1^\top}{n}$. Taking expectation
	\begin{align*}
	\E \norm{\bar X^{(t + k)} - \bar X^{(t + k - 1)}}_F^2 &\stackrel{\eqref{eq:23}, \eqref{eq:noise_opt_nc}}{\leq } \gamma^2 \norm{\nabla f(X^{(t + k - 1)}) \textstyle{\frac{\1\1^\top}{n} } }_F^2 + \gamma^2 \sigma^2 \\
	& \leq 2 \gamma^2 \norm{\nabla f(X^{(t + k - 1)}) \textstyle{\frac{\1\1^\top}{n} } - \nabla \bar f(\bar X^{(t + k - 1)}) }_F^2 + 2 \gamma^2 \norm{\nabla \bar f(\bar X^{(t + k - 1)}) - \nabla \bar f( X^\star)}_F^2 + \gamma^2 \sigma^2 \\
	& \stackrel{\eqref{eq:norm_of_sum_of_two}, \eqref{eq:smooth_nc}}{\leq } 2 \gamma^2 L^2 \norm{X^{(t + k - 1)} - \bar X^{(t + k - 1)}}_F^2 +  4 \gamma^2 L n\left(f(\bar \xx^{(t + k - 1)})  - f(\xx^\star)  \right) + \gamma^2 \sigma^2 
	\end{align*}
	where on the second line we used $\nabla  f(\bar X) \frac{\1\1^\top}{n} = \nabla \bar f(\bar X)$, and  $\nabla \bar f( X^\star) = 0$. As $\gamma \leq \frac{c}{\sqrt{512 \cdot 2 C_1} L \tau}$
	\begin{align*}
	\norm{ \nabla f(\bar X^{(t + k)})-\nabla f(X^\star)}_F^2 \leq L^2 \norm{\Psi_{t + k - 1}}_F^2 + 5 L n  \left(f(\bar \xx^{(t + 
		k - 1)})  - f(\xx^\star)  \right)
	\end{align*}
	Coming back to recursion for $\norm{\Psi_{t + k}}_F^2$ and using that $\frac{2 C_1 L^2 \tau}{c^2} \gamma^2   \leq \frac{1}{512 \tau}$ by our choice of $\gamma$,
	\begin{align*}
	\E \norm{\Psi_{t + k} }_F^2 & \leq \frac{3}{4} \E \norm{\Psi_{t}}_F^2 + \frac{1}{256 \tau} \sum_{j = 0}^k \norm{\Psi_{t + j}}_F^2 +  \gamma^2  \frac{C_1 \tau}{c^2} 14 L n \sum_{j = 0}^{k - 1 } \E \left( f(\bar \xx^{(t + j)})  - f(\xx^\star) \right) + \gamma^2 \frac{2 C_2 \tau }{c^2} n \sigma^2
	\end{align*}
	It is only left to get rid of $\norm{\Psi_{t + k}}_F^2$ from RHS. For that we move the term with $\norm{\Psi_{t + k}}_F^2$ to LHS and divide the whole equation by $(1 - \frac{1}{256 \tau})$. We use that $\left(1 - \frac{1}{256 \tau}\right)^{-1} \leq 1 + \frac{1}{128 \tau} \leq 1 + \frac{1}{256} < 2$, and that $\left(1 - \frac{1}{4}\right) (1 + \frac{1}{128 \tau}) \leq \left(1 - \frac{1}{4}\right) (1 + \frac{1}{128}) \leq (1 - \frac{1}{8})$. We thus arrive to the Lemma's statement
	\begin{align*}
	\E \norm{\Psi_{t + k} }_F^2 & \leq \frac{7}{8} \norm{\Psi_{t}}_F^2 + \frac{1}{128 \tau} \sum_{j = 0}^{k - 1} \E  \norm{\Psi_{t + j}}_F^2 +  \gamma^2  \frac{28 C_1 \tau}{c^2} L n \sum_{j = 0}^{k - 1 } \E \left( f(\bar \xx^{(t + j)})  - f(\xx^\star) \right) + \gamma^2 \frac{4 C_2 \tau }{c^2} n \sigma^2
	\end{align*}
\end{proof}

\paragraph{Proof of Lemma~\ref{lem:unroll_rec}.}

\begin{proof}
	Define $\alpha = 28 C_1 \frac{\tau}{c^2} L n$, $\beta = 4 C_2 \frac{\tau}{c^2} \sigma^2 n$ for simplicity. Then inequality \eqref{eq:y} takes the form
\begin{align}\label{eq:inequality2}
\E \norm{\Psi_{t + k} }_F^2 \leq \left(1 - \frac{1}{8} \right) \E \norm{\Psi_{t} }_F^2 + \frac{1}{128 \tau} \sum_{j = 0}^{k - 1} \E \norm{\Psi_{t + j} }_F^2 + \alpha \gamma^2 \sum_{j = 0}^{k - 1} \E e_{t + j} + \beta \gamma^2 
\end{align}
\paragraph{A new quantity.} We define a new quantity that has non-increasing properties even for $k < \tau$ in contrast to $\E \norm{\Psi_{t + k} }_F^2$. For $t \geq 0$ we define
\begin{align*}
\Phi_{t + \tau} := \frac{1}{\tau}\sum_{j = 0}^{\tau - 1} \E \norm{\Psi_{t + j} }_F^2 && E_{t + \tau} := \alpha  \sum_{j = 0}^{\tau - 1} \E e_{t + j}
\end{align*}
\emph{Non-increasing property for $k < \tau$ (but $t + k \geq \tau$).}
\begin{align*}
\Phi_{t + k} = \frac{1}{\tau}\left(\sum_{i = k}^{\tau - 1}\E \norm{\Psi_{t - \tau + i} }_F^2 + \sum_{i = 0}^{k - 1}\E \norm{\Psi_{t + i} }_F^2\right)
\end{align*}
Applying \eqref{eq:inequality2} to the second sum,
\begin{align*}
\Phi_{t + k} &\leq \frac{1}{\tau} \sum_{i = k}^{\tau - 1} \E \norm{\Psi_{t - \tau + i} }_F^2 + \frac{1}{\tau} \sum_{i = 0}^{k - 1} \left[\left(1 - \frac{1}{8}\right)\E \norm{\Psi_{t - \tau + i} }_F^2 + \frac{1}{128} \Phi_{t + i} + \gamma^2 E_{t + i} + \beta \gamma^2 \right]
\end{align*}
\begin{align}\label{eq:second_recursion}
\Phi_{t + k} \leq \Phi_t + \frac{1}{128 \tau} \sum_{i = 0}^{k - 1} \Phi_{t + i} + \frac{1}{\tau}\gamma^2 \sum_{i = 0}^{k - 1} E_{t + i} + \frac{k}{\tau} \beta \gamma^2,
\end{align}
where we used that $\Theta_t \geq 0 ~~ \forall t$ and that $\tau \geq k$.

\emph{Contraction property for $ \tau \leq  k \leq 2 \tau$. } Using \eqref{eq:inequality2} and a definition of $\Phi_{t + k}$,
\begin{align*}
\Phi_{t + k} = \frac{1}{\tau}\sum_{j = k - \tau}^{k - 1}  \E \norm{\Psi_{t + j} }_F^2  \leq \left(1-\frac{1}{8}\right)\underbrace{\frac{1}{\tau} \sum_{j = k - \tau}^{k - 1}  \E \norm{\Psi_{t + j - \tau } }_F^2  }_{\Phi_{t + k - \tau}} + \frac{1}{128 \tau} \sum_{j = k - \tau}^{k - 1} \Phi_{t + j} + \gamma^2 \frac{1}{\tau} \sum_{i = k - \tau}^{k - 1} E_{t + i} + \beta \gamma^2 
\end{align*}
Combining with \eqref{eq:second_recursion} we get contraction for $\Phi_{t + k}$
\begin{align}\label{eq:first_recursion}
\Phi_{t + k}  &\stackrel{\eqref{eq:second_recursion}}{\leq}  \left(1-\frac{1}{8}\right) {\Phi_{t }} + \frac{1}{128 \tau} \sum_{j = 0}^{k - 1} \Phi_{t + j} + \gamma^2 \frac{1}{\tau} \sum_{i = 0 }^{k - 1} E_{t + i} + 2 \beta \gamma^2
\end{align}

\emph{Simplifying contraction property.} First, we substitute \eqref{eq:second_recursion} into the second term of \eqref{eq:first_recursion}
\begin{align*}
\Phi_{t + k} &\leq \left(1 - \frac{1}{8}\right) \Phi_t + \frac{1}{128 \tau} \sum_{i = 0}^{k - 2} \Phi_{t + i} + \frac{1}{128 \tau} \left[ \Phi_t + \frac{1}{128 \tau} \sum_{i = 0}^{k - 2} \Phi_{t + i} + \gamma^2 \frac{1}{\tau} \sum_{i = 0}^{k - 2} E_{t  + i} +  2 \beta \gamma^2  \right] + \gamma^2 \frac{1}{\tau} \sum_{i = 0}^{k - 1} E_{t + i} + 2 \beta \gamma^2 \\
&\leq \left(1 - \frac{1}{8}\right) \left(1 + \frac{1}{64 \tau}\right) \Phi_t +\left(1 + \frac{1}{128 \tau}\right) \left[ \frac{1}{128 \tau}  \sum_{i =0}^{k - 2} \Phi_{t + i} + \gamma^2 \frac{1}{\tau}\sum_{i = 0}^{k - 2} E_{t + i} + 2 \beta \gamma^2 \right]+ \gamma^2 \frac{1}{\tau} E_{t + k - 1}
\end{align*}
where we used that $\frac{1}{128 \tau} = \left(1 - \frac{1}{2}\right) \frac{1}{64 \tau} \leq \left(1 - \frac{1}{8}\right) \frac{1}{64 \tau} $. Similarly applying \eqref{eq:second_recursion} to the rest of $\Phi_{t + i}$,
\begin{align*}
\Phi_{t + k} &\leq \left(1 - \frac{1}{8}\right) \left(1 + \frac{1}{64 \tau}\right)^k \Phi_t + \gamma^2 \frac{1}{\tau} \sum_{i = 0}^{k - 1} \left(1 + \frac{1}{128 \tau}\right)^{t + k - 1 - i}  E_{t + i} + \left(1 + \frac{1}{128\tau} \right)^k  2 \beta \gamma^2
\end{align*}
We further use $\left(1 + \frac{1}{64 \tau}\right)^k \leq \left(1 + \frac{1}{64 \tau}\right)^{2 \tau} \leq \exp(\frac{1}{32}) \leq 1 + \frac{1}{16} $ and $\left( 1 - \frac{1}{8} \right)\left(1 + \frac{1}{64 \tau} \right)^{k} \leq \left(1 - \frac{1}{16}\right)$; and that $\left(1 + \frac{1}{128\tau} \right)^k \leq  1 + \frac{1}{32} \leq 2$.
Therefore,
\begin{align}\label{eq:inequality_phi}
\Phi_{t + k} &\leq \left(1 - \frac{1}{16}\right) \Phi_t + 2 \gamma^2 \frac{1}{\tau}\sum_{i = 0}^{k - 1} E_{t + i} + 4 \beta \gamma^2 
\end{align}

\emph{Simplifying non-increasing property \eqref{eq:second_recursion}.} 
Similarly as above we substitute recursively \eqref{eq:second_recursion} into the second term of \eqref{eq:second_recursion}, for $0<k < \tau$
\begin{align*}
\Phi_{t + k} &\leq \left(1 + \frac{1}{128 \tau}\right) \Phi_{t} + \left(1 + \frac{1}{128 \tau}\right)  \left[\frac{1}{128\tau} \sum_{i = 0}^{k - 2}\Phi_{t + i} +  \gamma^2 \frac{1}{\tau}\sum_{i = 0}^{k - 2} E_{t + i} + \beta\gamma^2  \right]+ \gamma^2 \frac{1}{\tau} E_{t + \tau - 1} \\
& \leq \left(1 + \frac{1}{128\tau}\right)^\tau \Phi_t  + \gamma^2 \frac{1}{\tau} \sum_{i = 0}^{\tau - 1} \left(1 + \frac{1}{128 \tau}\right)^{t + \tau - 1 - i}  E_{t + i} + \left(1 + \frac{1}{128\tau} \right)^\tau \beta \gamma^2
\end{align*}
Using now that $\left(1 + \frac{1}{128\tau} \right)^\tau \leq 2$ we get
\begin{align}\label{eq:second_recursion_phi}
\Phi_{t + k} \leq 2 \Phi_t + 2 \gamma^2 \frac{1}{\tau}\sum_{i = 0}^{\tau - 1} E_{t + i} + 2 \beta \gamma^2 
\end{align}

\paragraph{Obtaining recursion for $\E \norm{\Psi_{t} }_F^2 + \Phi_t$. } As our final goal is to obtain inequality for $\E \norm{\Psi_{t} }_F^2$, we start modifying \eqref{eq:inequality2}, for $\tau \leq k \leq 2 \tau$
\begin{align*}
\E \norm{\Psi_{t + k} }_F^2 &\leq \left(1 - \frac{1}{8} \right) \E \norm{\Psi_{t} }_F^2 + \frac{1}{128 } \left( \Phi_{t + k} + \Phi_{t + \tau} \right) + \alpha \gamma^2 \sum_{j = 0}^{k - 1} e_{t + j} + \beta \gamma^2 \\
& \stackrel{\eqref{eq:second_recursion_phi}}{\leq}  \left(1 - \frac{1}{8} \right) \E \norm{\Psi_{t} }_F^2  + \frac{1}{128 } \left[  4 \Phi_t +4 \gamma^2 \frac{1}{\tau} \sum_{j = 0}^{k - 1} E_{t + j }  \right] + \alpha \gamma^2 \sum_{j = 0}^{k - 1} e_{t + j} + 2 \beta \gamma^2 \\
& \leq \left(1 - \frac{1}{8} \right) \E \norm{\Psi_{t} }_F^2  + \frac{1}{32}\Phi_t + \frac{\gamma^2}{32} \frac{1}{\tau} \sum_{j = 0}^{ k - 1} E_{t + j} + \alpha \gamma^2 \sum_{j = 0}^{k - 1} e_{t + j} + 2 \beta \gamma^2 
\end{align*}
Summing up the last inequality and \eqref{eq:inequality_phi} we get 
\begin{align*}
\E \norm{\Psi_{t + k} }_F^2  + \Phi_{t + k} &\leq \left(1 - \frac{1}{32}\right) \left[\E \norm{\Psi_{t } }_F^2  + \Phi_t \right] + 3 \gamma^2 \frac{1}{\tau}\sum_{j = 0}^{k - 1} E_{t + j} + \gamma^2 \alpha \sum_{j = 0}^{k - 1} e_{t + j} + 6 \beta \gamma^2 
\end{align*}

\emph{Unrolling recursion up to $\tau$.} 
For a given $t \geq \tau$, lets define $m = \lfloor t / \tau\rfloor - 1$. Then 
\begin{align*}
\E \norm{\Psi_{t } }_F^2+ \Phi_t \leq \left(1 - \frac{1}{32}\right) \left[\E \norm{\Psi_{m \tau} }_F^2+ \Phi_{m\tau}\right]+ 3 \gamma^2 \frac{1}{\tau}\sum_{j = m\tau}^{t - 1}E_{j} + \gamma^2 \alpha \sum_{j = m\tau}^{t - 1} e_{j} + 6 \beta\gamma^2 
\end{align*}
Unrolling this recursively up to $\tau$ we get,
\begin{align}\label{eq:recursion}
\E \norm{\Psi_{t } }_F^2 + \Phi_t  &\leq \left(1 - \frac{1}{32}\right)^{m} \left[\E \norm{\Psi_{\tau } }_F^2+ \Phi_\tau\right]+  \gamma^2 \sum_{j = \tau}^{t - 1} \left(1 - \frac{1}{32}\right)^{\lfloor(t - j) / \tau \rfloor} \left[3 \frac{1}{\tau} E_j + \alpha e_j \right]+ 6 \beta \gamma^2 \sum_{j = 0}^{m - 1} \left(1 - \frac{1}{32}\right)^{j} 
\end{align}

\paragraph{Initial conditions. }
Inequality above work for $t \geq \tau$. Here, we focus on $t < \tau$. Using similar calculations as in Lemma~\ref{lem:consensus} replacing estimation of $\norm{\Psi^{0} J^t}_F^2$ by Lemma~\ref{lem:norm_J}, we get that
\begin{align}\label{eq:initial_cond}
\E \norm{\Psi_{t } }_F^2 \leq \underbrace{2 \norm{\Delta X^{(0)}}_F^2 + \frac{3 \gamma^2}{p^2} \norm{\Delta Y^{(0)}}^2_F}_{:= \tilde{\Theta}_0} + \frac{1}{128 \tau} \sum_{j = 0}^{t - 1} \E \norm{\Psi_{j} }_F^2 + \alpha \gamma^2 \sum_{j = 0}^{t - 1} e_{j} + \beta \gamma^2 
\end{align}

Recursively applying \eqref{eq:initial_cond} to the second term of \eqref{eq:initial_cond}, similarly as above, we get
\begin{align}\label{eq:initial_fin}
\E \norm{\Psi_{t } }_F^2\leq 2 \tilde\Theta_0 + 2 \alpha \gamma^2 \frac{1}{\tau}\sum_{j = 0}^{t - 1} e_{j} + 2 \beta \gamma^2 
\end{align}
And therefore,
\begin{align}\label{eq:initial_fin2}
\Phi_\tau = \frac{1}{\tau}\sum_{j = 0}^{\tau - 1} \E \norm{\Psi_{ j} }_F^2\leq 2  \tilde\Theta_0 + 2 \alpha \gamma^2 \frac{1}{\tau}\sum_{j = 0}^{\tau - 1} e_{j} + 2 \beta \gamma^2 
\end{align}

\paragraph{Final recursion. }
Finally we apply \eqref{eq:initial_fin}, \eqref{eq:initial_fin2} to the first term of \eqref{eq:recursion}
\begin{align*}
\E \norm{\Psi_{t } }_F^2 + \Phi_t  &\leq \left(1 - \frac{1}{32}\right)^{m} 4 \tilde \Theta_0 +  \gamma^2 \sum_{j = \tau}^{t - 1} \left(1 - \frac{1}{32}\right)^{\lfloor(t - j) / \tau \rfloor} \left[3 \frac{1}{\tau} E_j + 5 \alpha e_j \right] + 10 \beta \gamma^2 \sum_{j = 0}^{m - 1} \left(1 - \frac{1}{32}\right)^{j} 
\end{align*}
\begin{itemize}
	\item For the last term we estimate $\sum_{j = 0}^{m - 1} \left(1 - \frac{1}{32}\right)^{j}  \leq 2$.
	\item For the terms with $e_j$ and $E_j$ we estimate, similar to \cite{koloskova2020unified}, \\
	\begin{align*}
	\left( 1 - \frac{1}{32}\right)^{1/\tau} &\leq  \exp(-\frac{1}{32\tau}) \leq 1- \frac{1}{64\tau} \qquad \qquad \text{and thus} \\
	\left(1 - \frac{1}{32}\right)^{\lfloor(t - j) / \tau \rfloor} &\leq \left( 1 - \frac{1}{64\tau} \right)^{\tau \lfloor ( t - j ) / \tau \rfloor} \leq \left( 1 - \frac{1}{64\tau} \right)^{ t - j }\left( 1 - \frac{1}{64\tau} \right)^{ -\tau } \leq 2 \left(1 - \frac{1}{64 \tau}\right)^{t - j}
	\end{align*}
	where as $\frac{1}{64\tau} \leq \frac{1}{2}$ we estimated $\left( 1 - \frac{1}{64\tau} \right)^{ -\tau } \leq \left(\frac{1}{1 - \frac{1}{64\tau}}\right)^\tau \leq ( 1 + \frac{1}{32\tau})^\tau\leq \exp(\frac{1}{32})  < 2$.
	\item Similarly, for $\tilde \Theta_0$ term we estimate \\
	$ \left(1 - \frac{1}{32}\right)^{m } = \left(1 - \frac{1}{32}\right)^{\lfloor\frac{t - \tau}{\tau}  \rfloor } \leq \left(1 - \frac{1}{64\tau}\right)^{ t } \left(1 - \frac{1}{64\tau}\right)^{ -2  \tau} \leq  2 \left(1 - \frac{1}{64\tau}\right)^{ t }$.
	\item For the terms with $E_j$ we additionally estimate
	\begin{align*}
	2 \left(1 - \frac{1}{64 \tau}\right)^{t - j} E_j  = 2 \left(1 - \frac{1}{64 \tau}\right)^{t - j} \sum_{i = j - \tau}^{j - 1} e_i =  2  \sum_{i = j - \tau}^{j - 1} \left(1 - \frac{1}{64 \tau}\right)^{t - i}  \left(1 - \frac{1}{64 \tau}\right)^{i - j} e_i 
	\end{align*}
	Further, $-\tau < i - j < -1$, and thus $\left(1 - \frac{1}{64 \tau}\right)^{i - j} \leq 2$ for all such $-\tau < i - j < -1$.
\end{itemize}
Therefore we obtain
\begin{align*}
\E \norm{\Psi_{t } }_F^2 + \Phi_t  &\leq \left(1 - \frac{1}{64\tau}\right)^{ t } 8 \tilde\Theta_0 +  22 \gamma^2 \alpha \sum_{j = 0}^{t - 1} \left(1 - \frac{1}{64\tau}\right)^{t - j} e_j + 20 \beta \gamma^2
\end{align*}
This brings us to the statement of the lemma.
\end{proof}
The rest of the proof follows closely \cite{koloskova2020unified}. 
\subsubsection{$\tau$-slow Sequences}

\begin{definition}[$\tau$-slow sequences \cite{StichK19delays}]\label{def:tau-slow}
	The sequence $\{a_t\}_{t \geq 0}$ of positive values is \emph{$\tau$-slow decreasing} for parameter $\tau > 0$ if
	\begin{align*}
	a_{t + 1} \leq a_t,\quad  \forall t \geq 0 && \text{and}, && a_{t + 1}\left( 1 + \frac{1}{2\tau}\right) \geq a_t, \quad\forall t \geq 0 \,.
	\end{align*}
	The sequence $\{a_t\}_{t \geq 0}$ is \emph{$\tau$-slow increasing} if $\{a_t^{-1}\}_{t \geq 0}$ is $\tau$-slow decreasing.
\end{definition}
\begin{proposition}[Examples]\label{prop:examples-tau-slow}\hfill\null
	\begin{enumerate}
		\item The sequence $\{\eta_t^2\}_{t\geq 0}$ with $\eta_t = \frac{a}{b + t}$, $b \geq 32\tau$ is $4 \tau$-slow decreasing. 
		\item The sequence of constant stepsizes $\{\eta_t^2\}_{t\geq 0}$ with $\eta_t = \eta$ is $\tau$-slow decreasing for any $\tau$.
		\item The sequence $\{w_t \}_{t \geq 0}$ with $w_t = (b + t)^2$, $b \geq 84\tau$ is $8\tau$-slow increasing. 
		\item The sequence of constant weights $\{w_t\}_{t\geq 0}$ with $w_t = 1$ is $\tau$-slow increasing for any $\tau$.
	\end{enumerate}
\end{proposition}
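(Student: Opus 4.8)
The plan is to verify each of the four claims directly against Definition~\ref{def:tau-slow}; no machinery beyond elementary inequalities is needed. Items~2 and~4 are immediate: if $a_t\equiv\eta^2$ is constant, then $a_{t+1}=a_t$, so both $a_{t+1}\le a_t$ and $a_{t+1}\bigl(1+\tfrac1{2\tau}\bigr)\ge a_t$ hold for \emph{every} $\tau>0$, which gives item~2; and since the reciprocal of a constant sequence is again constant, the sequence $w_t\equiv1$ is $\tau$-slow increasing for every $\tau$ by the very definition of ``slow increasing''. Hence the content lies in items~1 and~3, and in both the monotonicity half (i.e.\ $a_{t+1}\le a_t$, resp.\ $w_t\le w_{t+1}$) is obvious because $b+t$ is increasing; only the ``slow'' half requires an estimate, and it is the \emph{same} estimate in both cases.

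For item~1, write $a_t=\eta_t^2=\tfrac{a^2}{(b+t)^2}$. The inequality $a_{t+1}\bigl(1+\tfrac1{8\tau}\bigr)\ge a_t$ (the ``slow'' condition for parameter $4\tau$) is equivalent to
\[
1+\frac1{8\tau}\;\ge\;\Bigl(\frac{b+t+1}{b+t}\Bigr)^{2}=\Bigl(1+\frac1{b+t}\Bigr)^{2}.
\]
First I would bound the right-hand side crudely by $1+\tfrac3{b+t}$, using $\bigl(1+\tfrac1{b+t}\bigr)^2 = 1+\tfrac2{b+t}+\tfrac1{(b+t)^2}\le 1+\tfrac3{b+t}$, valid since $b+t\ge1$. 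Then, since $b+t\ge b\ge32\tau$, we get $\tfrac3{b+t}\le\tfrac3{32\tau}\le\tfrac1{8\tau}$, which closes the estimate. Hence $\{\eta_t^2\}$ is $4\tau$-slow decreasing.

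For item~3, by definition $w_t=(b+t)^2$ is $8\tau$-slow increasing exactly when $\{w_t^{-1}\}=\{(b+t)^{-2}\}$ is $8\tau$-slow decreasing, i.e.\ when $w_{t+1}^{-1}\bigl(1+\tfrac1{16\tau}\bigr)\ge w_t^{-1}$, which after clearing denominators is again of the form
\[
1+\frac1{16\tau}\;\ge\;\Bigl(1+\frac1{b+t}\Bigr)^{2}\;\le\; 1+\frac3{b+t}.
\]
This now holds because $b+t\ge b\ge84\tau\ge48\tau$ forces $\tfrac3{b+t}\le\tfrac1{16\tau}$. Together with $w_{t+1}\ge w_t$ this shows $\{w_t\}$ is $8\tau$-slow increasing.

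I do not expect any genuine obstacle: the whole proposition is direct verification, and the only point needing a little care is matching the numerical constants ($b\ge32\tau$ for the stepsize squares, $b\ge84\tau$ for the weights) to the factor $\tfrac1{2\kappa}$ hidden in the definition of ``$\kappa$-slow'' through the elementary bound $(1+x)^2\le1+3x$ for $x\in(0,1]$. The slack in item~3 (any $b\ge48\tau$ would already work) is harmless and can simply be noted in passing.
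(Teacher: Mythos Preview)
The paper states Proposition~\ref{prop:examples-tau-slow} without proof (it is a list of elementary examples following Definition~\ref{def:tau-slow}), so there is no ``paper's proof'' to compare against. Your direct verification is correct and is exactly the kind of argument one would expect: items~2 and~4 are trivial, and for items~1 and~3 the estimate $(1+x)^2\le 1+3x$ for $x\in(0,1]$ together with $b+t\ge b$ and the stated lower bounds on $b$ immediately gives the ``slow'' inequality with the right constants. Your observation that $b\ge 48\tau$ would already suffice in item~3 is accurate; the constant $84$ is just slack. One cosmetic point: in the displayed chain for item~3 you mix $\ge$ and $\le$ in a single line, which reads ambiguously; it would be cleaner to write the two inequalities separately (first $(1+\tfrac{1}{b+t})^2\le 1+\tfrac{3}{b+t}$, then $\tfrac{3}{b+t}\le\tfrac{1}{16\tau}$).
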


\subsubsection{The Main Recursion}
\begin{lemma}[The main recursion] \label{lem:cons+descent} Let $\{w_t\}_{t \geq 0}$ be $64 \tau$-slow increasing sequence, $W_t = \frac{1}{T + 1} \sum_{t = 0}^T w_t$, with $\gamma \leq \frac{c}{582 C_1 \tau L }$ it holds that 
	\begin{align}\label{eq:main_rec}
	\sum_{t = 0}^T w_t \E \norm{\Psi_{t} }_F^2 &\leq  \sum_{t = 0}^T w_t  \left(1 - \frac{1}{64\tau}\right)^{ t } 8 \tilde \Theta_0 + \frac{n}{6 L}\sum_{t = 0}^T e_t w_t + 40 C_2 \frac{\tau}{c^2} \sigma^2 n \gamma^2 W_T,
	\end{align}
	where  $e_t = f(\bar{\xx}^{(t)}) - f^\star$, $\tilde \Theta_0 = 2 \norm{\Delta X^{(0)}}_F^2 + \frac{3 \gamma^2}{p^2} \norm{\Delta Y^{(0)}}^2_F$, $C_1 = 440, C_2 = 380$. 
\end{lemma}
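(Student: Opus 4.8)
The plan is to feed the unrolled consensus recursion of Lemma~\ref{lem:unroll_rec} into a weighted sum and use the $64\tau$-slow increasing property of $\{w_t\}$ to convert the geometrically discounted sums of $e_j$ into an ordinary weighted sum $\sum_t w_t e_t$. Recalling that $B_1 = 28C_1$, $B_2 = 4C_2$, and that $A_0 = 16\norm{\Delta X^{(0)}}_F^2 + \tfrac{24\gamma^2}{p^2}\norm{\Delta Y^{(0)}}_F^2 = 8\tilde\Theta_0$, Lemma~\ref{lem:unroll_rec} gives, for $\gamma < \tfrac{c}{\sqrt{7B_1}L\tau}$ and every $t \geq 0$,
\begin{align*}
\E\norm{\Psi_t}_F^2 \leq 8\tilde\Theta_0\Bigl(1-\tfrac{1}{64\tau}\Bigr)^{t} + \frac{22 B_1\tau L n\gamma^2}{c^2}\sum_{j=0}^{t-1}\Bigl(1-\tfrac{1}{64\tau}\Bigr)^{t-j}e_j + \frac{20 B_2\tau n\sigma^2\gamma^2}{c^2}\,.
\end{align*}
First I would multiply this inequality by $w_t \geq 0$ and sum over $t = 0,\dots,T$, which produces three terms. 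The $\tilde\Theta_0$ term reproduces verbatim the first term of~\eqref{eq:main_rec}, and the constant noise term contributes $\tfrac{20B_2\tau n\sigma^2\gamma^2}{c^2}\sum_{t=0}^{T}w_t$, which matches the stated $\cO\bigl(\tfrac{\tau}{c^2}\sigma^2 n\gamma^2 W_T\bigr)$ term; these require no further argument.

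The only nontrivial piece is the middle double sum $\sum_{t=0}^{T}w_t\sum_{j=0}^{t-1}(1-\tfrac{1}{64\tau})^{t-j}e_j$. I would swap the order of summation to obtain $\sum_{j=0}^{T-1}e_j\sum_{t=j+1}^{T}w_t(1-\tfrac{1}{64\tau})^{t-j}$, and then invoke Definition~\ref{def:tau-slow}: a $64\tau$-slow increasing sequence satisfies $w_t \leq (1+\tfrac{1}{128\tau})^{t-j}w_j$ for $t \geq j$, and since $(1+\tfrac{1}{128\tau})(1-\tfrac{1}{64\tau}) \leq 1-\tfrac{1}{128\tau}$ this yields $w_t(1-\tfrac{1}{64\tau})^{t-j} \leq w_j(1-\tfrac{1}{128\tau})^{t-j}$. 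Summing the geometric tail, $\sum_{k\geq1}(1-\tfrac{1}{128\tau})^{k} \leq 128\tau$, bounds the double sum by $128\tau\sum_{t=0}^{T}w_t e_t$, so the middle contribution is at most $\tfrac{22\cdot128\,B_1\tau^2 L n\gamma^2}{c^2}\sum_{t=0}^{T}w_t e_t$. It then remains to check that the stepsize restriction $\gamma \leq \tfrac{c}{582 C_1\tau L}$ forces this prefactor below $\tfrac{n}{6L}$, i.e.\ that $582^2 C_1 \geq 6\cdot22\cdot128\cdot28$, which holds comfortably for $C_1 = 440$ (and a fortiori $\gamma \leq \tfrac{c}{582 C_1\tau L}$ is stronger than the condition $\gamma < \tfrac{c}{\sqrt{7B_1}L\tau}$ needed for Lemma~\ref{lem:unroll_rec}). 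Assembling the three pieces yields~\eqref{eq:main_rec}.

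I expect the main obstacle to be the $\tau$-slow bookkeeping in this summation swap: one must verify that the extra geometric growth $(1+\tfrac{1}{128\tau})^{t-j}$ of the weights is strictly dominated by the decay $(1-\tfrac{1}{64\tau})^{t-j}$, so that the remaining series still has value $\cO(\tau)$ rather than blowing up to $\cO(\tau^2)$ or worse, and then that the resulting $\cO(\tau^2 L n\gamma^2/c^2)$ coefficient on $\sum_t w_t e_t$ is pushed below $\tfrac{n}{6L}$ purely by the choice of $\gamma$. This $\tfrac{n}{6L}$ slack is precisely what will later be absorbed by the $-\gamma e_t$ descent term of Lemma~\ref{lem:descent} once this recursion is combined with the average-sequence recursion; all remaining manipulations are routine and follow the template of~\cite{koloskova2020unified}.
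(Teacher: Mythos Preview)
Your proposal is correct and follows essentially the same approach as the paper: start from the unrolled recursion of Lemma~\ref{lem:unroll_rec} (noting $A_0 = 8\tilde\Theta_0$), take the weighted sum, use the $64\tau$-slow increasing property $w_t \leq w_j(1+\tfrac{1}{128\tau})^{t-j}$ together with $(1+\tfrac{1}{128\tau})(1-\tfrac{1}{64\tau}) \leq 1-\tfrac{1}{128\tau}$ to collapse the double sum into $128\tau\sum_t w_t e_t$, and then absorb the resulting coefficient via the stepsize bound. Your constant bookkeeping is also accurate; the only discrepancy (your noise coefficient $20B_2 = 80C_2$ versus the stated $40C_2$) reflects an internal inconsistency in the paper's constants rather than an error on your part.
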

\begin{proof}
We start by averaging \eqref{eq:unrolled_rec} with weights $w_t$. Define $W_T = \sum_{t = 0}^T w_t$, $\alpha = 28 C_1 \frac{\tau}{c^2} L n$, $\beta = 4 C_2 \frac{\tau}{c^2} \sigma^2 n$, 
\begin{align*}
\sum_{t = 0}^T w_t \E \norm{\Psi_{t} }_F^2 \leq  \sum_{t = 0}^T w_t \left(1 - \frac{1}{64\tau}\right)^{ t } 8 \tilde \Theta_0 + 22 \gamma^2 \alpha \underbrace{\sum_{t = 0}^T w_t \sum_{j = 0}^{t - 1} \left(1 - \frac{1}{64\tau}\right)^{t - j} e_j }_{:= T_1} + 20 \beta \gamma^2 W_T
\end{align*}
For the middle term $T_1$ we use that $w_t$ are $64 \tau$-slow increasing sequences, i.e. $w_t \leq w_j \left(1 + \frac{1}{128 \tau}\right)^{t - j}$, we get
\begin{align*}
T_1 &= \sum_{t = 0}^T \sum_{j = 0}^{t - 1} \left(1 - \frac{1}{64\tau}\right)^{t - j} \left(1 + \frac{1}{128 \tau}\right)^{t - j} e_j w_j \leq \sum_{t = 0}^T \sum_{j = 0}^{t - 1} \left(1 - \frac{1}{128\tau}\right)^{t - j} e_j w_j \\
&\leq \sum_{j = 0}^T e_j w_j \sum_{t = j + 1}^{T } \left(1 - \frac{1}{128\tau}\right)^{t - j} \leq \sum_{j = 0}^T e_j w_j \sum_{t = 0}^{\infty} \left(1 - \frac{1}{128\tau}\right)^{t - j} \leq 128 \tau \sum_{t = 0}^T e_t w_t
\end{align*}
Therefore, 
\begin{align*}
\sum_{t = 0}^T w_t \Theta_t \leq \sum_{t = 0}^T w_t  \left(1 - \frac{1}{64\tau}\right)^{ t } 8 \tilde\Theta_0 + 2816 \gamma^2 \alpha \tau \sum_{t = 0}^T e_t w_t + 20 \beta \gamma^2 W_T
\end{align*}
Now using that $\gamma \leq \frac{c}{582 C_1 \tau L }$ and that $\alpha = 20 C_1 \frac{\tau}{c^2} L n$, $\beta = 2 C_2 \frac{\tau}{c^2} \sigma^2 n$. 
\begin{align*}
\sum_{t = 0}^T w_t \Theta_t &\leq \sum_{t = 0}^T w_t \left(1 - \frac{1}{64\tau}\right)^{ t } 8\tilde \Theta_0 + \frac{n}{6 L}\sum_{t = 0}^T e_t w_t + 40 C_2 \frac{\tau}{c^2} \sigma^2 n \gamma^2 W_T
\end{align*}
\end{proof}

\subsubsection{Combining with the Descent Lemma~\ref{lem:descent}}

\begin{lemma}\label{lem:cons+descent2}
	Define $D = \frac{\sigma^2}{n}$, $a = \frac{\mu}{2}$, $A = 24 L \frac{1}{n} \tilde \Theta_0$, $\tilde \Theta_0 = 2 \norm{\Delta X^{(0)}}_F^2 + \frac{3 \gamma^2}{p^2} \norm{\Delta Y^{(0)}}^2_F$, $B = 120 C_2 L \frac{\tau}{c^2} \sigma^2$, $C_1 = 440, C_2 = 380$. Then with $\gamma \leq \frac{c}{582 C_1 \tau L }$ it holds that 
\begin{align}
\frac{1}{2 W_T} \sum_{t = 0}^T w_t e_t &\leq \frac{1}{W_T}\sum_{t = 0}^T \left( \frac{\left(1 - \gamma a \right)}{\gamma} w_t r_t  - \frac{w_t}{\gamma}r_{t + 1}\right) + D \gamma + \frac{A}{W_T} \sum_{t = 0}^T w_t \left(1 - \frac{1}{64\tau}\right)^t  + B \gamma^2 \label{eq:rec_convex}
\end{align}	
\end{lemma}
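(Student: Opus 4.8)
The plan is to combine the descent inequality~\eqref{eq:x} for the averaged iterate with the weighted consensus bound~\eqref{eq:main_rec}, and then absorb the resulting $\sum_t w_t e_t$ contribution back into the left-hand side. First I would rewrite Lemma~\ref{lem:descent}: setting $r_t := \E\norm{\bar\xx^{(t)}-\xx^\star}^2$ and using $\sum_{i=1}^n \E\norm{\bar\xx^{(t)}-\xx_i^{(t)}}^2 = \E\norm{\Delta X^{(t)}}_F^2 \le \E\norm{\Psi_t}_F^2$, inequality~\eqref{eq:x} (which applies since $\gamma \le \frac{c}{582 C_1\tau L} \le \frac{1}{12L}$, as $c\le 1$, $\tau\ge 1$, $C_1=440$) rearranges, with $a = \mu/2$, to
\begin{align*}
\gamma e_t \le (1-\gamma a)\,r_t - r_{t+1} + \frac{\gamma^2\sigma^2}{n} + \frac{3\gamma L}{n}\,\E\norm{\Psi_t}_F^2\,.
\end{align*}

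Next I would multiply this by $w_t/\gamma \ge 0$, sum over $t=0,\dots,T$, and divide by $W_T$. This already yields the first two terms of~\eqref{eq:rec_convex}, namely $\frac1{W_T}\sum_t\bigl(\frac{1-\gamma a}{\gamma}w_t r_t - \frac{w_t}{\gamma}r_{t+1}\bigr)$ and $\frac{\gamma\sigma^2}{n} = D\gamma$, plus the residual $\frac{3L}{nW_T}\sum_{t=0}^T w_t\,\E\norm{\Psi_t}_F^2$. I would then control the residual by invoking Lemma~\ref{lem:cons+descent}, which holds under the very same stepsize restriction. The crucial cancellation is $\frac{3L}{n}\cdot\frac{n}{6L} = \frac12$, so the $e_t$-part of~\eqref{eq:main_rec} contributes exactly $\frac12\cdot\frac1{W_T}\sum_t w_t e_t$, which is moved to the left; the remaining two parts of~\eqref{eq:main_rec} become $\frac{3L}{n}\cdot 8\tilde\Theta_0 = A$ times $\frac1{W_T}\sum_t w_t(1-\tfrac1{64\tau})^t$, and $\frac{3L}{n}\cdot 40 C_2\frac{\tau}{c^2}\sigma^2 n\gamma^2 = B\gamma^2$ with $B = 120 C_2 L\frac{\tau}{c^2}\sigma^2$. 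After subtracting $\frac12\cdot\frac1{W_T}\sum_t w_t e_t$ and relabeling the constants ($D=\sigma^2/n$, $A = 24L\tilde\Theta_0/n$, $B$ as above), this is precisely~\eqref{eq:rec_convex}. Note that the $r_{t+1}$ terms are deliberately kept in the displayed form rather than telescoped, since the factor $(1-\gamma a)$ prevents a clean telescoping; this is handled later via the $\tau$-slow sequence machinery.

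The step needing the most care is not any individual estimate but the bookkeeping. One must check that the single bound $\gamma \le \frac{c}{582 C_1\tau L}$ simultaneously satisfies the hypothesis $\gamma\le\frac1{12L}$ of Lemma~\ref{lem:descent} and the hypothesis of Lemma~\ref{lem:cons+descent}, and — more delicately — that the constant $\frac{3L}{n}$ multiplying the consensus term in the descent lemma exactly matches the $\frac{n}{6L}$ prefactor of $\sum_t e_t w_t$ in~\eqref{eq:main_rec}, so that precisely half (rather than a larger fraction) of $\sum_t w_t e_t$ is recovered on the right-hand side and can be absorbed; this is exactly why the constants entering Lemma~\ref{lem:cons+descent} were tuned as they are. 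Everything else is routine summation with non-negative weights.
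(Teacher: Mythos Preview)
Your proposal is correct and follows essentially the same approach as the paper: rearrange the descent inequality~\eqref{eq:x}, multiply by $w_t/\gamma$, sum and divide by $W_T$, bound the consensus residual via Lemma~\ref{lem:cons+descent}, and absorb the resulting $\tfrac{1}{2}\cdot\tfrac{1}{W_T}\sum_t w_t e_t$ into the left-hand side. Your additional checks (that the stepsize bound implies $\gamma\le\tfrac{1}{12L}$, and the exact matching of the constants $\tfrac{3L}{n}\cdot\tfrac{n}{6L}=\tfrac12$, $\tfrac{3L}{n}\cdot 8\tilde\Theta_0 = A$, $\tfrac{3L}{n}\cdot 40C_2\tfrac{\tau}{c^2}\sigma^2 n = B$) are exactly the bookkeeping the paper carries out.
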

\begin{proof}
	First, define $W_T = \sum_{t = 0}^T w_t$, $r_t = {\norm{\bar{\xx}^{(t)} - \xx^\star}}^2$, $\Theta_t = \sum_{i = 1}^n \E  \big \|\bar{\xx}^{(t)} - \xx_i^{(t)} \big \|^2$. In this notation, \eqref{eq:x} writes as 
	\begin{align*}
	r_{t + 1} &\leq \left(1 - \dfrac{\gamma\mu}{2}\right) r_t + \dfrac{\gamma^2\sigma^2}{n} - \gamma e_t+ \gamma \dfrac{3 L }{n} \Theta_t,
	\end{align*}
	 We rearrange \eqref{eq:x} by multiplying by $w_t$ and dividing by $\gamma$
	 \begin{align*}
	 w_t e_t &\leq \frac{\left(1 - \frac{\gamma\mu}{2}\right)}{\gamma} w_t r_t  - \frac{w_t}{\gamma}r_{t + 1} + \dfrac{\sigma^2}{n} w_t \gamma + \dfrac{3 L }{n} w_t \Theta_t,
	 \end{align*}
	 Now summing up, dividing by $W_T$, using that $\Theta_t \leq \E \norm{\Psi_t}^2_F$, and using \eqref{eq:main_rec}
	 \begin{align*}
	 \frac{1}{W_T} \sum_{t = 0}^T w_t e_t &\leq \frac{1}{W_T} \sum_{t = 0}^T \left( \frac{\left(1 - \frac{\gamma\mu}{2}\right)}{\gamma} w_t r_t  - \frac{w_t}{\gamma}r_{t + 1} \right) + \frac{\sigma^2}{n} \gamma  + \frac{1}{W_T} \sum_{t = 0}^T w_t  24 L\left(1 - \frac{1}{64\tau}\right)^{ t } \frac{1}{n} \tilde\Theta_0 \\
	 & \qquad \qquad + \frac{1}{2} \frac{1}{W_T}\sum_{t = 0}^T e_t w_t + 120 C_2 L \frac{\tau}{c^2} \sigma^2 \gamma^2 
	 \end{align*}
	 Putting the fourth term to LHS we get the statement of the lemma. 
\end{proof}

Now similar to \cite[Lemma~15]{koloskova2020unified} we obtain the rates of Theorem~\ref{thm:GT-better-upper-bound} for the strongly convex case, and  similar to \cite[Lemma~16]{koloskova2020unified} for the weakly convex case.

\subsubsection{Strongly Convex Case}

\begin{lemma}\label{lem:rate_strongly_convex}
	If non-negative sequences $\{r_t\}_{t\geq 0}, \{e_t\}_{t \geq 0}$ satisfy \eqref{eq:rec_convex} for some constants $a~>~0,~ D, A, B~\geq~0$, then there exists a constant stepsize $\gamma < \frac{1}{b}$ with $b \geq 128 a \tau$ such that for weights $w_t = (1 - a \gamma)^{-(t + 1)}$ and $W_T := \sum_{t= 0}^T w_t$ it holds:
	\begin{align*}
	\frac{1}{2 W_T} \sum_{t= 0}^T e_t w_t + a r_{T+1} \leq \tilde{\cO}\left((r_0 + \nicefrac{A}{2 a}) b\exp\left[-\frac{a (T + 1)}{b}\right] + \frac{D }{aT} + \frac{B}{a^2T^2} \right),
	\end{align*}
	where $\tilde{\cO}$ hides polylogarithmic factors. %
\end{lemma}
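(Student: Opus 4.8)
The plan is to follow the weighted-telescoping scheme behind \cite[Lemma~15]{koloskova2020unified}: inequality \eqref{eq:rec_convex} already has the ``potential-difference'' form, and with the prescribed weights $w_t=(1-a\gamma)^{-(t+1)}$ it collapses to a single clean inequality. First I would note that $(1-a\gamma)w_t=w_{t-1}$ (with the convention $w_{-1}=1$), so $\sum_{t=0}^T\bigl(\tfrac{1-\gamma a}{\gamma}w_t r_t-\tfrac{w_t}{\gamma}r_{t+1}\bigr)=\tfrac1\gamma(r_0-w_T r_{T+1})$, and summing the geometric series gives the exact identity $W_T=\tfrac{w_T-1}{a\gamma}$, whence $\tfrac{w_T}{\gamma W_T}=\tfrac{a\,w_T}{w_T-1}\ge a$. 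Since $r_{T+1}\ge0$ this lets me move $\tfrac{w_Tr_{T+1}}{\gamma W_T}$ to the left-hand side, replacing it by the smaller $a\,r_{T+1}$, and reduces \eqref{eq:rec_convex} to
\[
\tfrac{1}{2W_T}\sum_{t=0}^T w_t e_t+a\,r_{T+1}\ \le\ \frac{a\,r_0}{w_T-1}+D\gamma+B\gamma^2+\frac{A}{W_T}\sum_{t=0}^T w_t\Bigl(1-\tfrac{1}{64\tau}\Bigr)^{t},\qquad w_T:=(1-a\gamma)^{-(T+1)} .
\]

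The one ingredient not already present in \cite{koloskova2020unified} is this last, geometric term, and here the hypothesis $b\ge128a\tau$ does the work: it forces $a\gamma\le\tfrac1{128\tau}\le\tfrac12\cdot\tfrac1{64\tau}$, so $w_t(1-\tfrac1{64\tau})^t=(1-a\gamma)^{-1}q^t$ with $q:=\tfrac{1-1/(64\tau)}{1-a\gamma}<1$ and $1-q\ge\tfrac1{128\tau}$; hence $\sum_{t=0}^T w_t(1-\tfrac1{64\tau})^t\le\tfrac{(1-a\gamma)^{-1}}{1-q}\le256\tau$ \emph{uniformly in $T$}. Using $\tfrac1{W_T}=\tfrac{a\gamma}{w_T-1}$ and $256\tau a\gamma\le256\tau a/b\le2$ gives $\tfrac{A}{W_T}\sum_t w_t(1-\tfrac1{64\tau})^t\le\tfrac{2A}{w_T-1}$, which I merge with the $r_0$-term. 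Writing $\rho:=r_0+\tfrac{2A}{a}=\Theta(r_0+\tfrac{A}{2a})$ and using $(1-a\gamma)^{-(T+1)}-1\ge e^{a\gamma(T+1)}-1$ I arrive, for every $\gamma\in(0,\tfrac1b]$, at
\[
\tfrac{1}{2W_T}\sum_{t=0}^T w_t e_t+a\,r_{T+1}\ \le\ \frac{a\rho}{(1-a\gamma)^{-(T+1)}-1}+D\gamma+B\gamma^2 .
\]

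The last step is the standard stepsize-tuning argument (exactly as in the proof of \cite[Lemma~15]{koloskova2020unified}): take $\gamma=\min\bigl\{\tfrac1b,\ \tfrac{\ln v}{a(T+1)}\bigr\}$ with $v:=\max\bigl\{2,\ \tfrac{a^3\rho(T+1)^2}{B},\ \tfrac{a^2\rho(T+1)}{D}\bigr\}$. If the minimum is $\tfrac1b$, then using $e^{x}\ge1+x$ (equivalently $1-e^{-x}\ge\tfrac{x}{1+x}$) with $x=a(T+1)/b\ge a/b$ and $b\ge128a$ one gets $\tfrac{a\rho}{(1-a/b)^{-(T+1)}-1}\le\tfrac{a\rho}{e^{x}-1}\le2\rho\,b\,e^{-a(T+1)/b}$, while the case hypothesis $\tfrac1b\le\tfrac{\ln v}{a(T+1)}$ makes $D/b$ and $B/b^2$ lie in $\tilde\cO(\tfrac{D}{aT})$ and $\tilde\cO(\tfrac{B}{a^2T^2})$. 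If the minimum is the logarithmic term, then $\tfrac{a\rho}{v-1}\le\tfrac{2a\rho}{v}$ is $\tilde\cO(\tfrac{D}{aT}+\tfrac{B}{a^2T^2})$ by the definition of $v$, and $D\gamma,B\gamma^2$ contribute the same up to the $\ln v,\ln^2 v$ factors absorbed by $\tilde\cO$. Either way this gives the claimed bound, and the chosen $\gamma\le\tfrac1b$ also respects the stepsize restriction of Lemma~\ref{lem:cons+descent2} since $b$ is the constant appearing there.

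I expect the only genuine subtlety to be that last balancing step: making the powers of $a$, of $T$, and the logarithmic factors line up in the two regimes is the classical (but fiddly) part of this type of analysis. The conceptually new bookkeeping is short — bounding the geometric $A$-term by a $T$-independent quantity — but it is precisely where the condition $b\ge128a\tau$, i.e.\ $a\gamma\lesssim\tfrac1\tau$, is indispensable: without it that sum would grow with $T$ and the initial consensus error would fail to decay exponentially.
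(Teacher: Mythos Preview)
Your proposal is correct and follows essentially the same route as the paper: telescope \eqref{eq:rec_convex} with the weights $w_t=(1-a\gamma)^{-(t+1)}$, use $b\ge 128a\tau$ (equivalently $a\gamma\le\tfrac{1}{128\tau}$) to show the $A$-term is dominated by the $r_0$-term, and then tune $\gamma$ \`a la \cite{Stich19sgd}. The only cosmetic difference is in how the $A$-sum is handled---the paper substitutes $(1-\tfrac{1}{64\tau})^t\le(1-a\gamma)^{2t}$ and sums, whereas you bound $\sum_t w_t(1-\tfrac{1}{64\tau})^t\le 256\tau$ directly as a geometric series; both give the same $\tfrac{A}{2a}$-type contribution up to constants.
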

\begin{proof}
	Starting from \eqref{eq:rec_convex} and using that that $\frac{w_t (1 - a \gamma)}{\gamma} = \frac{w_{t - 1}}{\gamma}$ we obtain a telescoping sum,
	\begin{align*}
	\frac{1}{2 W_T}\sum_{t = 0}^T w_t e_t \leq \frac{1}{W_T\gamma} \left((1 - a \gamma) w_0 r_0 - w_T  r_{T + 1} \right)  + D \gamma + B \gamma^2 + \frac{A}{W_T} \sum_{t = 0}^T w_t \left(1 - \frac{1}{64\tau}\right)^t \,,
	\end{align*}
	And hence, 
	\begin{align*}
	\frac{1}{2 W_T}\sum_{t = 0}^T w_t e_t + \frac{w_T  r_{T + 1}}{W_T\gamma} \leq \frac{ r_0 }{W_T\gamma} + D \gamma + B \gamma^2 + \frac{A}{W_T} \sum_{t = 0}^T w_t \left(1 - \frac{1}{64\tau}\right)^t \,,
	\end{align*}
	Now we estimate the last term. We use that $2 \gamma a \leq \frac{1}{64  \tau}$ and thus $\left(1 - \frac{1}{64\tau}\right)^t \leq  (1 - a \gamma)^{ 2t } $ 
	\begin{align*}
	\frac{1}{W_T} \sum_{t = 0}^T (1 - a \gamma)^{-(t + 1)} \left(1 - \frac{1}{64\tau}\right)^t &\leq 	\frac{1}{W_T} \sum_{t = 0}^T (1 - a \gamma)^{ t -1} \leq \frac{1}{W_T} \frac{1}{2 a \gamma}
	\end{align*}
	where we used that $\frac{1}{1 - a \gamma} \leq \frac{1}{2}$.  Thus,
	\begin{align*}
	\frac{1}{2 W_T}\sum_{t = 0}^T w_t e_t + \frac{w_T  r_{T + 1}}{W_T\gamma} \leq \frac{ 1}{W_T\gamma} \left(r_0  + \frac{A }{2a } \right)+ D \gamma + B \gamma^2 \,,
	\end{align*}
	
	Using that $W_T \leq \frac{w_T}{a \gamma} $ and $W_T \geq w_T = (1 - a \gamma)^{-(T + 1)}$ we can simplify
	\begin{align*}
	\frac{1}{2 W_T}\sum_{t = 0}^T w_t e_t + a  r_{T + 1}\leq (1 - a\gamma)^{T + 1} \frac{1 }{\gamma}\left(r_0  + \frac{A }{2a } \right)  + D \gamma + B \gamma^2 \leq \frac{1 }{\gamma}\left(r_0  + \frac{A }{2a } \right) \exp\left[-a\gamma (T + 1)\right] + D \gamma + B \gamma^2\,,
	\end{align*}
	Now lemma follows by tuning $\gamma$ the same way as in \cite{Stich19sgd}.
	\begin{itemize}
		\item  If $\frac{1}{b} \geq \frac{\ln(\max\{2, a^2 (r_0 + \frac{A}{2a}) T^2/D\})    }{aT}$  then we choose $\eta = \frac{\ln(\max\{2, a^2 (r_0 + \frac{A}{2a}) T^2/D\})    }{aT}$ and get that 
		\begin{align*}
		\tilde{\cO}&\left(a (r_0 + \nicefrac{A}{2a})  T\exp\left[-\ln(\max\{2, a^2 (r_0 + \nicefrac{A}{2a}) T^2/D\})  \right] \right) + \tilde{\cO}\left(\frac{D}{aT} \right) + \tilde{\cO}\left(\frac{B}{a^2 T^2} \right)\\
		 &= \tilde{\cO}\left(\frac{D}{aT} \right) + \tilde{\cO}\left(\frac{B}{a^2 T^2} \right) \,,
		\end{align*}
		\item Otherwise $ \frac{1}{b} \leq \frac{\ln(\max\{2, a^2 (r_0 + \frac{A}{2a}) T^2/D\})    }{aT}$ we pick $\eta = \frac{1}{b}$ and get that 
		\begin{align*}		
		&\tilde \cO \left((r_0 + \nicefrac{A}{2a}) b\exp\left[-\frac{a (T + 1)}{b}\right] + \frac{D }{b} + \frac{B}{b^2} \right) \\
		& \leq \tilde{\cO}\left((r_0 + \nicefrac{A}{2a}) b \exp\left[-\frac{a (T + 1)}{b}\right] + \frac{D}{aT} + \frac{B}{a^2T^2} \right)\,. \qedhere
		\end{align*}
	\end{itemize}
\end{proof}

\subsection{Weakly Convex and Non Convex Cases}
\begin{lemma}\label{lem:rate_weakly_convex}
	If non-negative sequences $\{r_t\}_{t\geq 0}, \{e_t\}_{t \geq 0}$ satisfy \eqref{eq:rec_convex} with $a=0,~ D, A, B~\geq~0$, then there exists a constant stepsize $\gamma < \frac{1}{b}$ with $b \geq 128 a \tau$ such that for weights $\{w_t = 1\}_{t \geq 0}$ it holds that:
	\begin{align*}
	\frac{1}{(T + 1)} \sum_{t= 0}^T e_t \leq \cO \left(2  \left(\frac{c r_0}{T + 1}\right)^{\frac{1}{2}} + 2 B^{1/3}\left(\frac{r_0}{T + 1}\right)^{\frac{2}{3}} + \frac{b r_0 + A \tau}{T + 1}\right).
	\end{align*}
\end{lemma}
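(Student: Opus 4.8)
\textbf{Proof plan for Lemma~\ref{lem:rate_weakly_convex}.}
The idea is to collapse the master recursion \eqref{eq:rec_convex} into a one-dimensional inequality in the stepsize $\gamma$ and then apply the elementary stepsize-tuning argument of \cite{Stich19sgd}; this is the same device used for the strongly convex case in Lemma~\ref{lem:rate_strongly_convex}, except that here, with $a=0$, no logarithmic stepsize choice is needed.

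First I would substitute $a=0$, $w_t \equiv 1$ (so $W_T = T+1$) into \eqref{eq:rec_convex}. The telescoping sum collapses to $\sum_{t=0}^{T}(r_t - r_{t+1}) = r_0 - r_{T+1} \le r_0$ (using $r_{T+1}\ge 0$), and the geometric factor obeys $\sum_{t=0}^{T}\bigl(1-\tfrac1{64\tau}\bigr)^t \le 64\tau$. Multiplying through by $2$ this yields, for every stepsize admissible in \eqref{eq:rec_convex} --- in particular $\gamma \le \tfrac1b$ with $b = 582 C_1\tau L/c$ the constant forced by Lemma~\ref{lem:cons+descent2} ---
\[
\frac{1}{T+1}\sum_{t=0}^{T} e_t \;\le\; \frac{2 r_0}{\gamma(T+1)} \;+\; 2D\gamma \;+\; 2B\gamma^2 \;+\; \frac{128\,A\tau}{T+1}\,.
\]

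Next, writing $c_0 := \tfrac{2r_0}{T+1}$, $c_1 := 2D$, $c_2 := 2B$, the right-hand side equals $\tfrac{c_0}{\gamma}+c_1\gamma+c_2\gamma^2$ plus the $\gamma$-free remainder $\tfrac{128A\tau}{T+1}$, and must be minimized over $\gamma\in(0,1/b]$. I would invoke the standard bound
\[
\min_{0<\gamma\le 1/b}\Bigl(\tfrac{c_0}{\gamma}+c_1\gamma+c_2\gamma^2\Bigr) \;\le\; 2\sqrt{c_0 c_1} \;+\; 2\, c_0^{2/3} c_2^{1/3} \;+\; 3\, b\, c_0\,,
\]
proved by a short case split on whether the balancing points $\sqrt{c_0/c_1}$ and $(c_0/c_2)^{1/3}$ lie in $(0,1/b]$: if both do, take the smaller one, which attains $2\sqrt{c_0 c_1}$ (resp.\ $2c_0^{2/3}c_2^{1/3}$) and keeps the remaining cross term below its partner value; if one or both exceed $1/b$, set $\gamma=1/b$, and the defining inequalities $1/b<\sqrt{c_0/c_1}$ resp.\ $1/b<(c_0/c_2)^{1/3}$ give precisely $c_1/b\le b c_0$ resp.\ $c_2/b^2\le b c_0$. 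Substituting $c_0,c_1,c_2$ back and merging the remainder into the last term delivers the claimed bound $\cO\!\bigl(\sqrt{D r_0/(T+1)}+B^{1/3}(r_0/(T+1))^{2/3}+(b r_0+A\tau)/(T+1)\bigr)$, the first summand being the one printed as $2(c r_0/(T+1))^{1/2}$ in the statement (with $D$, the constant of Lemma~\ref{lem:cons+descent2}, in place of $c$).

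There is no genuine obstacle here: the only care needed is the \emph{constrained} minimization over $\gamma\le1/b$ --- confirming that in the boundary regime the $c_1/b$ and $c_2/b^2$ terms are dominated by $b c_0$ --- and keeping track of absolute constants so the final estimate is an honest $\cO(\cdot)$ (no hidden polylog appears, unlike in the strongly convex tuning of Lemma~\ref{lem:rate_strongly_convex}). The non-convex rate of Theorem~\ref{thm:GT-upper-bound-T} then follows by running the identical computation with $e_t$ replaced by $\tfrac1{2L}\norm{\nabla f(\bar\xx^{(t)})}_2^2$ and $r_t$ by $f(\bar\xx^{(t)})-f^\star$, using the non-convex counterpart of the descent Lemma~\ref{lem:descent}, which produces a recursion of the same shape \eqref{eq:rec_convex} with $a=0$ and $D\propto \sigma^2 L/n$.
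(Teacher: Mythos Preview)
Your proposal is correct and follows essentially the same route as the paper: substitute $a=0$, $w_t\equiv 1$ into \eqref{eq:rec_convex}, telescope $\sum(r_t-r_{t+1})\le r_0$, bound the geometric series by $64\tau$, and then minimize $\tfrac{r_0}{(T+1)\gamma}+D\gamma+B\gamma^2$ over $\gamma\le 1/b$ via the three-way case split on $\min\{\sqrt{c_0/c_1},(c_0/c_2)^{1/3},1/b\}$ (which the paper packages separately as Lemma~\ref{lem:tuning_stepsize}). You also correctly spotted that the ``$c$'' in the displayed bound of the lemma statement is a typo for $D$.
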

\begin{proof}
	With $a = 0$, constant stepsizes $\eta_t = \eta$ and weights $\{w_t = 1\}_{t \geq 0}$ \eqref{eq:rec_convex} is equivalent to 
	\begin{align*}
	\frac{1}{2(T + 1) }\sum_{t = 0}^T e_t &\leq \frac{1}{(T + 1)\gamma}\sum_{t=0}^T \left( r_t - r_{t + 1}\right)  + D \gamma + B  \gamma^2 + \frac{A}{T + 1} \sum_{t = 0}^T \left(1 - \frac{1}{64\tau}\right)^t  \\
	&\leq \frac{r_0}{(T + 1)\gamma} + D \gamma + B \gamma^2 + \frac{64 A \tau}{T + 1} .
	\end{align*}
	To conclude the proof we tune the stepsize for the first three terms using Lemma~\ref{lem:tuning_stepsize}.
\end{proof}
\begin{lemma}[Tuning the stepsize]\label{lem:tuning_stepsize}
	For any parameters $r_0 \geq 0, b \geq 0, e \geq 0, d \geq 0$ there exists constant stepsize $\eta \leq \frac{1}{b}$ such that 
	\begin{align*}
	\Psi_T :=  \frac{r_0}{\gamma (T + 1)} + D \eta  + B \eta^2 \leq 2  \left(\frac{D r_0}{T + 1}\right)^{\frac{1}{2}} +  2 B^{1/3}\left(\frac{r_0}{T + 1}\right)^{\frac{2}{3}} + \frac{b r_0}{T + 1}
	\end{align*}
\end{lemma}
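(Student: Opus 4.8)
The plan is to make the stepsize explicit and then argue by a three-way case distinction. Writing $N := T+1$ and identifying the stepsize $\gamma$ with $\eta$ (the statement uses the two symbols interchangeably), I would choose
\[
\eta \;:=\; \min\!\left\{\,\frac{1}{b},\; \Bigl(\frac{r_0}{D N}\Bigr)^{1/2},\; \Bigl(\frac{r_0}{B N}\Bigr)^{1/3}\right\},
\]
with the convention that a candidate is dropped from the minimum whenever its denominator is zero (the cases $D=0$ or $B=0$). This $\eta$ is admissible since $\eta\le 1/b$ by construction, and it is the balance point of the competing terms in $\Psi_T$.

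The two elementary observations that drive every case are: because $\eta\le(r_0/(DN))^{1/2}$ we always have $D\eta\le(D r_0/N)^{1/2}$, and because $\eta\le(r_0/(BN))^{1/3}$ we always have $B\eta^2\le B^{1/3}(r_0/N)^{2/3}$. Hence only the leading term $r_0/(\eta N)$ needs case-dependent attention. If the minimum is attained at $1/b$, then $r_0/(\eta N)=b r_0/N$ and, combined with the two observations, $\Psi_T\le b r_0/N+(D r_0/N)^{1/2}+B^{1/3}(r_0/N)^{2/3}$. If it is attained at $(r_0/(DN))^{1/2}$, then $r_0/(\eta N)=D\eta=(D r_0/N)^{1/2}$, so $\Psi_T\le 2(D r_0/N)^{1/2}+B^{1/3}(r_0/N)^{2/3}$. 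If it is attained at $(r_0/(BN))^{1/3}$, then $r_0/(\eta N)=B\eta^2=B^{1/3}(r_0/N)^{2/3}$, so $\Psi_T\le 2B^{1/3}(r_0/N)^{2/3}+(D r_0/N)^{1/2}$. In all three cases the bound is at most $2(D r_0/N)^{1/2}+2B^{1/3}(r_0/N)^{2/3}+b r_0/N$, which is exactly the claimed inequality.

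I do not expect any genuine obstacle: this is the standard three-term stepsize-tuning argument in the style of \cite{Stich19sgd,koloskova2020unified}. The only points requiring a line of care are bookkeeping — verifying that in each case the three summands enter with coefficients no larger than $(2,2,1)$, so the stated constants are not exceeded — and the degenerate inputs $D=0$ or $B=0$ (and the trivial sub-case $r_0=0$), where the relevant term vanishes on both sides and the displayed choice of $\eta$ reduces accordingly.
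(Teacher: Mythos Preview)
Your proposal is correct and follows essentially the same approach as the paper: the identical stepsize choice $\eta=\min\{1/b,\,(r_0/(DN))^{1/2},\,(r_0/(BN))^{1/3}\}$ and the same three-way case distinction. Your organization is slightly cleaner---you factor out the bounds $D\eta\le(Dr_0/N)^{1/2}$ and $B\eta^2\le B^{1/3}(r_0/N)^{2/3}$ once at the start rather than redoing the computation inside each case---but this is a presentational difference only.
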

\begin{proof}
	Choosing $\eta = \min\left\{\left(\frac{r_0}{D(T + 1)}\right)^{\frac{1}{2}}, \left(\frac{r_0}{B(T + 1)}\right)^{\frac{1}{3}} , \frac{1}{b} \right\} \leq \frac{1}{b}$ we have three cases
	\begin{itemize}
		\item $\eta  = \frac{1}{b}$ and is smaller than both $\left(\frac{r_0}{D(T + 1)}\right)^{\frac{1}{2}}$ and $\left(\frac{r_0}{B(T + 1)}\right)^{\frac{1}{3}} $, then 
		\begin{align*}
		\Psi_T &\leq  \frac{b r_0}{T + 1} + \frac{D}{b} + \frac{B}{b^2} \leq \left(\frac{D r_0}{T + 1}\right)^{\frac{1}{2}} + \frac{b r_0}{T + 1} + B^{1/3}\left(\frac{r_0}{T + 1}\right)^{\frac{2}{3}}
		\end{align*}
		\item $\eta = \left(\frac{r_0}{D(T + 1)}\right)^{\frac{1}{2}} < \left(\frac{r_0}{B(T + 1)}\right)^{\frac{1}{3}} $, then
		\begin{align*}
		\Psi_T &\leq  2 \left(\frac{r_0D}{T + 1 }\right)^{\frac{1}{2}}   + B \left(\frac{r_0}{D(T + 1)}\right) \leq   2 \left(\frac{r_0D}{T + 1}\right)^{\frac{1}{2}} + B^{\frac{1}{3}} \left(\frac{r_0}{(T + 1)}\right)^{\frac{2}{3}},
		\end{align*}
		\item The last case, $\eta = \left(\frac{r_0}{B(T + 1)}\right)^{\frac{1}{3}} < \left(\frac{r_0}{D(T + 1)}\right)^{\frac{1}{2}} $
		\begin{align*}
		\Psi_T &\leq  2 B^{\frac{1}{3}} \left(\frac{r_0}{(T + 1)}\right)^{\frac{2}{3}} + D \left(\frac{r_0}{B(T + 1)}\right)^{\frac{1}{3}} \leq 2 B^{\frac{1}{3}} \left(\frac{r_0}{(T + 1)}\right)^{\frac{2}{3}} + \left(\frac{D r_0}{T + 1}\right)^{\frac{1}{2}} \,. \qedhere
		\end{align*}
	\end{itemize}
\end{proof}

\subsection{Non-convex Case}
First, we state the descent Lemma for non-convex cases. Due to Lemma~\ref{lem:average}, it holds that 
\begin{lemma}[Descent lemma for non-convex case, Lemma 11 from \cite{koloskova2020unified}]\label{lem:decrease_nc} Under Assumptions as in Theorem~\ref{thm:GT-better-upper-bound},
	the averages $\bar{\xx}^{(t)} := \frac{1}{n}\sum_{i=1}^n \xx_i^{(t)}$ of the iterates of Algorithm \ref{alg:gt} with the constant stepsize $\gamma < \frac{1}{4 L (M + 1)}$ satisfy 
	\begin{align}\label{eq:main_recursion_nc}
	\EE{t + 1}{f(\bar{\xx}^{(t + 1)})} &\leq f(\bar{\xx}^{(t)}) - \frac{\gamma}{4}\norm{\nabla f(\bar{\xx}^{(t)})}_2^2 + \frac{\gamma L^2}{n} \sum_{i = 1}^n \norm{\bar{\xx}^{(t)} -  \xx_i^{(t)}}_2^2 + \frac{L}{n} \gamma^2 \sigma^2.
	\end{align}
\end{lemma}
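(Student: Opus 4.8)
The plan is to obtain \eqref{eq:main_recursion_nc} as the standard one‑step stochastic descent inequality applied to the \emph{average} iterate. By Lemma~\ref{lem:average} the averages evolve exactly like one mini‑batch SGD step on the complete graph, $\bar\xx^{(t+1)} = \bar\xx^{(t)} - \gamma\,\bar\yy^{(t)}$ with $\bar\yy^{(t)} = \frac1n\sum_{i=1}^n \nabla F_i(\xx_i^{(t)},\xi_i^{(t)})$, and $f=\frac1n\sum_i f_i$ is $L$‑smooth by Assumption~\ref{a:lsmooth_nc}; this is precisely the setting of \cite[Lemma 11]{koloskova2020unified} (with multiplicative‑noise constant $M=0$), so the statement follows. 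For completeness I would reproduce the short argument.

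First I would use $L$‑smoothness of $f$ at $\bar\xx^{(t)}$ to write
\[
 f(\bar\xx^{(t+1)}) \leq f(\bar\xx^{(t)}) - \gamma\,\langle \nabla f(\bar\xx^{(t)}),\, \bar\yy^{(t)}\rangle + \tfrac{L\gamma^2}{2}\,\norm{\bar\yy^{(t)}}^2 ,
\]
and then take the conditional expectation $\E_{t+1}$ over the noise $\xi^{(t)}=(\xi_1^{(t)},\dots,\xi_n^{(t)})$ that forms $\bar\xx^{(t+1)}$. Writing $\bar\gg^{(t)} := \frac1n\sum_i \nabla f_i(\xx_i^{(t)})$, unbiasedness gives $\E_{t+1}\bar\yy^{(t)}=\bar\gg^{(t)}$, and since the $\xi_i^{(t)}$ are independent across nodes, Assumption~\ref{a:opt_nc} yields $\E_{t+1}\norm{\bar\yy^{(t)}-\bar\gg^{(t)}}^2 \leq \sigma^2/n$, hence $\E_{t+1}\norm{\bar\yy^{(t)}}^2 \leq \norm{\bar\gg^{(t)}}^2 + \sigma^2/n$.

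Next I would split the cross term via the identity $-\langle a,b\rangle = \tfrac12(\norm{a-b}^2 - \norm{a}^2 - \norm{b}^2)$ with $a=\nabla f(\bar\xx^{(t)})$, $b=\bar\gg^{(t)}$, and bound the mismatch by Jensen and $L$‑smoothness of the $f_i$: $\norm{\nabla f(\bar\xx^{(t)}) - \bar\gg^{(t)}}^2 \leq \frac1n\sum_i \norm{\nabla f_i(\bar\xx^{(t)}) - \nabla f_i(\xx_i^{(t)})}^2 \leq \frac{L^2}{n}\sum_i \norm{\bar\xx^{(t)} - \xx_i^{(t)}}^2$. Collecting terms leaves $\tfrac{\gamma}{2}(L\gamma - 1)\norm{\bar\gg^{(t)}}^2$, which is nonpositive once $\gamma \leq \tfrac1{L}$ (a fortiori under $\gamma < \tfrac1{4L(M+1)} = \tfrac1{4L}$), so it can be dropped; after absorbing the surviving constants generously one obtains exactly \eqref{eq:main_recursion_nc}.

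There is no real obstacle here; one only needs to be careful that the noise entering $\bar\xx^{(t+1)}$ is $\xi^{(t)}$, so the variance‑reduction factor $1/n$ relies on the per‑node independence of the $\xi_i^{(t)}$, and to keep the distinction between $\nabla f(\bar\xx^{(t)})$ and the averaged local gradients $\bar\gg^{(t)}=\frac1n\sum_i\nabla f_i(\xx_i^{(t)})$ — their difference is precisely the consensus term in the bound. The gap between the $\tfrac{\gamma}{2}$ coefficient that comes out naturally and the $\tfrac{\gamma}{4}$ stated is harmless slack that also leaves room for the multiplicative‑noise version in \cite{koloskova2020unified}.
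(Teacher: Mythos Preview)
Your proposal is correct. The paper does not give its own proof of this lemma; it simply cites \cite[Lemma~11]{koloskova2020unified} and moves on, relying on Lemma~\ref{lem:average} to reduce to the mini-batch SGD setting. Your argument is precisely the standard one-step smoothness/polarization computation that underlies that cited lemma (and in fact yields the slightly sharper constants $-\tfrac{\gamma}{2}$, $\tfrac{\gamma L^2}{2n}$, $\tfrac{L\gamma^2\sigma^2}{2n}$, which you then relax to match the stated form).
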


Similarly as for the convex cases we prove the following recursion
\begin{lemma}[Consensus distance recursion]\label{lem:consensus_nc} There are exists absolute constants $C_1, C_2 > 0$ such that 
	\begin{align}
	\E \norm{\Psi_{t + k} }_F^2 & \leq \frac{3}{4} \norm{\Psi_{t}}_F^2 + \frac{1}{128 \tau} \sum_{j = 0}^{k - 1} \E  \norm{\Psi_{t + j}}_F^2 +  C_1 \gamma^2  \tau n\sum_{j = 0}^{k - 1 } e_{t + j} + C_2 \gamma^2  \left(\frac{\tau n }{c^2}   + \tau^2\right) \sigma^2
	\end{align}
	where $e_j = \norm{\nabla f(\bar \xx^{(j)})}^2$, $\tau \leq k \leq  2 \tau$, $\tau = \frac{2}{p}\log\left(\frac{50}{p} (1+ \log \frac{1}{p})\right) + 1$, $p$ and $c$ are defined in \eqref{def:p}, $\Psi_t = \left( \Delta X^{(t)}, \gamma \Delta Y^{(t)}\right)$ and is defined in \eqref{eq:25}.
\end{lemma}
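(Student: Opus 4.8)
The plan is to mirror the proof of the convex recursions (Lemma~\ref{lem-aux} and Lemma~\ref{lem:consensus}), but with the role of the optimum gradient $\nabla f(X^\star)=0$ taken over by the \emph{consensus gradient} $\nabla f(\bar X^{(t+j)})=\nabla f(\bar\xx^{(t+j)})\1^\top$. The key point is that $\nabla f(\bar X^{(t+j)})$ is a consensus matrix, hence $\nabla f(\bar X^{(t+j)})\tilde W^{i}=0$ for every $i\ge 1$, while $\|\nabla f(\bar X^{(t+j)})\|_F^2=n\|\nabla f(\bar\xx^{(t+j)})\|^2=n\,e_{t+j}$ exactly, and $\|\nabla f(X^{(t+j)})-\nabla f(\bar X^{(t+j)})\|_F^2\le L^2\|\Delta X^{(t+j)}\|_F^2\le L^2\|\Psi_{t+j}\|_F^2$ by $L$-smoothness. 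Thus in the non-convex case we never invoke the smoothness-to-$f^\star$ step that, in the convex case, introduces the factor $L$ into the function-value term and lets the $c^{-2}$ from Lemma~\ref{lem:norm_estimate} leak into it.

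First I would start from the unrolled identity $\Psi_{t+k}=\Psi_t J^k+\gamma\sum_{j=1}^{k}E_{t+j-1}J^{k-j}$, split it with~\eqref{eq:norm_of_sum_of_two} (taking $\alpha=\tfrac14$), bound the leading piece by $(1+\tfrac14)\|J^k\|^2\|\Psi_t\|_F^2\le\tfrac34\|\Psi_t\|_F^2$ via the key Lemma~\ref{lemma:key}, and split the error piece — exactly as in Lemma~\ref{lem-aux} — into $T_1$ (the block carrying the factors $(k-j)\tilde W^{k-j}$) and $T_2$ (the block carrying $\tilde W^{k-j}$). In $T_1$ the $j=k$ summand vanishes (its factor is $0\cdot I$); after peeling off the martingale noise $Z^{(t)}:=\nabla F(X^{(t)},\xi^{(t)})-\nabla f(X^{(t)})$ and regrouping the deterministic part by gradient index (Abel summation), every surviving matrix factor is $\tilde W$ to a power $\ge 1$, so I may subtract $\nabla f(\bar X^{(t+j)})$ in each term \emph{for free}; then Lemma~\ref{lem:norm_estimate} ($\|(i+1)\tilde W^{i+1}-i\tilde W^i\|^2\le 16c^{-2}$) and Lemma~\ref{lemma:key} ($\|(k-1)\tilde W^{k-1}\|^2\le\tfrac12$), together with the independence of the $Z^{(j)}$, give $\E[T_1]\lesssim\tfrac{\tau L^2}{c^2}\sum_{j=0}^{k-1}\E\|\Psi_{t+j}\|_F^2+\tfrac{\tau}{c^2}\,n\sigma^2$, with \emph{no} $e_{t+j}$ appearing. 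For $T_2$ I would not regroup: using $\|\tilde W\|\le 1$, Cauchy--Schwarz over $\le k$ terms (Lemma~\ref{remark:norm_of_sum}), and the martingale property for the noise, one obtains $\E[T_2]\lesssim\tau\sum_{j=0}^{k}\E\|\nabla f(X^{(t+j)})\|_F^2+\tau n\sigma^2$, and then $\|\nabla f(X^{(t+j)})\|_F^2\le 2L^2\|\Psi_{t+j}\|_F^2+2n\,e_{t+j}$. This is where the $e$-term is born, carrying a factor $\propto\tau n$ but no $c^{-2}$ — exactly the shape in the statement.

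The summand that needs genuine care, and which I expect to be the main obstacle, is $j=k$ in $T_2$ (factor $\tilde W^{0}=I$): it drags in the index $t+k$, which must appear only on the left-hand side, and — unlike the convex case — there is no fixed reference point to subtract. Here I would keep the increment $G^{(t+k)}-G^{(t+k-1)}$ together and route through the average-sequence update (Lemma~\ref{lem:average}), i.e.\ $\bar\xx^{(t+k)}-\bar\xx^{(t+k-1)}=-\tfrac{\gamma}{n}\sum_{i}\nabla F_i(\xx_i^{(t+k-1)},\xi_i^{(t+k-1)})$, so that in expectation $\|\bar X^{(t+k)}-\bar X^{(t+k-1)}\|_F^2\lesssim\gamma^2\big(L^2\|\Psi_{t+k-1}\|_F^2+n\,e_{t+k-1}+\sigma^2\big)$; the $\|\Psi_{t+k}\|_F^2$ this puts on the right is moved to the left and absorbed by dividing by $1-O(\tfrac1\tau)$, the $e_{t+k}$ reduces to $e_{t+k-1}$, and the residual noise of this boundary step — estimated through the coarse bound $\|i\tilde W^i\|^2\le 4p^{-2}=\tilde O(\tau^2)$ of Lemma~\ref{lem:norm_estimate2} — is precisely the extra $\tau^2\sigma^2$ that is present in the non-convex recursion but not the convex one. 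Finally I would add $T_1+T_2$, use a stepsize bound of the form $\gamma\lesssim c/(L\tau)$ to absorb $\tfrac{\tau L^2}{c^2}\gamma^2\sum_j\|\Psi_{t+j}\|_F^2$ into the $\tfrac{1}{128\tau}\sum_j\|\Psi_{t+j}\|_F^2$ slot, and read off the absolute constants $C_1,C_2$. Throughout, the delicate point is not any single estimate but making sure that the $c^{-2}$ generated by the regrouping in $T_1$ stays attached only to the $\|\Psi\|_F^2$ and $\sigma^2$ terms and never reaches the $e_{t+j}$ coefficient.
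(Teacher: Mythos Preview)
Your plan rests on two incompatible readings of $\nabla f(\bar X^{(t+j)})$. In the paper's notation $\nabla f(\bar X)=[\nabla f_1(\bar\xx),\dots,\nabla f_n(\bar\xx)]$, whose columns are the gradients of the \emph{different} local functions at the common point $\bar\xx$; this is \emph{not} a consensus matrix, so $\nabla f(\bar X)\,\tilde W^{i}\neq 0$ and the ``subtract for free'' step in $T_1$ fails. The object that \emph{is} consensus is $\nabla f(\bar\xx)\,\1^\top$, but subtracting that leaves a residual that is not controlled by $L\|\Psi\|_F$: column-wise, $\sum_i\|\nabla f_i(\xx_i)-\nabla f(\bar\xx)\|^2\le 2L^2\|\Psi\|_F^2+2\sum_i\|\nabla f_i(\bar\xx)-\nabla f(\bar\xx)\|^2$, and the last sum is exactly the data-heterogeneity term that carries no bound under the stated assumptions --- precisely what GT is supposed to be agnostic to. The same confusion invalidates your $T_2$ estimate $\|\nabla f(X^{(t+j)})\|_F^2\le 2L^2\|\Psi_{t+j}\|_F^2+2n\,e_{t+j}$: the correct right-hand side is $2L^2\|\Psi\|_F^2+2\sum_i\|\nabla f_i(\bar\xx)\|^2$, and $\sum_i\|\nabla f_i(\bar\xx)\|^2=n\,e+\sum_i\|\nabla f_i(\bar\xx)-\nabla f(\bar\xx)\|^2$, not $n\,e$.

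The paper's remedy is never to isolate a single $\nabla f(\bar X^{(\cdot)})$ but to keep the \emph{successive difference} $\nabla f(\bar X^{(t+j)})-\nabla f(\bar X^{(t+j-1)})$ intact. Column-wise this is $\nabla f_i(\bar\xx^{(t+j)})-\nabla f_i(\bar\xx^{(t+j-1)})$, bounded by $L\|\bar\xx^{(t+j)}-\bar\xx^{(t+j-1)}\|$ via smoothness of each $f_i$; then Lemma~\ref{lem:average} controls the step by $\gamma\|\tfrac{1}{n}\sum_i\nabla f_i(\xx_i^{(t+j-1)})\|$, and the averaging over $i$ collapses $\tfrac{1}{n}\sum_i\nabla f_i(\bar\xx)$ to $\nabla f(\bar\xx)$, so heterogeneity never enters. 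This route is used in \emph{both} $T_1$ and $T_2$ and is where the $e_{t+j}$ terms are actually born. In $T_1$ it forces a separate contribution $T_3=\bigl\|\sum_j[\nabla f(\bar X^{(t+j)})-\nabla f(\bar X^{(t+j-1)})](k-j)\tilde W^{k-j}\bigr\|_F^2$ that cannot be Abel-summed (the summands do not telescope), so each factor must be estimated by the coarse bound $\|(k-j)\tilde W^{k-j}\|^2\le 4/p^2$ of Lemma~\ref{lem:norm_estimate2}. This --- not the $j=k$ boundary of $T_2$ --- is the true origin of the extra $\tau^2\sigma^2$ term.
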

\begin{proof}
	The proof starts exactly the same as in the convex cases, Lemma~\ref{lem-aux}. The difference comes when estimating terms $T_1$ and $T_2$. 
	\paragraph{The second term $T_2$.} After splitting the stochastic noise, 
	\begin{align*}
	\E[T_2] &\leq 3 \E \norm{\sum_{j = 1}^{k } \left(\nabla f(X^{(t +j)}) - \nabla f(X^{(t + j - 1)})\right)\tilde W^{\tau - j}}_F^2 + 6 k n \sigma^2 \\
	& \stackrel{\eqref{eq:norm_of_sum}}{\leq} 3 k \sum_{j = 1}^k \E \norm{\nabla f(X^{(t +j)}) - \nabla f(X^{(t + j - 1)})}_F^2 + 6 k n \sigma^2 
	\end{align*}
	Estimating separately
	\begin{align*}
	\E \norm{\nabla f(X^{(t +j)}) - \nabla f(X^{(t + j - 1)})}_F^2 &\stackrel{\eqref{eq:norm_of_sum}}{\leq} 3 \E \norm{\nabla f(X^{(t +j)}) - \nabla f(\bar X^{(t +j)}) }_F^2  + 3 \norm{\nabla f(\bar X^{(t + j - 1)})- \nabla f(X^{(t + j - 1)}) }^2_F \\
	& \qquad + 3 \norm{\nabla f(\bar X^{(t +j)})- \nabla f(\bar X^{(t + j - 1)})}_F^2 \\ 
	& \stackrel{\eqref{eq:smooth_nc}}{\leq}  3 L^2 \E \norm{X^{(t +j)} - \bar X^{(t +j)}}_F^2 + 3 L^2 \norm{\bar X^{(t + j - 1)}- X^{(t + j - 1)}}^2_F \\
	& \qquad + 3 L^2 \norm{\bar X^{(t +j)} - \bar X^{(t + j - 1)}}_F^2
	\end{align*}
	And for the last term we estimate
	\begin{align*}
	\E \norm{\bar \xx^{(t +j)} - \bar \xx^{(t + j - 1)}}_2^2 &\leq \gamma^2 \norm{\frac{1}{n} \sum_{i = 1}^n \nabla f_i(\xx_i^{(t + j - 1)})}^2_2 + \gamma^2 \frac{\sigma^2}{n} \\
	& \leq  2 \gamma^2 \norm{\frac{1}{n} \sum_{i = 1}^n \nabla f_i(\xx_i^{(t + j - 1)}) - \frac{1}{n} \sum_{i = 1}^n \nabla f_i(\bar \xx^{(t + j - 1)}) }^2_2 + 2 \gamma^2 \norm{\nabla f(\bar \xx^{(t + j - 1)})}^2 + \gamma^2 \frac{\sigma^2}{n} \\
	& \leq 2 \gamma^2 L^2 \frac{1}{n} \sum_{i = 1}^n \norm{ \xx_i^{(t + j - 1)} - \bar \xx^{(t + j - 1)}}^2  + 2 \gamma^2\norm{\nabla f(\bar \xx^{(t + j - 1)})}^2   + \gamma^2 \frac{\sigma^2}{n}
	\end{align*}
	Thus, using that $\gamma < \frac{1}{24 L \tau}$, $k \leq 2 \tau$
	\begin{align*}
	\E[T_2] \leq \tau \sum_{j = 0}^{k - 1} n \E \norm{\nabla f(\bar \xx^{(t + j)})}^2   + 21 L^2 \tau\sum_{j = 0}^{k - 1} \E \norm{X^{(t +j)}  - \bar X^{(t +j)} }_F^2+ 7 \tau n \sigma^2\,.
	\end{align*}
	\paragraph{Term $T_1$. } Similarly, after separating the stochastic noise with $Z^{(t)}=G^{(t)}-\nabla f(X^{(t)})$, 
	\begin{align*}
	T_1 \stackrel{\eqref{eq:norm_of_sum_of_two}}{\leq} 2 \norm{\sum_{j = 1}^{k} \left[\nabla f ( X^{(t + j)}) - \nabla f (X^{(t + j - 1)})\right] (k - j) \tilde W^{k - j} }_F^2 + 2\norm{\sum_{j = 1}^{k} \left(Z^{(t + j)} - Z^{(t + j - 1)}\right) (k - j) \tilde W^{k - j} }_F^2. 
	\end{align*}
	We add and subtract $\nabla f(\bar X^{t + j}), \nabla f(\bar X^{t + j - 1})$ in the first term and denote $D^{(j)} = \nabla f(X^{(j)} ) -\nabla f(\bar X^{(j)})$. 
	\begin{align*}
	T_1 & \leq  4 \norm{\sum_{j = 1}^{k} \left(D^{(t + j)} - D^{(t + j - 1)}\right)(k - j) \tilde W^{k - j} }_F^2  + 4 \norm{\sum_{j = 1}^{k} \left[\nabla f(\bar X^{t + j}) - \nabla f(\bar X^{t + j - 1})\right](k - j) \tilde W^{k - j}}_F^2\\
	& \qquad + 2\norm{\sum_{j = 1}^{k} \left(Z^{(t + j)} - Z^{(t + j - 1)}\right) (k - j) \tilde W^{k - j} }_F^2. 
	\end{align*}
	Terms with $D$ and $Z$ we estimate exactly the same as in the convex case, thus getting 
	\begin{align*}
	\E [T_1] &\stackrel{\eqref{eq:noise_opt_nc}}{\leq} \frac{64 k }{c^2} \sum_{j = 0}^{k - 1} \norm{D^{(t + j)}}_F^2  + \frac{32k n \sigma^2}{c^2} + 4 \underbrace{ \norm{\sum_{j = 1}^{k} \left[\nabla f(\bar X^{t + j}) - \nabla f(\bar X^{t + j - 1})\right](k - j) \tilde W^{k - j}}_F^2}_{T_3}
	\end{align*}
	It is only left to estimate the last term. For that we use Lemma~\ref{lem:norm_estimate2}, and $\frac{1}{p} \leq \tau$ due to our choice of $\tau$, 
	\begin{align*}
	T_3 &\stackrel{\eqref{eq:norm_of_sum}}{\leq }  k \sum_{j = 1}^{k}\norm{ \left[\nabla f(\bar X^{t + j}) - \nabla f(\bar X^{t + j - 1})\right](k - j) \tilde W^{k - j}}_F^2 \stackrel{\text{L}.~\ref{lem:norm_estimate2}}{\leq } 4 k \tau^2 \sum_{j = 1}^{k}\norm{ \nabla f(\bar X^{t + j}) - \nabla f(\bar X^{t + j - 1})}_F^2 \\
	& \leq 4 k \tau^2 \gamma^2  \sum_{j = 1}^k \left[2  L^2 \norm{ X^{(t + j - 1)} - \bar X^{(t + j - 1)}}^2_F  + 2  n \norm{\nabla f(\bar \xx^{(t + j - 1)})}^2   + \sigma^2 \right]
	\end{align*}
	Where the last inequality was obtained while estimating Term $T_2$. Using that $k \leq 2 \tau$, $\gamma \leq \frac{1}{24 L \tau}$ and that  $\norm{D^{(t + j)}}_F^2 \leq L^2 \norm{X^{(t + j)} - \bar X^{(t + j)}  }_F^2$ by smoothness
	\begin{align*}
	\E [T_1] &\stackrel{\eqref{eq:noise_opt_nc}}{\leq} \frac{129 \tau }{c^2} L^2 \sum_{j = 0}^{k - 1} \norm{ X^{(t + j)} - \bar X^{(t + j )}}^2_F  + \tau \sum_{j = 0}^{k - 1} n \norm{\nabla f(\bar \xx^{(t + j)})}^2 + \left(\frac{64 \tau n }{c^2}   + \tau^2\right)\sigma^2 
	\end{align*}
	Summing $T_1$ and $T_2$ together, and using that $\gamma \leq \frac{c}{310 \tau L }$
	\begin{align*}
	\E \norm{\Psi_{t + k} }_F^2 & \leq \frac{3}{4} \norm{\Psi_{t}}_F^2 + \frac{1}{128 \tau} \sum_{j = 0}^{k - 1} \E  \norm{\Psi_{t + j}}_F^2 +  \gamma^2  10 \tau n\sum_{j = 0}^{k - 1 } \norm{\nabla f(\bar \xx^{(t + j)})}^2 + 5 \gamma^2  \left(\frac{64 \tau n }{c^2}   + \tau^2\right) \sigma^2
	\end{align*}
\end{proof}
Next, we unroll this recursion with Lemma~\ref{lem:unroll_rec}. 

For $\gamma < \frac{c}{\sqrt{7 B_1} L \tau} \leq \frac{1}{2L\tau}$, and with some positive absolute constants $B_1, B_2 > 0$ it holds,
\begin{align}\label{eq:unrolled_rec_nc}
\E \norm{\Psi_{t}}_F^2  &\leq \left(1 - \frac{1}{64\tau}\right)^{ t} A_0  +   B_1 \tau \gamma^2  \sum_{j = 0}^{t - 1} \left(1 - \frac{1}{64\tau}\right)^{t - j} n e_j + B_2 \gamma^2   \left(\frac{\tau n }{c^2}   + \tau^2\right) \sigma^2 
\end{align}
where $e_j = \norm{\nabla f(\bar \xx^{(j)})}^2$, $A_0 =16 \|\Delta X^{(0)}\|_F^2 + \frac{24 \gamma^2}{p^2} \|\Delta Y^{(0)}\|^2_F $.

The rest of proof consists of combining \eqref{eq:unrolled_rec_nc} with the descent lemma for non-convex case \eqref{eq:main_recursion_nc} in similar fashion as in Lemmas~\ref{lem:cons+descent}, \ref{lem:cons+descent2}; and further using Lemma~\ref{lem:rate_weakly_convex} to obtain the final rate. 

\section{Experimental Setup and Additional Plots} \label{sec:exp_details}

We illustrate the dependence of the convergence rate on the parameters $c$ and $p$.

In these experiments, we
vary $p$ and $c$ (by changing the mixing matrix) and measure the value of $f(\bar \xx^{(t)} ) - f^\star$ that GT reaches after a large number of steps $t$,
when using a constant stepsize $\gamma$
(chosen small enough so that none of the runs diverges).
According to our theoretical results, GT converges to the level $\cO\left( \frac{\gamma \sigma^2}{n}  + \frac{\gamma^2 \sigma^2}{pc^2}\right)$ in a linear number of steps (to reach higher accuracy, smaller stepsizes must be used).
Thus, for $n$ large enough, this term is dominated by $\cO\left(\frac{\gamma^2 \sigma^2}{pc^2}\right)$, which we aim to measure.
In all experiments we ensure that the first term is at least by order of magnitude smaller than the second by comparing the noise level with GT on a fully-connected topology.

\subsection{Problem Instances}
We used $n=300$, $d=100$.

\textbf{Setup A (Gaussian Noise).} 
We consider quadratic functions defined as $f_i(\xx) = \norm{\xx}^2$, and $\xx^{(0)}$ is randomly initialized from a normal distribution $\cN(0, 1)$.
We add artificially stochastic noise to gradients as $\nabla F_i(\xx, \xi) = \nabla f_i(\xx) + \xi$, where $\xi \sim \cN(0, \frac{\sigma^2}{d} I)$. 

\textbf{Setup B (Structured Noise).} We consider quadratic functions defined as $f_i(\xx) = \norm{\xx}^2$, and $\xx^{(0)}$ is randomly initialized from a normal distribution $\cN(0, 1)$.
We add artificially stochastic noise to gradients as $\nabla F(X, \xi) = \nabla f(X) + \diag(\xi) V$, where $\xi \sim \cN(0, \frac{\sigma^2}{d} I)$ is a $d$-dimensional Gaussian noise vector, $\diag(\xi)$ a matrix with $\xi$ on the diagonal, and 
$V \in \R^{d \times n}$ is a matrix with half of the rows equal to $\vv \in \R^n$, and half of the rows equal to $\uu \in \R^n$, where $\vv,\uu$ are eigenvectors of the mixing matrix, $W \vv = \lambda_n(W) \vv$, i.e.\ corresponding to the smallest eigenvalue of $W$, and  $W \uu = \lambda_2(W) \uu$, i.e.\ corresponding to the second largest eigenvalue of $W$.

This is motivated by the observations in Lemma~\ref{lem:norm_estimate}, where we noted that components in the eigenspace corresponding to the smallest eigenvalue of $W$ get amplified the most.

\subsection{Graph Topologies and Mixing Matrices}
\textbf{Interpolated Ring (between uniform weights and interpolate with a fully-connected topology).}
We consider the ring topology $W_{\rm ring}$ on $n$ nodes, where each node $i$ has self weight $w_{ii}=\frac{1}{3}$ and $w_{i,1+(i \mod n)} = w_{i,(i-2 \mod n) + 1} =\frac{1}{3}$ for its neighbors.
We interpolate this uniform weight ring topology with a fully-connected topology, $W_{\rm complete} = \frac{1}{n}\1 \1^\top$, 
that is, $W_\alpha := \alpha W_{\rm ring} + (1-\alpha) W_{\rm complete}$. %
The eigenvalues of $W_{\rm ring}$ are $\lambda(W_{\rm ring}) \in \left[-\frac{1}{3},1\right]$, and $\lambda(W_{\rm complete}) \in [0,1]$, and therefore $c$ of $W_\alpha$ is also a constant.

\textbf{Ring with smaller self weight.}
We consider the ring topology $W_w$ on $n$ nodes, where each node $i$ has self weight $w_{ii}=w \leq \frac{1}{3}$ and $w_{i,1+(i \mod n)} = w_{i,1+(i-2 \mod n)} =\frac{1-w}{2}$ for its neighbors. 
The eigenvalues of $W_w$ are $\lambda(W_w) \in \left[2w-1,1\right]$, and therefore $c$ can become small by choosing $w$ (note that the $\lambda_n(W_w)$, while decreasing for smaller $w$, is not equal to $2w-1$ in general, expect when $w=\frac{1}{3}$). We measure the exact value $\lambda_n(W_w)$ when reporting $c$ below.

\subsection{Additional Plots for Setup A}
\begin{figure}[H]
	\centering
	\vspace{-1em}
	\subfigure[\small constant $c$.]{
		\includegraphics[width=.315\textwidth,]{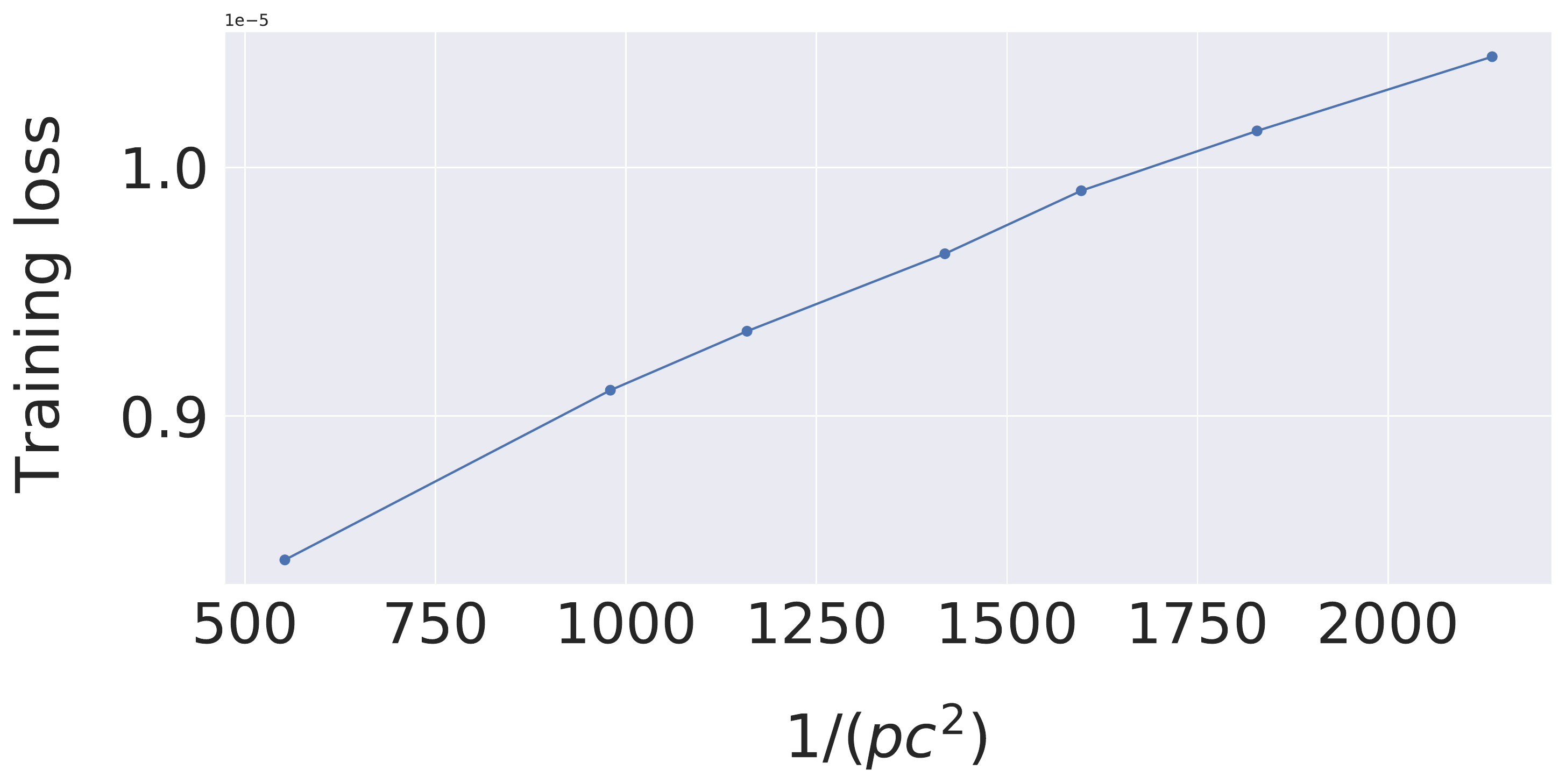}
		\label{fig:quadractics2_constant_c_inverse_c2p}
	}
	\hfill
	\subfigure[\small constant $c$.]{
		\includegraphics[width=.315\textwidth,]{figures/camera_ready/quadractics2_constant_c_inverse_p.pdf}
		\label{fig:quadractics2_constant_c_inverse_p}
	}
	\hfill
	\subfigure[\small constant $c$.]{
		\includegraphics[width=.315\textwidth,]{figures/camera_ready/quadractics2_constant_c_inverse_p2.pdf}
		\label{fig:quadractics2_constant_c_inverse_p2}
	}
	\hfill
	\subfigure[\small constant $c$.]{
		\includegraphics[width=.315\textwidth,]{figures/camera_ready/quadractics2_constant_c_inverse_p3.pdf}
		\label{fig:quadractics2_constant_c_inverse_p3}
	}
	\subfigure[\small constant $p$.]{
		\includegraphics[width=.315\textwidth,]{figures/camera_ready/quadractics2_constant_p_inverse_c2p.pdf}
		\label{fig:quadractics2_constant_p_inverse_c2p}
	}
	\subfigure[\small constant $p$.]{
		\includegraphics[width=.315\textwidth,]{figures/camera_ready/quadractics2_constant_p_inverse_cp.pdf}
		\label{fig:quadractics2_constant_p_inverse_cp}
	}
	\vspace{-1em}
	\caption{\small
	Impact of $c$ and $p$ on the convergence with the Gaussian stochastic noise $\sigma^2 = 1$. 
	The first four subfigures illustrate the impact of $p$ on convergence when $c$ is kept constant;
	showing a linear scaling of the loss compared to~$\frac{1}{p}$. 
	The last subfigure varies $c$ in the graph while keeping $p$ as a constant, and we see a linear scaling compared to $\frac{1}{c^2}$.	
	}
	\label{fig:additional_plot_setup_A}
\end{figure}

\subsection{Additional Plots for Setup B}
\begin{figure}[H]
	\centering
	\vspace{-1em}
	\subfigure[\small constant $c$.]{
		\includegraphics[width=.315\textwidth,]{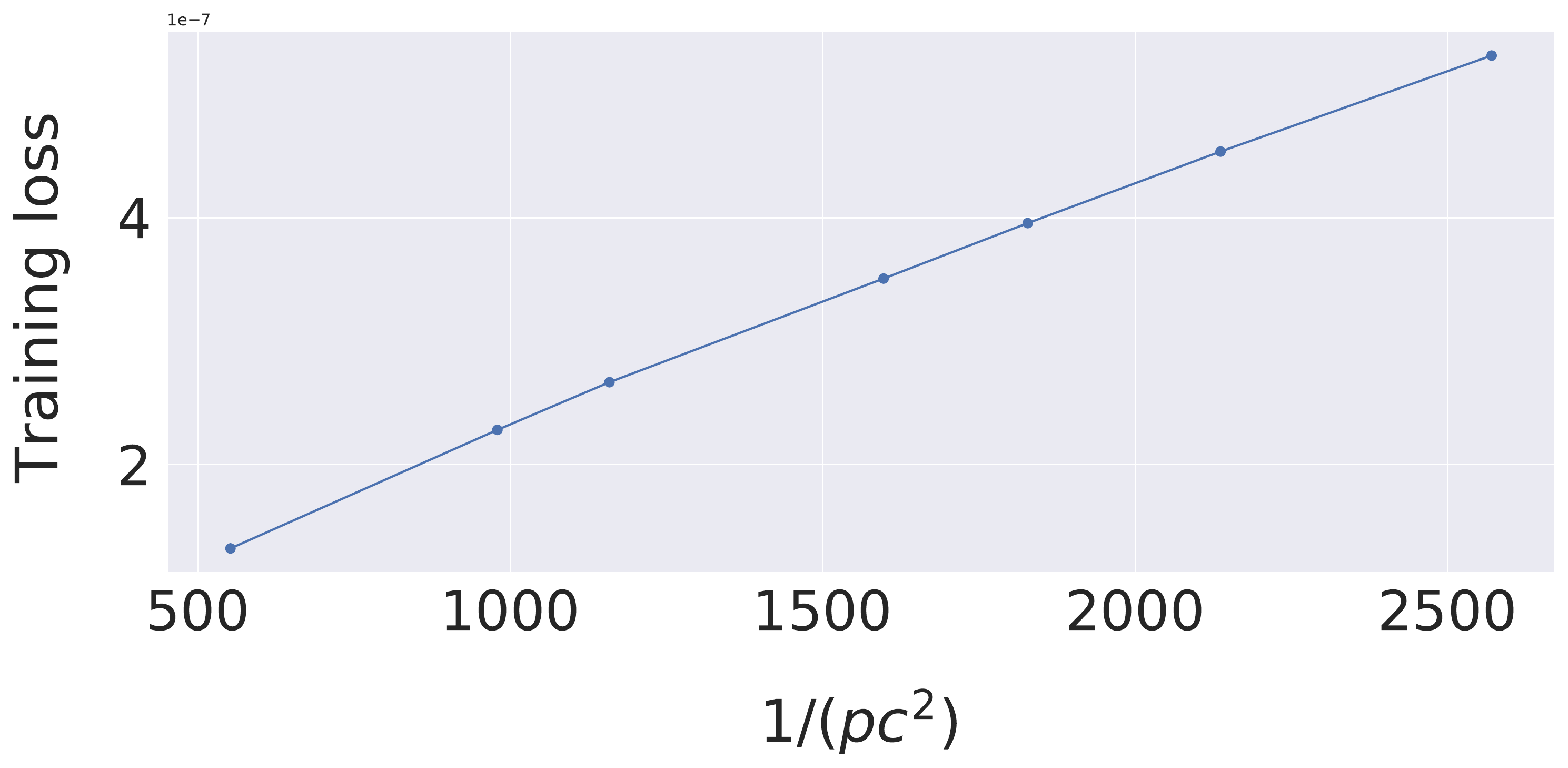}
		\label{fig:quadractics3_constant_c_inverse_c2p}
	}
	\hfill
	\subfigure[\small constant $c$.]{
		\includegraphics[width=.315\textwidth,]{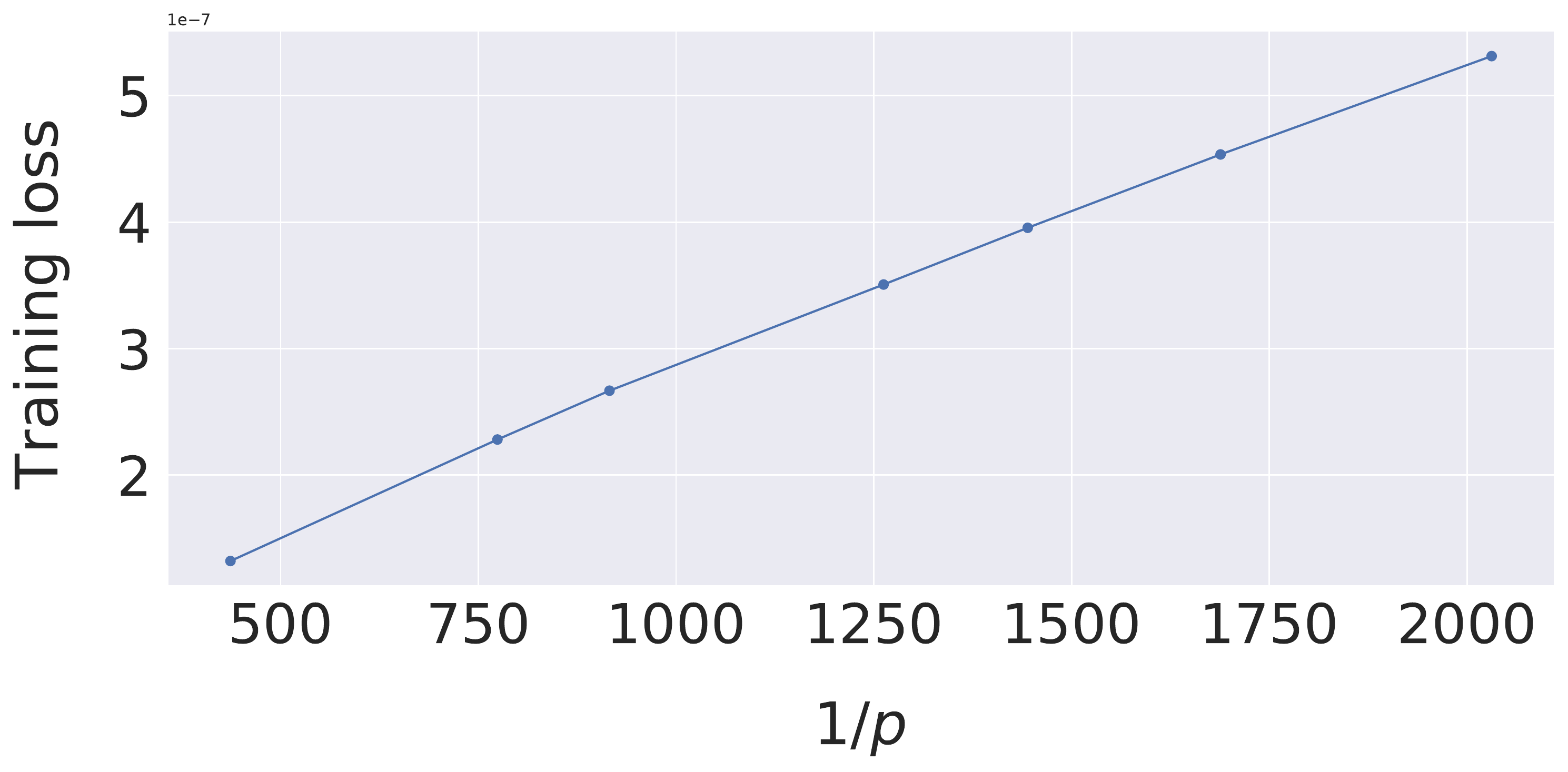}
		\label{fig:quadractics3_constant_c_inverse_p}
	}
	\hfill
	\subfigure[\small constant $c$.]{
		\includegraphics[width=.315\textwidth,]{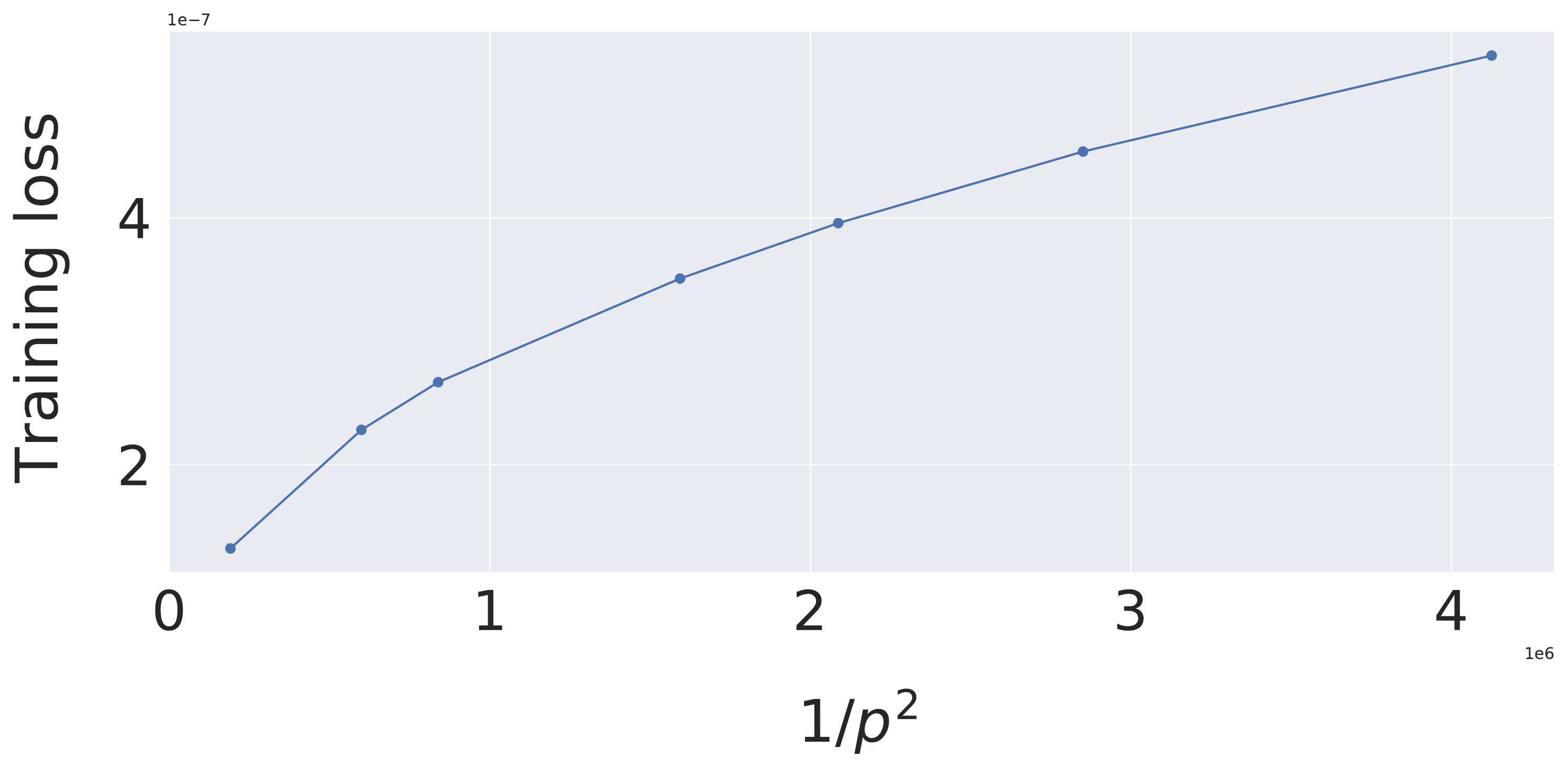}
		\label{fig:quadractics3_constant_c_inverse_p2}
	}
	\hfill
	\subfigure[\small constant $c$.]{
		\includegraphics[width=.315\textwidth,]{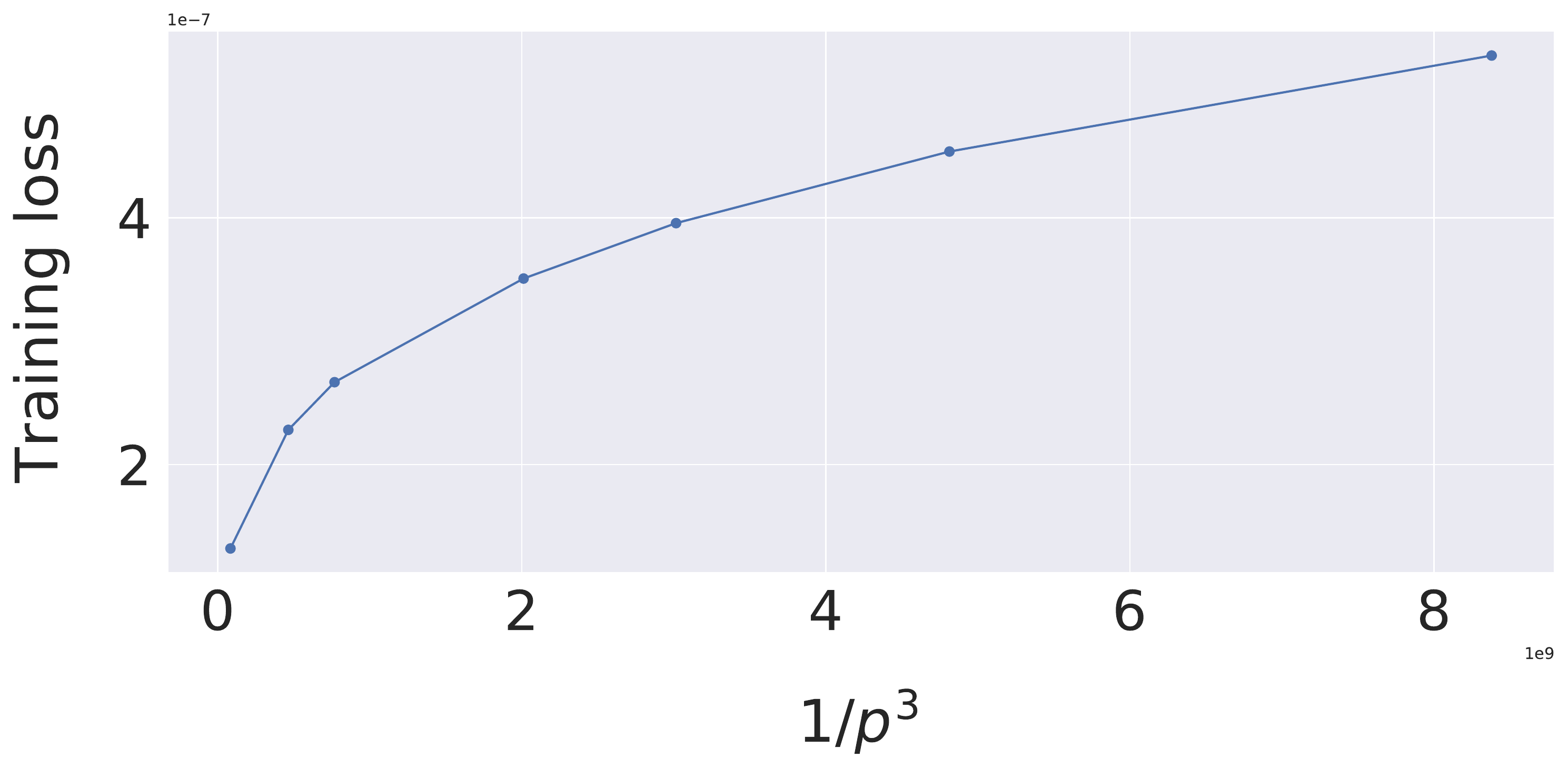}
		\label{fig:quadractics3_constant_c_inverse_p3}
	}
	\subfigure[\small constant $p$.]{
		\includegraphics[width=.315\textwidth,]{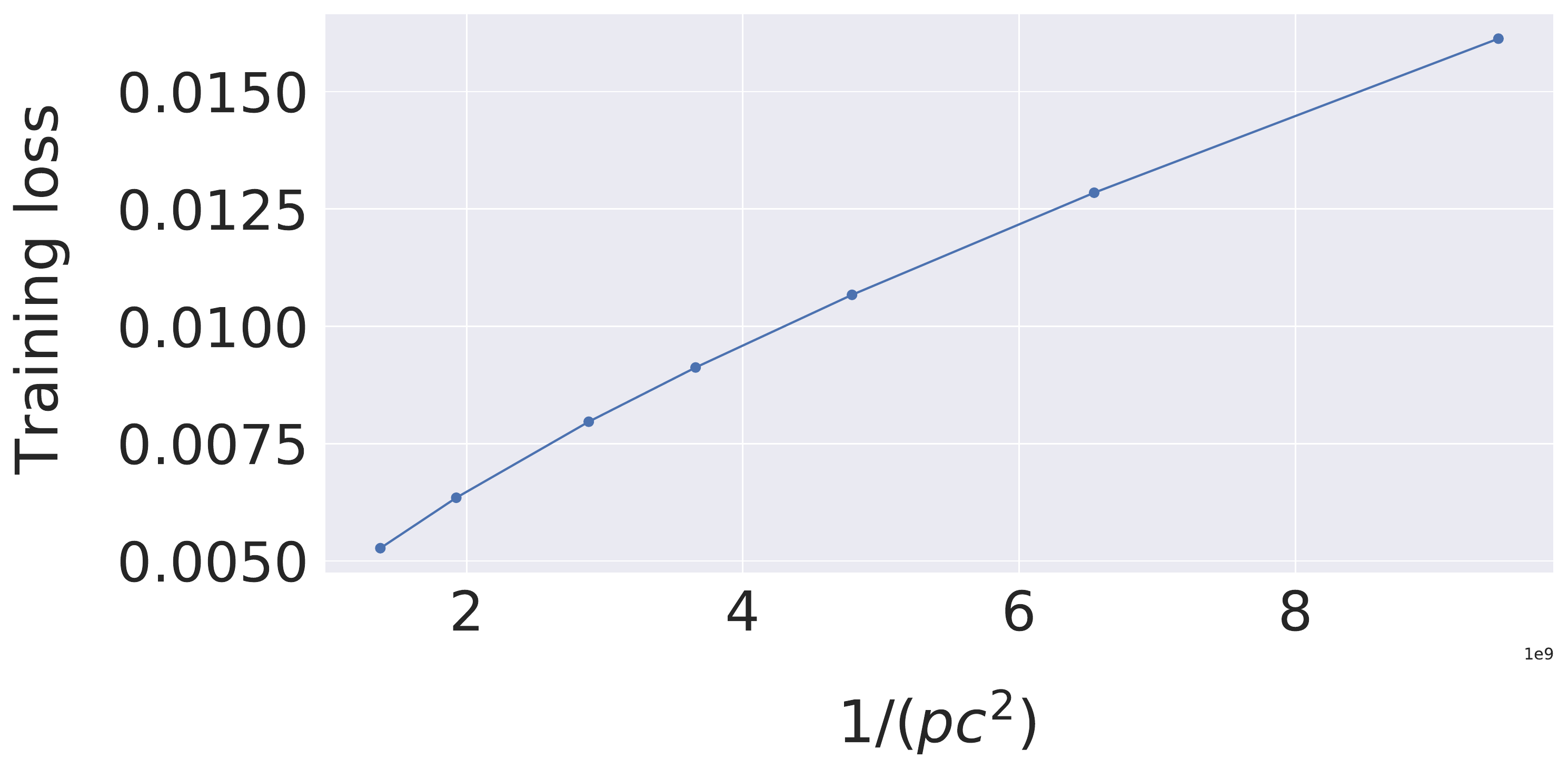}
		\label{fig:quadractics3_constant_p_inverse_c2p}
	}
	\subfigure[\small constant $p$.]{
		\includegraphics[width=.315\textwidth,]{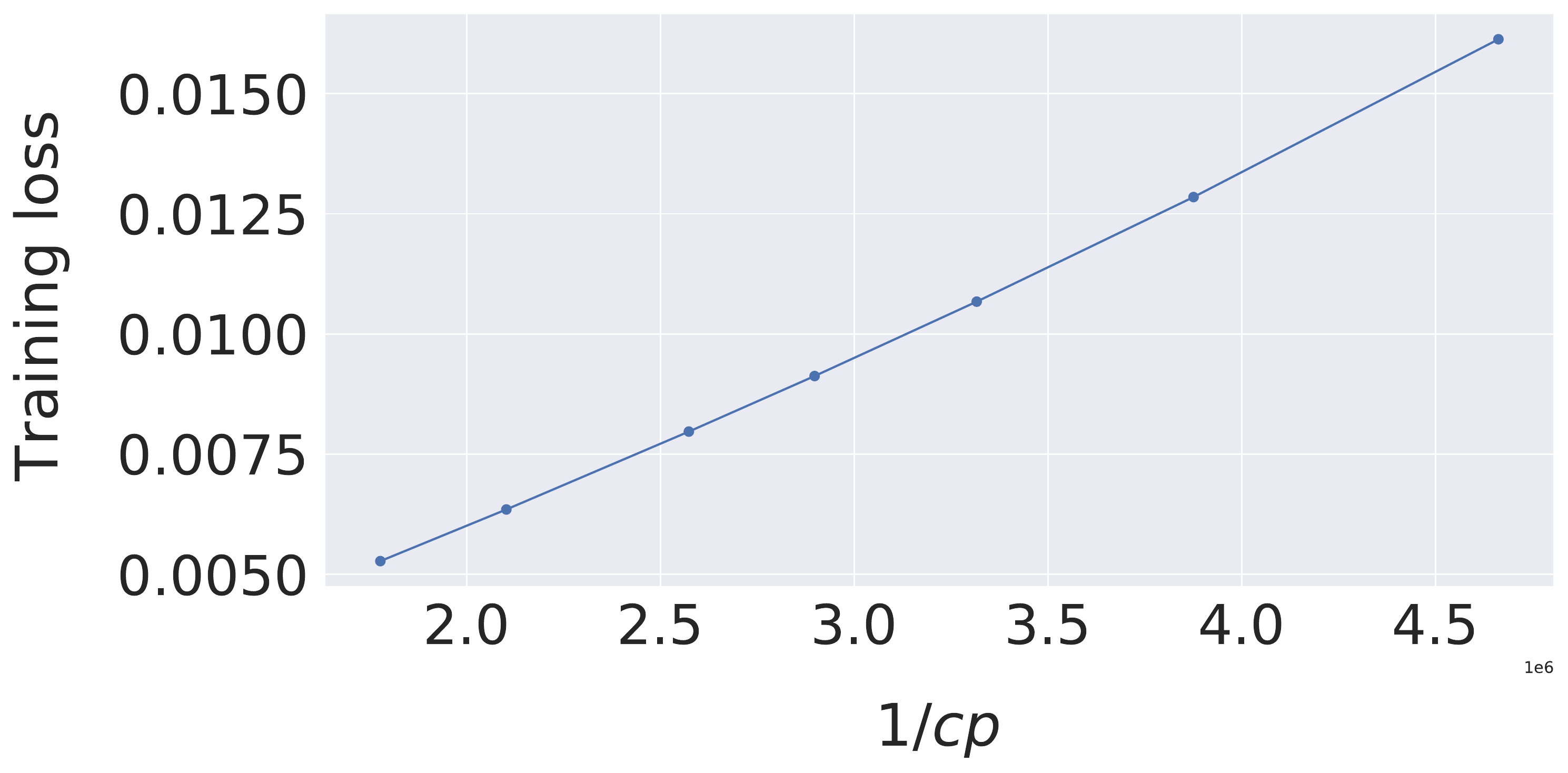}
		\label{fig:quadractics3_constant_p_inverse_cp}
	}
	\vspace{-1em}
	\caption{\small 
	Impact of $c$ and $p$ on convergence with the structured stochastic noise $\sigma^2 = 1$. 
	The first four subfigures illustrate the impact of $p$ on convergence when $c$ is kept constant;
	showing a linear scaling of the loss compared to~$\frac{1}{p}$. %
	The last subfigure varies $c$ in the graph while keeping $p$ as a constant, and we can see a linear scaling compared to $\frac{1}{c^2}$.
	}
	\label{fig:additional_plot_setup_B}
\end{figure}

In Figures~\ref{fig:additional_plot_setup_A} and~\ref{fig:additional_plot_setup_B}
we study the impact of $c$ and $p$ on the convergence. These findings support the $\cO\left( \frac{\gamma^2 \sigma^2}{pc^2}\right)$ scaling predicted by theory---however, cannot replace a formal proof.
We leave this for future work.

\end{document}